\newcommand{\ftj}{\ensuremath{\mathsf{T}}} 
\newcommand{\Z}{\ensuremath{\mathbb{Z}}}
\newcommand{\tft}[1]{\ensuremath{Z[{#1}]}}
\newcommand{\vft}[1]{\ensuremath{V[{#1}]}}
\newcommand{\simp}{\ensuremath{\Delta}}
\newcommand{\man}{\ensuremath{X}}
\newcommand{\nan}{\ensuremath{Y}}
\newcommand{\N}{\ensuremath{\mathbb{N}}}
\newcommand{\U}{\ensuremath{\mathsf{U}}}
\newcommand{\triang}{\ensuremath{\mathcal{K}}}
\newcommand{\R}{\ensuremath{\mathbb{R}}}
\newcommand{\C}{\ensuremath{\mathbb{C}}}
\newcommand{\CP}{\ensuremath{\mathbb{CP}}}
\newcommand{\cellu}{\ensuremath{C}}
\newcommand{\hilb}{\ensuremath{\mathcal{H}}}
\newcommand{\tr}[1]{\ensuremath{\text{Tr}[{#1}]}}
\newcommand{\vect}[1]{\ensuremath{{\underline{\mathbf{#1}}}}}
\newcommand{\cat}{\ensuremath{\mathcal{C}}}
\newcommand{\config}{\ensuremath{S}}
\newcommand{\D}{\ensuremath{\mathcal{D}}}
\DeclareMathOperator{\sgn}{sgn}
\DeclareMathOperator{\cl}{cl}
\DeclareMathOperator{\str}{st}
\DeclareMathOperator{\link}{lk}
\DeclareMathOperator{\im}{im}
\DeclareMathOperator{\interior}{int}
\newcommand{\orient}[1]{\ensuremath{\sigma}(#1)}
\newcommand{\complex}{\ensuremath{K}}
\newcommand{\subc}{\ensuremath{J}}
\newcommand{\wo}{\ensuremath{\backslash}}
\newcommand{\comp}[1]{\ensuremath{\sigma_{#1}}}
\newcommand{\st}[1]{\ensuremath{{\str}_{#1}}}
\newcommand{\lk}[1]{\ensuremath{{\link}_{#1}}}
\newcommand{\A}{\ensuremath{\mathcal{A}}}
\newcommand{\G}{\ensuremath{\mathbb{G}}}
\newcommand{\openone}{\ensuremath{\mathds{1}}}
\newtheorem{lemma}{Lemma}
\newtheorem{proposition}{Proposition}
\newcommand{\mcK}{\mathcal{K}}
\title{Hamiltonian models for topological phases of matter in three spatial dimensions}
\author[a]{Dominic J. Williamson}
\author[b]{Zhenghan Wang}
\affil[a]{\small{\emph{Vienna Center for Quantum Technology, University of Vienna, Boltzmanngasse 5, 1090 Vienna, Austria}}}
\affil[b]{\small{\emph{Microsoft Research Station Q, CNSI Bldg. 2237, Santa Barbara, California 93106-6105, USA}}}
\affil[b]{\small{\emph{Department of Mathematics, University of California, Santa Barbara, California 93106-6105, USA}}}
\begin{document}

\maketitle

\begin{abstract}
We present commuting projector Hamiltonian realizations of a large class of (3+1)D topological models based on mathematical objects called unitary G-crossed braided fusion categories. This construction comes with a wealth of examples from the literature of symmetry-enriched topological phases. The spacetime counterparts to our Hamiltonians are unitary state sum topological quantum fields theories (TQFTs) that appear to capture all known constructions in the literature, including the Crane-Yetter-Walker-Wang and 2-Group gauge theory models. We also present Hamiltonian realizations of a state sum TQFT recently constructed by Kashaev whose relation to existing models was previously unknown. We argue that this TQFT is captured as  a special case of the Crane-Yetter-Walker-Wang model, with a premodular input category in some instances.
\end{abstract}

\tableofcontents

\section{Introduction}
\label{intro}

Theoretically, a topological phase of matter (TPM)\cite{wen1990ground,einarsson} without any symmetry protection is an equivalence class of local Hamiltonians~\cite{hastings2005quasiadiabatic,bravyi2010topological,chen2010local,kitaev2003fault} whose low energy physics is modeled by a stable\footnote{Stable can be understood as no spontaneous symmetry breaking. The technical definition is $\tft{S^3\times S^1}=1$ which implies local operators act trivially within the ground space.}
 unitary topological quantum field theory (TQFT)~\cite{witten1988topological,segal1988definition,atiyah1988topological,moore1988polynomial,moore1989classical,kitaev2006anyons,nayak2008non}.   Given a realistic Hamiltonian it is generally difficult to determine which TPM it is in.  A fruitful approach is to reverse engineer Hamiltonians from known TQFTs.  Famous examples include Kitaev's toric code~\cite{kitaev2003fault,dijkgraaf1990topological} and Levin-Wen models~\cite{levin2005string,turaev1992state}.
 
  Physical TQFTs are local and this is usually formulated by a set of axioms known as the gluing formulas~\cite{walker1991witten,freedman2008picture,baez1995higher}.  A more explicit form of locality is a state sum construction~\cite{turaev1992state}.  
 It is generally believed that state sum TQFTs are in 1-1 correspondence with fully extended TQFTs and both admit local commuting projector Hamiltonian realizations. 
 However this conjecture has not been rigorously proven in full generality largely due to an inability to drop restrictive symmetry assumptions on the input data and higher ``j-symbols''. 
 While it is difficult to algebraically formalize the fully extended TQFT framework without these assumptions some progress has been made for the state sum case in Ref.\cite{higherdto}.
An interesting example that clearly violates the symmetry assumptions is Kashaev's state sum $(3+1)$-TQFT~\cite{kashaev2014simple,kashaev2015realizations}, whose \mbox{j-symbols} strongly depend on the linear ordering of the vertices of a $4$-simplex.

 A basic principle in the study of state sum TQFTs is that the behavior of a local $(n+1)$-TQFT restricted to a disk is encoded by some higher $n$-category $\cat$~\cite{walker1991witten,freedman2008picture,baez1995higher}. 
Furthermore the partition functions and a local commuting projector Hamiltonian can be constructed from $\cat$ as illustrated by the Turaev-Viro and Levin-Wen models~\cite{levin2005string,turaev1992state} (generalized Kitaev models~\cite{kitaev2003fault,dijkgraaf1990topological}) in two spatial dimensions.  
The physical excitations in this general picture should be described by a special $(n+1)$-category that is constructed by taking a generalized Drinfeld double of the $n$-category $\cat$~\cite{muger2003subfactors,muger2003subfactors2}.
The major deficiencies of this general approach are the lack of a good algebraic definition for the appropriate weak n-categories and an absence of examples. This is in stark contrast to the well developed theory of fusion categories relevant to the $(2+1)$D case~\cite{bakalov2001lectures,etingof2015tensor}.

In this paper our focus will be on three dimensional topological phases of matter and the associated $(3+1)$-TQFTs. 
 Many concrete constructions in three spatial dimensions have been proposed~\cite{crane1993categorical,crane1997state,kashaev2014simple,kashaev2015realizations,carter1999structures,crane1994four,yetter1993tqft,mackaay2000finite,mackaay1999spherical,kapustin2014topological,kapustin2013higher}, but all seemingly fall short of capturing the full intricacies of $(3+1)$D topological order. 
Looking back to $(2+1)$D we note that state sum TQFTs constructed from unitary fusion categories (UFCs) are sufficiently general to achieve all non chiral topological orders~\cite{levin2005string,turaev1992state} (from the higher categorical point of view these should be understood as $2$-categories that contain a single object). By analogy, we will refer to the most general input to $(3+1)$D state sum TQFTs as unitary fusion $2$-categories (which should correspond to $3$-categories that contain a single object).

 As a step towards a fully general unitary fusion $2$-category construction, we focus on a case that is populated by a rich class of examples originating from the algebraic theory of defects in $(2+1)$D symmetry enriched topological (SET) orders. These defects are described by mathematical objects known as unitary G-crossed braided fusion categories (UGxBFCs).
We build on the work of Ref.\cite{shawnthesis} in which a large class of state sum $(3+1)$-TQFTs were rigorously constructed from UGxBFCs that (with a small extension) seem to include almost all known examples of unitary state sum $(3+1)$-TQFTs.  In this paper we propose their Hamiltonian realization, generalizing the construction of Ref.\cite{walker2012}.  
 We note that a related construction of $(3+1)$-TQFTs based on a proposed definition of spherical 2-category was given in Ref.\cite{mackaay1999spherical} but it was shown in Ref.\cite{shawnthesis} that this definition was too restrictive to even capture the unitary $G$-crossed braided fusion categories (UGxBFC).

A family of models possibly outside this class of TQFTs are Kashaev's examples~\cite{kashaev2014simple}.  We also establish a Hamiltonian formulation of these $(3+1)$-TQFTs. 
Moreover we pose, and provide evidence for, the following conjecture: Kashaev's TQFTs are equivalent to a subset of 
Crane-Yetter TQFTs with input categories that may be premodular, in particular  $sVec$ for some instances (i.e. Walker-Wang models with fermionic string types).

\section{Background}
\label{background}

In this section we recount the definition of state sum TQFTs, and their associated Hamiltonians and tensor network ground states, before moving on to discuss two recently constructed classes of state sums; the UGxBFC and Kashaev's $\Z_N$ models. To facilitate the discussion  of these topics we first set up some basic conventions and terminologies that are used throughout the manuscript. 
From this point forward we also make free use of notation and constructions from combinatoric topology, for those unfamiliar with this topic the necessary points are summarized in Appendix.\ref{comb topology}.

We will define topological partition functions $\tft{X}$ on space-time manifolds $X$ of dimension $n+1$, and Hilbert spaces $\vft{Y}$ on spatial $n$-manifolds $Y$ that are equipped with triangulations~\cite{atiyah1988topological}.  But triangulation can mean many different things.  For highly non-symmetric state sum TQFTs,   we usually  need a simplicial triangulation of the spatial manifold $Y$, but only a $\Delta$-complex triangulation of the space-time manifold $X$.

A manifold $M$ has a simplicial triangulation if $M$ is homeomorphic to the realization or underlying space $|\mathcal{K}|$ of an abstract simplicial complex $\mathcal{K}$~\cite{hatcher606algebraic} (there are in fact extra technicalities, see the appendix). 
  A simplicial complex $\mathcal{K}$ is a collection of subsets of a finite set $V$, called the vertices of $\mathcal{K}$, with the property that if a subset $\sigma$ of $V$ is in $\mathcal{K}$ then all subsets of $\sigma$ are also in $\mathcal{K}$.  The subset $\sigma$ is called an $i$-simplex if $\sigma$ has $(i+1)$ vertices.  A geometric realization $|\mathcal{K}|$ of $\mathcal{K}$ can be built by associating each vertex $v\in V$ to a basis vector of the Euclidean space ${\mathbb{R}}^{|V|}$.  An important technical point for our construction is that we assume $V$ has a linear order.  Therefore our simplicial triangulations always have induced branching structures by drawing an arrow on each edge from the lower numbered vertex to the higher one.

A manifold $M$ of dimension $m$ has a $\Delta$-complex triangulation if the manifold $M$ is constructed from a finite collection of $m$-simplices, which are glued together along the $(m-1)$-dimensional faces by simplicial maps.  In particular a $\Delta$-complex triangulation of a manifold can have a single vertex, for example the torus $T^2$ with two triangles.

\subsection{State Sum TQFTs}

An oriented unitary $(n+1)$-TQFT $(V,Z)$ is technically a symmetric monoidal functor from $(n+1)Cob$ to $Vec_\C$~\cite{atiyah1988topological,segal1988definition}. 
This is nothing more than a very compact way of axiomatizing topological invariance of a field theory and can be broken down into a series of more elementary statements.
The TQFT assigns a topologically invariant partition function $\tft{X^{n+1}}\in \mathbb{C}$ to each oriented closed $(n+1)$-manifold
 and a finite dimensional Hilbert space $\vft{Y^n}$ to each oriented closed $n$-manifold. 
It furthermore assigns a linear map $\tft{X^{n+1}}:\vft{Y_0^n}\rightarrow\vft{Y_1^n}$ to an oriented $(n+1)$-manifold with boundary $\partial X^{n+1}=\overline{Y}_0\sqcup Y_1$. Unpacking the definition leads to gluing formulas that ensure topological invariance amongst other technical axioms. Additionally for a unitary TQFT orientation reversal is mapped to complex conjugation. We do not delve any further into the general definition here, instead we move on to the more specific notion of a state sum TQFT.

The most general possible construction of state sum TQFTs is not yet rigorously formalized, due to technicalities in proving independence from the choice of branching structure, we will present an overview here. 
 A state sum TQFT comes with a finite set of input labels $\{L^{(i)}\}_{i=0}^{n}$.  For any triangulation $\mcK$  of an $n+1$-manifold $X$, we first choose a linear ordering of the vertices $\mcK^{(0)}$ (in fact a local ordering or branching structure will suffice).  
 Then a configuration on the triangulated manifold is specified by a set of maps $s^{(i)}: \mcK^{(i)} \rightarrow L^{(i)}$, which color each $i$-simplex in $ \mcK^{(i)}$ with a label from $L^{(i)}$.  
Moreover, to capture the most general solutions we allow each label $l$ to have an associated \lq\lq quantum dimension" $d_l\in\C^\times$.  
Finally  in a configuration $s$ each labeled n+1-simplex $\simp$ is evaluated to a \lq\lq j-symbol'' $\ftj^{\orient{\simp}}_{s(\simp)}$, where $\orient{\simp}=\pm$ is the orientation of the n+1-simplex.  The partition function is then 
$$ \tft{X}=\sum_{s} \prod_{\simp_{n+1}}  \ftj_{s(\simp_{n+1})}^{\orient{\simp_{n+1}}} \frac{ \prod\limits_{\simp_{n-1}}  d_{s(\simp_{n-1})}   \prod\limits_{\simp_{n-3}}  d_{s(\simp_{n-3})} }{\prod\limits_{\simp_n}  d_{s(\simp_{n})} \prod\limits_{\simp_{n-2}}  d_{s(\simp_{n-2})}} \cdots $$
note our quantum dimensions may be rescaled compared to the usual definition from a unitary fusion category (UFC).

This prescription extends to triangulated manifolds with boundary $\partial X=\overline{Y}_0\sqcup Y_1$~\cite{higherdto}
\begin{align*}
\tft{X}=\sum_{s} \prod_{\simp_{n+1}}  \ftj_{s(\simp_{n+1})}^{\orient{\simp_{n+1}}} \frac{ \prod\limits_{\simp_{n-1}}  d^{c(\simp_{n-1})}_{s(\simp_{n-1})}   \prod\limits_{\simp_{n-3}}  d^{c(\simp_{n-3})}_{s(\simp_{n-3})} }{\prod\limits_{\simp_n}  d^{c(\simp_{n})}_{s(\simp_{n})} \prod\limits_{\simp_{n-2}}  d^{c(\simp_{n-2})}_{s(\simp_{n-2})}} \cdots  
\bigotimes_{\simp_j \in Y_1}\ket{s(\simp_j)} \bigotimes_{\simp_i \in Y_0} \bra{s(\simp_i)}
\end{align*}
where $c(\simp_i)=\frac{1}{2}$ if $\simp_i\in \partial X$ and 1 if it is in the interior.
Hilbert spaces $\vft{Y}$ are then defined to be the support subspace of the linear operator $\tft{Y\times I}$ for a triangulation of $Y\times I$ that matches $Y$ on both boundaries. 
Topological invariance of the state sum is more precisely an invariance of $Z$ under piecewise linear (PL) homeomorphisms on the $(n+1)$-manifold. 
PL homeomorphic manifolds are related by a sequence of local bistellar flips of the triangulation, drawn from a finite set known as the Pachner moves~\cite{pachner1991pl}. This recasts topological invariance of the state sum into a finite set of equations that the j-symbol must satisfy~\cite{higherdto}, corresponding to retriangulations of the $(n+1)$-ball. 
This guarantees the partition function is independent of the choice of triangulation, moreover one must show the partition function is independent of the choice of vertex ordering.

\subsection{Hamiltonians, Tensor Network Ground States and PEPO Symmetries}

State sum TQFTs have a natural interpretation as tensor networks~\cite{turaev1992state,koenig2010quantum,higherdto} (see Ref.\cite{bridgeman2016hand} for an introduction to tensor networks). Copies of a single tensor are associated to each simplex and contracted according to how the simplices are glued together. Topological invariance of the discrete partition functions is ensured if and only if the simplex tensor satisfies tensor equations corresponding to the Pachner moves~\cite{pachner1991pl}.
There is a standard (although not widely known) construction to obtain a local real-space renormalization group (RSRG) fixed point commuting projector Hamiltonian that stabilizes the vector space of a state sum TQFT on a triangulated surface, generalizing that of Levin \& Wen~\cite{levin2005string}.
The Hamiltonian is given by
\begin{align}
\label{tftham}
H=\sum_v \openone - \tft{v'* {\cl\,\st{v}}}
\end{align}
where $v'$ is a copy of vertex $v$ after one imaginary time step. For a definition of the operations $\{*,\cl,\st{} \}$ see Appendix.\ref{comb topology}.
All the aforementioned properties of the Hamiltonian follow from the Pachner move invariance of the simplex tensor~\cite{higherdto}.
A projected entangled pair state (PEPS)~\cite{Fannes92,VerstraeteMurgCirac08,VerstraeteCirac06,GarciaVerstraeteWolfCirac08} representation of a ground state wave function (GSWF) on a triangulated manifold $(\nan,\triang)$ is given by $ \tft{v_0* \triang}$. 
Provided $\tft{\{v_0,v_1\} *Y}>0$ (which implies $\dim \vft{\nan}=\tft{\nan\times S^1}>0$) the resulting state is nonzero. 
This PEPS is a frustration free ground state as it satisfies $\tft{v'* {\cl\,\st{v}}} \tft{v_0* \triang} = \tft{v_0* \triang}$ following from the Pachner moves. 
This PEPS has a projected entangled pair operator (PEPO) symmetry which can be ``pulled through'' the virtual level~\cite{schuch2010peps,buerschaper2014twisted,burak2014characterizing,williamson2014matrix,williamson2016fermionic}, this is also ensured by the Pachner moves. 
The symmetry is indicative of topological order in the model via a bulk boundary correspondence given by taking the double of the algebra of tensor network operators on the boundary to construct the emergent physical excitations~\cite{bultinck2015anyons,lan2014topological,fidkowski2009string}.
The framework also yields a multiscale entanglement renormalization ansatz (MERA)~\cite{vidal2007entanglement} description of the ground space constructed by taking a triangulated identity bordism $(\nan\times I, \triang')$ such that the triangulation at the space manifold $(\nan,0)$ reduces to the physical lattice $\triang$ and we pick a minimal triangulation $\triang''$ of $(\nan,1)$ at the `top' of the MERA corresponding to the far IR scale. Then upon fixing a vector containing the fully coarse grained topological information $\ket{t}$ the MERA is given by 
\begin{align}
\tft{\triang'}\ket{t}
\end{align}
For physical lattice models it is important that the Hamiltonians output by our construction are Hermitian. This is ensured by a sufficient condition on the underlying tensor, namely that it is symmetric under simultaneous complex conjugation and orientation reversal. 
We note in the framework of Ref.\cite{higherdto} there is some technicality involved when dealing with weight functions associated to objects on lower dimensional strata of the triangulation.
\begin{figure}[th]
\center
{{
 \includegraphics[height=0.45\linewidth]{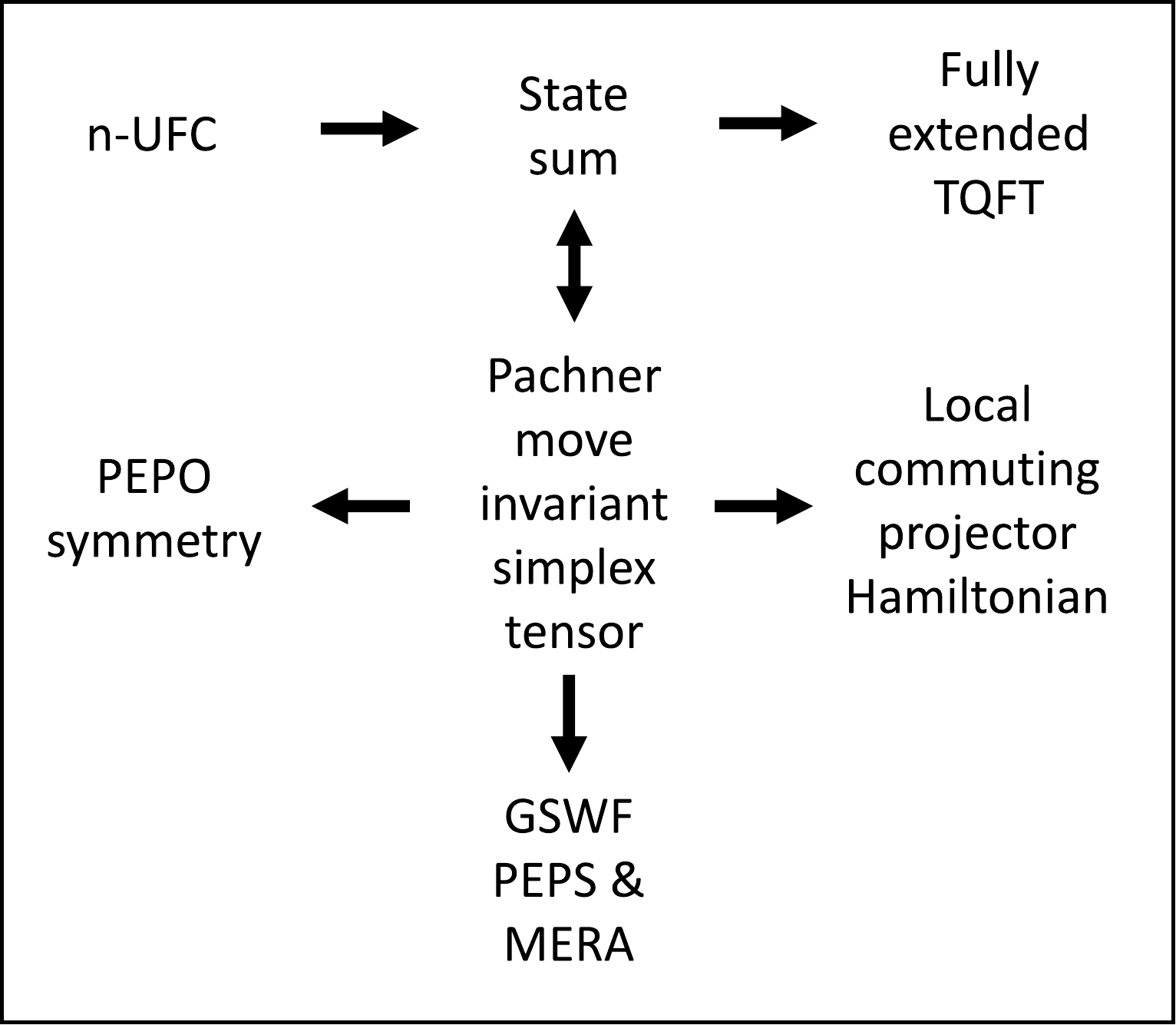} }} 
 \caption{Summary of the results in Ref.\cite{higherdto}.}
\label{pmistsum}
\end{figure}

\subsection{Review of the UGxBFC TQFT}

A new class of $(3+1)$-TQFTs was constructed in Ref.\cite{shawnthesis} from unitary $G$-crossed braided fusion categories (UGxBFC) $\mathcal{C}_G^{\times}$~\cite{turaev2000homotopy,kirillov2004g,etingof2009fusion,barkeshli2014symmetry}, where $G$ is a finite group. These UGxBFCs can be though of as special unitary fusion $2$-categories.  When $G$ is trivial, a UGxBFC $\mathcal{C}_G^{\times}$ reduces to a premodular category and the resulting TQFT is the Crane-Yetter TQFT~\cite{crane1997state,crane1993categorical} whose Hamiltonian realization is described in Ref.\cite{walker2012}.  In general, a UGxBFC has $|G|$ sectors and the trivial sector is always a premodular category.  

\subsubsection{UGxBFC}

A UGxBFC~\cite{turaev2000homotopy,kirillov2004g,etingof2009fusion,barkeshli2014symmetry} can roughly be thought of as a spherical fusion $1.5$-category---it is a spherical fusion category with a $G$-crossed braiding, hence it does not seem to be a totally general spherical fusion $2$-category. That being said there is no satisfactory agreed upon definition of a spherical $2$-category in the literature. While a definition was proposed in Ref.\cite{mackaay1999spherical} it was shown in Ref.\cite{shawnthesis} that this was too restrictive to even capture the UGxBFCs. Moving forward the UGxBFCs constitute a very important class of unitary fusion 2-categories as they provide a huge family of nontrivial examples, which have proved otherwise hard to come by.

 For our purpose, the most convenient way to define a $G$-crossed braided fusion category is through a collection of symbols $\{N_{ab}^c,F^{abc}_{d}, R^{ab}_c, \varkappa_a, U_g(a, b ; c), \eta_x(g,h)\}$~\cite{barkeshli2014symmetry}. 
 This extends the description of a unitary premodular category through a collection of symbols $\{N_{ab}^c,F^{abc}_{d}, R^{ab}_c, \varkappa_a\}$~\cite{kitaev2006anyons}.

A UGxBFC is an abstract description of point like defects of a symmetry group $G$ in a gapped phase of matter in $(2+1)$D. Each defect carries a flux $g\in G$ but there may be multiple topologically distinct defects carrying the same $G$-flux, this is described by a $G$-graded category 
$$\cat_G=\bigoplus_{g\in G} \cat_g $$
where each simple object is contained in some sector $a\in \cat_g$. We follow the notation of Ref.\cite{barkeshli2014symmetry} and use $a_g$ as shorthand for $a\in\cat_g$. Defects can be fused by physically bringing them together, this is described by a set of multiplicities $N_{ab}^c$ counting the number of ways $a$ and $b$ can fuse to $c$. In particular the fusion $a \times b \rightarrow c$ is admissible iff $N_{ab}^c\neq 0$. 
The fusion should respect the grading, i.e. 
$$ a_g\times b_h = \sum_{c\in \cat_G} N_{ab}^c c=  \sum_{c\in \cat_{gh}} N_{ab}^c c_{gh} .$$
The $\cat_1$ sector is closed under fusion and contains the unique vacuum object that fuses trivially with everything else, thus forming a fusion subcategory.

The fusion of three defects is not strictly associative, two different fusion paths with result $d$ are related by an $F$-symbol associator 
$(a\times b) \times c \xrightarrow{F^{abc}_d} a\times (b \times c)$
more precisely 
\begin{align}\label{Fsymb}
\vcenter{\hbox{
\includegraphics[scale=1]{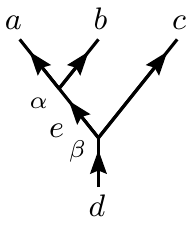}}}
\ = \sum_{f\mu\nu} [F^{abc}_{d}]_{e\alpha\beta}^{f\mu\nu} 
\vcenter{\hbox{
\includegraphics[scale=1]{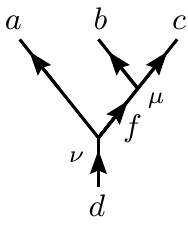}}}
\end{align}
Unitarity of the fusion category requires $[({F^{abc}_{d}})^{-1}]_{f\mu\nu}^{e\alpha\beta}= ([F^{abc}_{d}]_{e\alpha\beta}^{f\mu\nu})^*$. 
For this associator to be consistent all paths between a pair of diagrams must agree, this is guaranteed by the well known pentagon equation
\begin{align}
\sum_{\delta} [F^{fcd}_{e}]_{g\beta\gamma}^{l\delta\nu}  [F^{abl}_{e}]_{f\alpha\delta}^{k\lambda\mu} 
= \sum_{h\sigma\psi\rho}  [F^{abc}_{g}]_{f\alpha\beta}^{h\sigma\psi}  [F^{ahd}_{e}]_{g\sigma\gamma}^{k\lambda\rho}  [F^{bcd}_{k}]_{h\psi\rho}^{l\mu\nu}
\end{align}
this is depicted diagrammatically in Fig.\ref{pentagon}.
 \begin{figure}[ht]
\center
{{
 \includegraphics[scale=1]{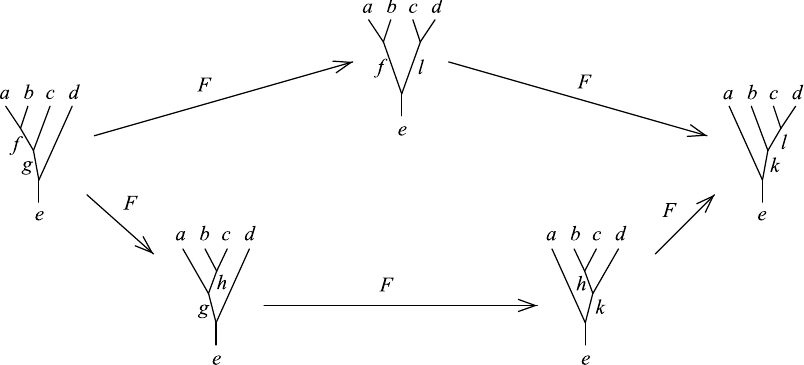}}} 
\caption{The pentagon equation.}
\label{pentagon}
\end{figure}

The group G has an action on simple objects denoted by $^g a_h\in \cat_{^g h}$ where $^g h = ghg^{-1}$. Each simple object $a_g$ has a unique conjugate $\bar{a}\in\cat_{\bar{g}}$ that can fuse together to give the vacuum, where $\bar{g}=g^{-1}$.  Flipping the direction of an edge is equivalent to conjugating the charge label
\begin{align}\label{edgeflip}
\vcenter{\hbox{
\includegraphics[scale=1]{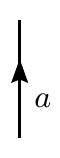}}}
= \
\vcenter{\hbox{
\includegraphics[scale=1]{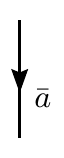}}}
   .
\end{align}
The element $[F^{a\bar{a}a}_a]^{1}_1=\frac{\varkappa_a}{d_a}$ consists of a quantum dimension which arises from popping a bubble
\begin{align}\label{quantdim}
d_a=d_{\bar{a}}
 =
 \vcenter{\hbox{
\includegraphics[scale=1]{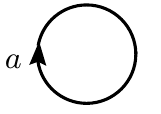}}}
\end{align}
and a Frobenius-Shur (FS) indicator which arises when a cup and cap are canceled
\begin{align}\label{fsind}
 \vcenter{\hbox{
\includegraphics[scale=1]{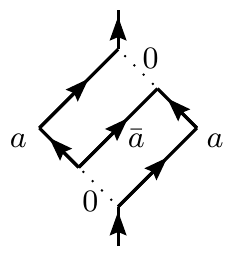}}}
 = \varkappa_a 
  \vcenter{\hbox{
\includegraphics[scale=1]{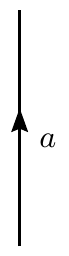}}}
   . 
\end{align}
Note the FS indicator can be fixed to $1$ via a gauge transformation unless $a=\bar{a}$ in which case $\varkappa_a=\pm 1$.

The total quantum dimension of $\cat$ is $\D^2=\sum\limits_{a\in\cat} d_a^2$ and similarly for each sector $\cat_g$, $\D_g^2=\sum\limits_{a\in\cat_g} d_a^2$. The nonempty sectors form a subgroup $H\leq G$ and satisfy $\D_h=\D_1$ for $h\in H$. Note all defects in a given sector are related by fusion with objects in $\cat_1$.

The physical defects appear at the end of branch cuts and can be dragged around by adiabatically deforming the Hamiltonian. This leads to braided worldlines of the defects attached to worldsheets of the branch cuts. We follow the convention of Ref.\cite{barkeshli2014symmetry} and depict the worldsheets going into the page. The worldsheet of a defect worldline acts on other defects which pass behind it.  
The G-crossed braiding is defined by several pieces of data, the $R$-symbol 
\begin{align}\label{Rsymb}
  \vcenter{\hbox{
\includegraphics[scale=1]{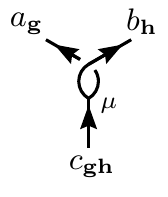}}}
 = \sum_\nu [R_{c_{gh}}^{a_g b_h}]^\nu_\mu 
  \vcenter{\hbox{
\includegraphics[scale=1]{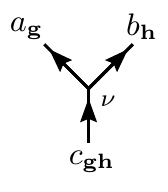}}}
\end{align}
and the $U$ and $\eta$ symbols, which arise due to symmetry actions as a defect is slid over or under a fusion vertex,
\begin{align}\label{usymb}
  \vcenter{\hbox{
\includegraphics[scale=.9]{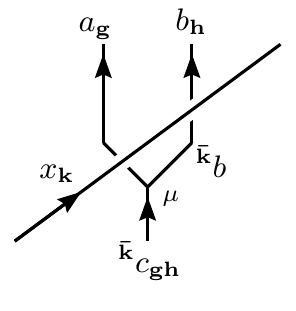}}}
\  &= \sum_\nu [U_k(a,b;c)]^\nu_\mu 
  \vcenter{\hbox{
\includegraphics[scale=.9]{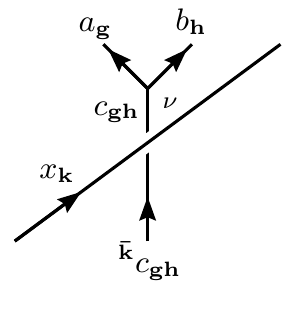}}}
\end{align}
\begin{align}
 \label{nsymb}
  \vcenter{\hbox{
\includegraphics[scale=.9]{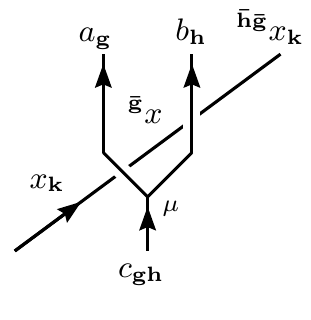}}}
\ &= \eta_x(g,h) 
  \vcenter{\hbox{
\includegraphics[scale=.9]{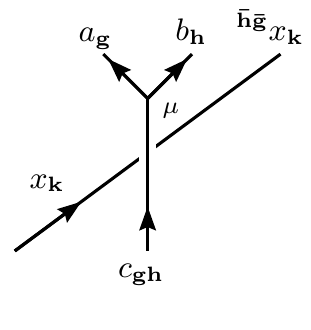}}}
 .
\end{align}
$U$ corresponds to the action of the global symmetry on topological degrees of freedom, while $\eta_x$ corresponds to the 2-cocycle of the projective representation carried by $x$.

For the data $F^{abc}_{d}, R^{ab}_c, U_g(a, b ; c), \eta_x(g,h)$ to define a consistent UGxBFC $\cat^\times_G$ the symbols must satisfy a number of conditions. The $F$-symbols must satisfy the pentagon equation in Fig.\ref{pentagon}. 
Equating the two different paths in Fig.\ref{forkcrossing} yields a constraint corresponding to the action of $(kl)\bar{l}\, \bar{k}$ being trivial (technically a natural isomorphism)
\begin{align}\label{Uetaconsist}
[\kappa_{k,l}(a,b;c)]_\mu^\nu &= \sum_{\alpha\beta} [U^{-1}_k(a,b;c)]^\alpha_\mu [U^{-1}_l({}^{\bar{k}}a,{}^{\bar{k}}b;{}^{\bar{k}}c)]^\beta_\alpha 
\nonumber \\
[U_{kl}(a,b;c)]_\beta^\nu &= \frac{\eta_a(k,l) \eta_b(k,l)}{\eta_{c}(k,l)} \delta_\mu^\nu .
\end{align}
 \begin{figure}[th]
\center
{{
 \includegraphics[width=0.6\linewidth]{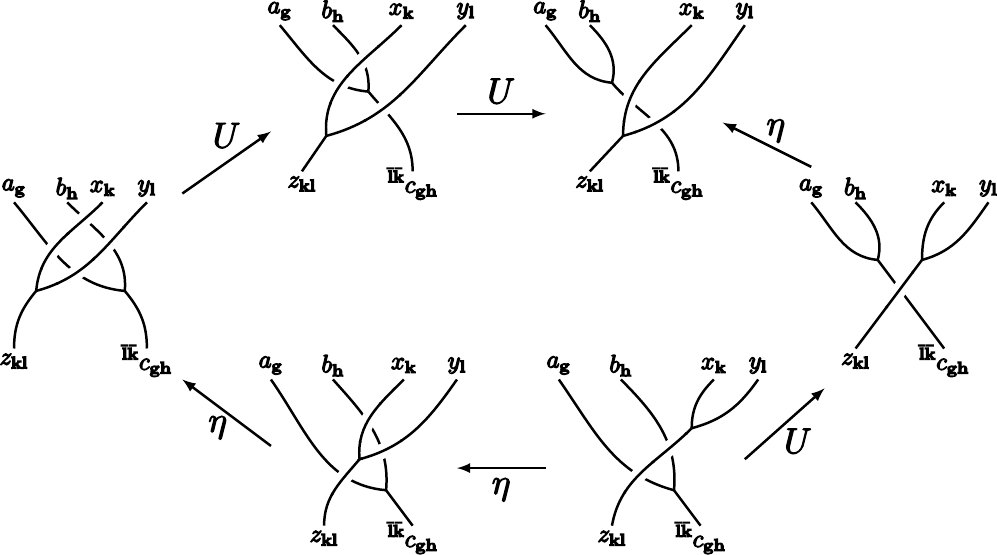}}} 
\caption{Consistency of the global symmetry action and projective phases.} 
\label{forkcrossing}
\end{figure}
Associativity of the group action $klm$ yields a further constraint on $\kappa_{k,l}$
$$  \kappa_{l,m}({}^{\bar{k}}a,{}^{\bar{k}}b;{}^{\bar{k}}c)  \kappa_{k,lm}(a,b;c) = \kappa_{k,l}(a,b;c) \kappa_{kl,m}(a,b;c) . $$
Consistency of fusion and $\eta$ leads to the equation 
$$  \eta_{{}^{\bar{g}}x}(h,k) \eta_x(gh,k ) = \eta_x(g,h) \eta_x(gh,k) $$
by equating the two paths in Fig.\ref{etaFconsistency}. This ensures the symmetry fractionalization is not anomalous and can be realized in a standalone $(2+1)$D system, corresponding to the vanishing of a $H^3(G,\mathcal{A})$ obstruction where $\mathcal{A}$ is the group of abelian anyons. 
 \begin{figure}[th]
\center
\begin{minipage}{.49\textwidth}
\center
\begin{align*}
  \vcenter{\hbox{
\includegraphics[scale=.9]{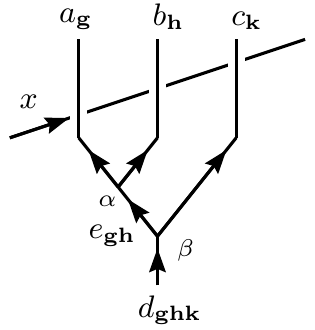}}}
\quad \rightarrow \quad
  \vcenter{\hbox{
\includegraphics[scale=.9]{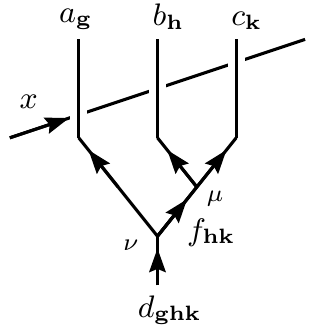}}}
\\
 \downarrow \hspace{4.3cm} \downarrow \hspace{1.32cm}
\\
  \vcenter{\hbox{
\includegraphics[scale=.9]{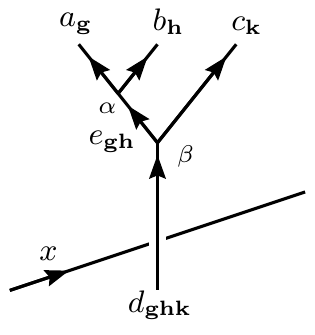}}}
\quad \rightarrow \quad
  \vcenter{\hbox{
\includegraphics[scale=.9]{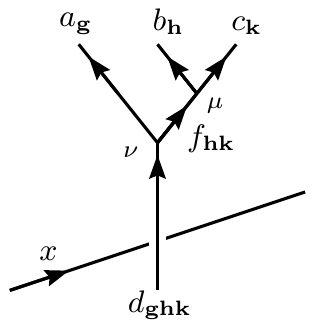}}}
\end{align*}
\end{minipage}
\begin{minipage}{.49\textwidth}
\center
\begin{align*}
  \vcenter{\hbox{
\includegraphics[scale=.9]{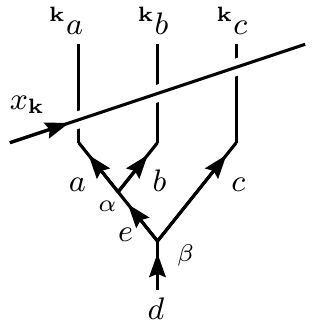}}}
\quad \rightarrow \quad
  \vcenter{\hbox{
\includegraphics[scale=.9]{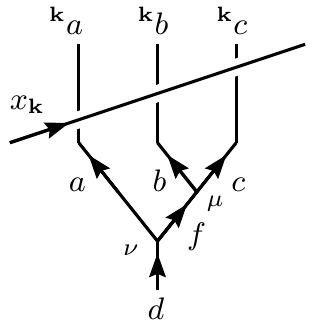}}}
\\
 \downarrow \hspace{4.3cm} \downarrow \hspace{1.32cm}
\\
  \vcenter{\hbox{
\includegraphics[scale=.9]{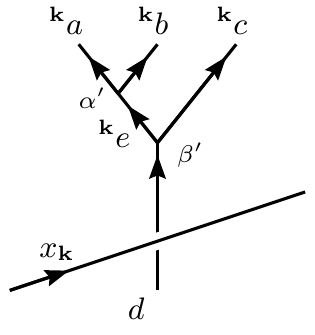}}}
\quad \rightarrow \quad
  \vcenter{\hbox{
\includegraphics[scale=.9]{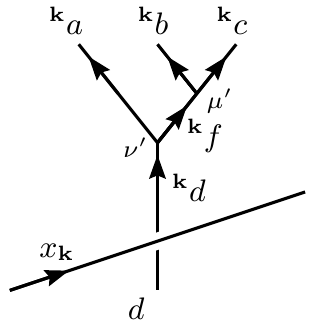}}}
\end{align*}
\end{minipage}

\caption{Consistency of fusion with: $\eta$ (left), and $U$ (right).}
\label{etaFconsistency}
\end{figure}
Similarly consistency of fusion and $U$ yields the equation 
\begin{align}\label{UFconsist}
\sum_{\alpha'\beta'\mu'\nu'} [U_k({}^{\bar{k}}a,{}^{\bar{k}}b;{}^{\bar{k}}e)]^{\alpha'}_\alpha [U_k({}^{\bar{k}}e,{}^{\bar{k}}c;{}^{\bar{k}}d)]^{\beta'}_\beta  [F^{{}^{\bar{k}}a{}^{\bar{k}}b {}^{\bar{k}}c}_{{}^{\bar{k}} d}]_{{}^{\bar{k}} e\alpha ` \beta'}^{{}^{\bar{k}}f \mu' \nu `}
 [U^{-1}_k({}^{\bar{k}}b,{}^{\bar{k}}c;{}^{\bar{k}}f)]^{\mu}_{\mu `}  [U^{-1}_k({}^{\bar{k}}a,{}^{\bar{k}}f;{}^{\bar{k}}d)]^{\nu}_{\nu `}
\nonumber \\
 = [F^{abc}_{d}]_{e\alpha\beta}^{f\mu\nu} 
\end{align}
which corresponds to a symmetry condition on $F$ under the group action.

The Yang-Baxter equation is no longer a strict equality in a UGxBFC and leads to a consistency equation between the dragging of a string over or under a crossing
\begin{align}\label{URconsist}
&\frac{\eta_{{}^{\bar{k}}a }({}^{\bar{k}} h,k)}{\eta_{{}^{\bar{k}} a}(k,h)} 
\sum_{\mu'\nu'}  [U_k({}^{{k}}b,{}^{{k}\bar{h}}a;{}^{{k}}c)]^{\mu'}_{\mu} [R_{{}^{k}c}^{{}^{k}a {}^{k}b}]^{\nu'}_{\mu'}
[U^{-1}_k({}^{{k}}a,{}^{{k}}b;{}^{{k}}c)]^{\nu}_{\nu `}
= [R_{c}^{ab}]^{\nu}_{\mu}
\end{align}
which is a symmetry condition on $R$ under the group action.
 \begin{figure}[th]
\center
\begin{align*}
  \vcenter{\hbox{
\includegraphics[scale=1]{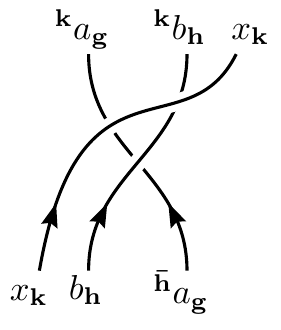}}}
\vcenter{\hbox{
\includegraphics[scale=1]{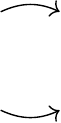}
}}
\hspace{.8mm}
  \vcenter{\hbox{
\includegraphics[scale=1]{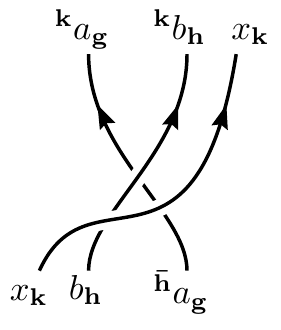}}}
\end{align*}
\caption{Yang-Baxter relation in a UGxBFC.}
\label{URconsistency}
\end{figure}

Finally there are consistency relations between $F$ and $R$ (also involving $U$ and $\eta$) known as the heptagon equations, generalizing the well known hexagon equations for a UBFC, one each for right and left handed braiding (see Fig.\ref{heptagon})  as follows
\begin{align}\label{heptagoneqns}
\sum_{\lambda\gamma}  [R_{e}^{ac}]^{\lambda}_{\alpha}  [F^{ac {}^{\bar{k}}b}_{d}]_{e\lambda\beta}^{m\gamma\nu}   [R_{m}^{bc}]^{\mu}_{\gamma}
&=\sum_{f\sigma\delta\theta\psi}  [F^{c {}^{\bar{k}} a {}^{\bar{k}} b}_{d}]_{e\alpha\beta}^{{}^{\bar{k}}f\delta\sigma}  [U_k(a,b;f)]^\theta_\delta [R_{d}^{fc}]^{\psi}_{\sigma} [F^{abc}_{d}]_{f\theta\psi}^{m\mu\nu} 
\\
\sum_{\lambda\gamma}  [(R_{e}^{ca})^{-1}]^{\lambda}_{\alpha}  [F^{a{}^{\bar{g}}cb}_{d}]_{e\lambda\beta}^{m\gamma\nu}   [(R_{m}^{{}^{\bar{g}}cb})^{-1}]^{\mu}_{\gamma}
&=\sum_{f\sigma\delta\psi}  [F^{cab}_{d}]_{e\alpha\beta}^{f\delta\sigma}  \eta_c(g,h)  [(R_{d}^{cf})^{-1}]^{\psi}_{\sigma} [F^{ab{}^{\bar{h}\bar{g}}c}_{d}]^{g\mu\nu}_{f\delta\psi} 
\end{align}
where the defect sectors are determined by $a_g,b_h,c_k$. 
 \begin{figure}[th]
\center
 \includegraphics[width=0.47\linewidth]{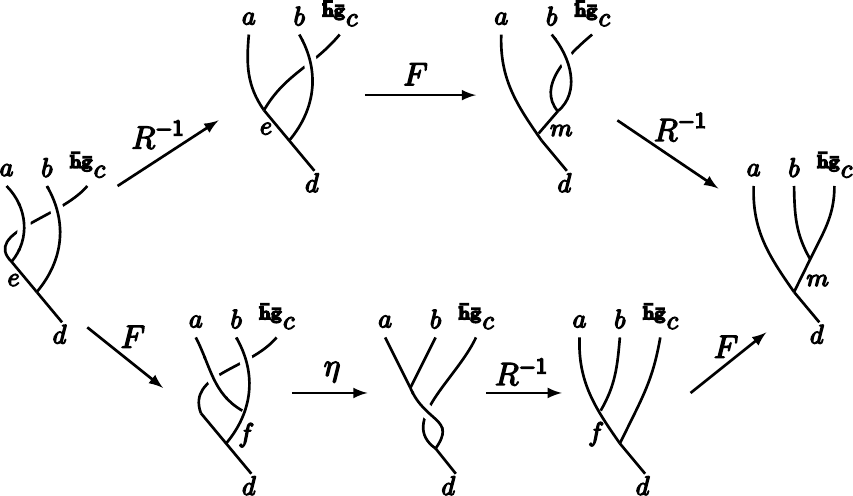} \hspace{.7cm}
 \includegraphics[width=0.47\linewidth]{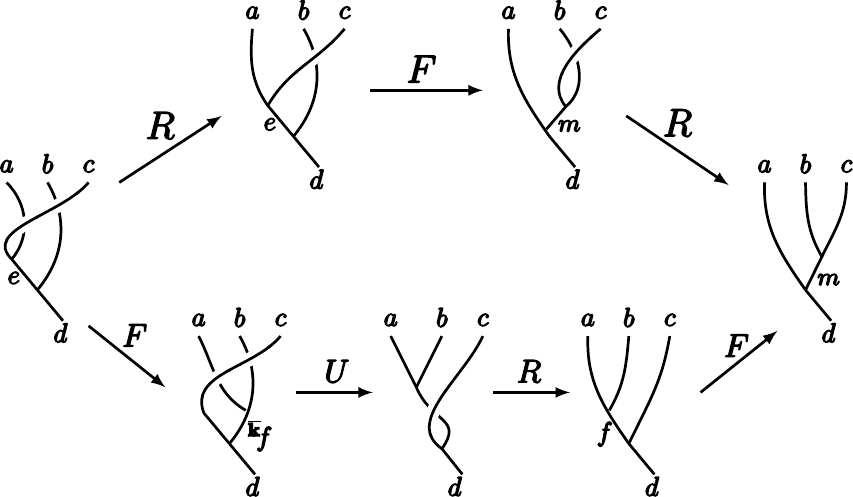}
\caption{The left and right handed heptagon equations of a UGxBFC, respectively.}
\label{heptagon}
\end{figure}
For a unitary GxBFC it is required that $ [(R_{c}^{ab})^{-1}]^{\nu}_{\mu} = ([R_{c}^{ab}]^{\mu}_{\nu})^*$. 
Note the trivial sector $\cat_1$ of a UGxBFC is itself a UBFC as the heptagon equations reduce to the hexagon equations in that case. 

The consistency equations for a UGxBFC guarantee it is not anomalous and can be realized in a stand-alone $(2+1)$D system. Not all group actions on UBFCs can be extended to a UGxBFC as some are anomalous. The anomalies lie in $H^4(G,U(1))$, which is related to weakening the pentagon equation Fig.\ref{pentagon}, and $H^3(G,\mathcal{A})$ which is related to weakening the fractonalization constraint Fig.\ref{etaFconsistency}.

\subsubsection{Example: $\Z_3$ Tambara-Yamagami Category}

A simple example of a UGxBFC known as the $\Z_3$ Tambara-Yamagami category can be constructed from a $\Z_3^{(1)}$ anyon theory $\{0,1,2\}$ with a $\Z_2$ symmetry that permutes $1$ and $2$. This theory is also known as $SU(3)_1$ which has conformal charge $c=2$ and a $\Z_2$ particle-hole symmetry. Physically, this theory describes the topological order of a sector of the $\nu=\frac{1}{3}$ Laughlin FQH state.

The UGxBFC consists of two sectors $\cat^\times_{\Z_2}=\cat_0\oplus \cat_1$.
The $\Z_3^{(1)}$ UBFC constitutes the $\cat_0$ sector and is defined by the fusion rules $N_{ab}^c=\delta_{a+b=c \ \text{mod}\, 3}$, trivial $F$ symbols and braiding $R^{ab}_{a+b}=e^{2\pi i ab/3}$.
The nontrivial sector contains only a single defect $\cat_1=\{ \sigma \}$. The fusion rules are thus
\begin{align*}
\sigma\times a = a \times \sigma = \sigma
\\
\sigma\times \sigma = \sum_{a\in\cat_0} a .
\end{align*}
The anyons in $\cat_0$ each have quantum dimension 1, hence $d_\sigma = \sqrt{3}$. 
The nontrivial $F$ symbols are then given by
\begin{align*}
[F^{a\sigma b}_{\sigma}]^{\sigma}_\sigma= [F^{\sigma a \sigma}_{b}]^{\sigma}_\sigma = \chi(a,b) 
\\
[F^{\sigma \sigma \sigma}_{\sigma}]^{b}_a = \frac{\varkappa_\sigma}{\sqrt{3}} {\chi(a,b)}^*
\end{align*}
where $\chi(a,b)=e^{2\pi i ab/3}$ is a symmetric bi-character.
The $G$-crossed braidings involving $\sigma$ are determined by
\begin{align*}
R^{\sigma a}_\sigma = U_1(\sigma,\sigma,a)(-1)^a e^{- \pi i a^2/3}, \quad  R^{a \sigma }_\sigma = (-1)^a e^{- \pi i a^2/3}
\\
R^{\sigma \sigma}_a = \gamma (-1)^a e^{\pi i a^2/3}, \quad \gamma^2 = \frac{\varkappa_\sigma}{\sqrt{3}} \sum_{a\in\cat_0} (-1)^a e^{-\pi i a^2/3}
\end{align*}
where $U_1(\sigma,\sigma,a)=\pm 1 $ and $\varkappa_\sigma=\pm 1$ are choices which yield slightly different UGxBFC extensions of $\cat_0$, note $\eta$ is trivial in all cases.

\subsubsection{State Sum from UGxBFC}

The data of a UGxBFC $\cat^\times_G$ can be used as input to generate a family of $(3+1)$D state sum TQFTs~\cite{shawnthesis} generalizing the Crane-Yetter-Walker-Wang model. 
The label set is as follows $L^{(1)}=G,\ L^{(2)}=\cat^\times_G,\ L^{(3)}=\text{Hom}(\cat^\times_G \otimes \cat^\times_G,\cat^\times_G \otimes \cat^\times_G)$, where we are abusing notation by using $\cat^\times_G$ to denote the set of equivalence classes of simple objects. 
That is each edge is labeled by a group element $g$, each triangle is labeled by a defect $a$ and each tetrahedron is labeled by a triple $(a,\mu,\nu)$ of a defect and two degeneracy labels. The only configurations that have nonzero contributions to the state sum must satisfy the following constraints between the labels on the different strata: the defect on a simplex $012$ must satisfy $a_{012}\in\cat_{(dg)_{012}}$ where $(dg)_{012}=\bar{g}_{02}g_{01}g_{12}$ and the defect labels on the faces and body of a tetrahedra $0123$ are subject to the constraints $N_{a_{\hat{1}}\, {}^{\bar{g}_{23}}a_{\hat{3}}}^{a_{0123}}\neq 0 \neq N_{a_{\hat{2}}a_{\hat{0}}}^{a_{0123}}$ (then $\mu,\nu$ correspond to degeneracy labels of these fusion spaces).

The $15j$-symbols are given by evaluating diagrams in the UGxBFC shown in Fig.\ref{15jsymbols}. The choice of diagram is determined by the configuration $s$ and the vertex ordering on a pentachoron. We use the compressed notation $F^{abc}_{d;ef}=[F^{abc}_d]_e^f$, explicit evaluation of the diagrams in Fig.\eqref{15jsymbols} yields
\begin{align}
\ftj^+_{s(01234)} =& \sum_{a,b} d_b F^{024,234, {}^{\bar{34}\cdot\bar{23}}012}_{b;0234,a}  \eta^{-1}_{012}(\bar{34},\bar{23}) 
R_{a}^{{}^{\bar{24}}012 , 234 } (F^{024, {}^{\bar{24}}012 , 234 }_{b; a, 0124})^{-1} 
F^{014,124,234}_{b;0124,1234}
(F^{014,134,{}^{\bar{34}}123}_{b;1234,0134})^{-1}
\nonumber \\
 &\ F^{034,{}^{\bar{34}}013,{}^{\bar{34}}123}_{b;0134,{}^{\bar{34}}0123 } 
U_{\bar{34}}(023,{}^{\bar{23}}012;0123)  U^{-1}_{\bar{34}}(013,123;0123)  
 (F^{034,{}^{\bar{34}}023,{}^{\bar{34}\cdot\bar{23}}012}_{b;{}^{\bar{34}}0123,0234} )^{-1}
\\
\ftj^-_{s(01234)} =&   \sum_{a,b} d_b (F^{024,234, {}^{\bar{34}\cdot\bar{23}}012}_{b;a,0234})^{-1}  \eta_{012}(\bar{34},\bar{23}) 
(R_{a}^{{}^{\bar{24}}012 , 234 })^{-1}  F^{024, {}^{\bar{24}}012 , 234 }_{b;0124,a }
(F^{014,124,234}_{b;1234,0124})^{-1}
 F^{014,134,{}^{\bar{34}}123}_{b;0134,1234}
\nonumber \\
&\ (F^{034,{}^{\bar{34}}013,{}^{\bar{34}}123}_{b;{}^{\bar{34}}0123,0134})^{-1} 
U^{-1}_{\bar{34}}(023,{}^{\bar{23}}012;0123)  U_{\bar{34}}(013,123;0123)  
F^{034,{}^{\bar{34}}023,{}^{\bar{34}\cdot\bar{23}}012}_{b;0234,{}^{\bar{34}}0123} 
\end{align}
where each label $\simp_i$ is to be read as $s(\simp_i)$, we have omitted the explicit writing of $s$ for brevity.   
 \begin{figure}[ht]
\center
 \includegraphics[width=0.25\linewidth]{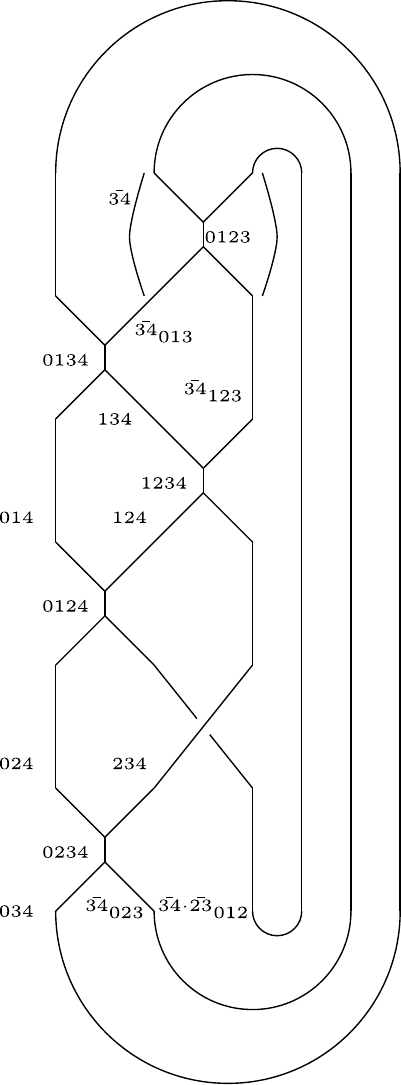}
\hspace{2.5cm}
  \includegraphics[width=0.25\linewidth]{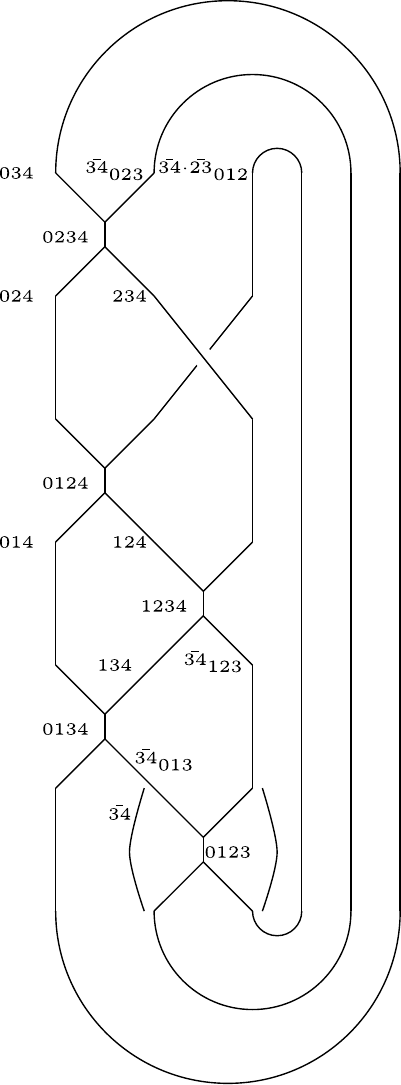} 
\caption{Diagrams in the UGxBFC that define the $15j$-symbols $\ftj^+_{s(01234)}$ (left), and $\ftj^-_{s(01234)}$ (right)~\cite{shawnthesis}. It is intended that $\simp_i$ should be read as $s(\simp_i)$.}
\label{15jsymbols}
\end{figure}

These $15j$-symbols, together with the quantum dimensions, define the state sum partition function 
$$ \tft{X}=\sum_{s} \prod_{\simp_{4}}  \ftj_{s(\simp_{4})}^{\orient{\simp_{4}}} \frac{ \prod\limits_{\simp_{2}}  d_{s(\simp_{2})}   \prod\limits_{\simp_{0}}  D^2/|G| }{ \prod\limits_{\simp_{3}}  d_{s(\simp_{3})} \prod\limits_{\simp_1} D^2 }  $$
where $D$ is the total quantum dimension.
It was shown in Ref.\cite{shawnthesis} that the partition function is a topological invariant, i.e. does not depend on the choice of vertex ordering or triangulation. The latter condition is guaranteed by the equalities $\tft{\subc}=\tft{\simp_5 \backslash \subc}$ for all 4-subcomplexes $\subc \subseteq \simp_5$.

This state sum captures many known constructions as special cases: 
\begin{itemize}
\item For $G=\{1\}$ the trivial group the UGxBFC $\cat_G^\times=\cat_1$ is a regular UBFC and the $15j$-symbols match the construction of Crane and Yetter~\cite{crane1993categorical,crane1997state}, hence $\tft{X}= \text{CY}_{\cat_1}(X)$ the Crane-Yetter partition function for $\cat_1$. This implies our lattice models include the Walker-Wang models~\cite{walker2012}.
\item Another special case constructed from a categorical group (or crossed module) yields Yetter's homotopy 2-type invariant~\cite{yetter1993tqft}. 
Note this inclusion implies that our lattice models capture those of Ref.\cite{bullivant2016topological}.
Categorical groups are in 1-1 correspondence with crossed modules, we follow Ref.\cite{shawnthesis} and use the latter to build a UGxBFC. A crossed module is specified by the data $(G,H,\rho,a)$ for $G,H$ finite groups, $\rho:H\rightarrow G$ a group morphism and $a:G\times H\rightarrow H$ a group action of $G$ on $H$ subject to the conditions $\rho(a_g (h))={}^{g}\rho(h)$ and $a_{\rho(h')} (h)= {}^{h'} h$. 
A UGxBFC $\cat(G,H,\rho,a)=\bigoplus\limits_{g\in G}\cat_g$ is constructed from the data as follows: the simple elements are $h\in H$, the grading is given by $h\in \cat_{\rho{h}}$, fusion is given by multiplication in $H$, the $G$-action is given by $a$, the braiding and $F$ symbols are trivial. The $15j$-symbols are simply delta conditions  on the configuration being admissible (i.e. face and tetrahedra constraints satisfied) and the partition function satisfies $\tft{X}=\text{Y}(X)$ for the Yetter invariant~\cite{yetter1993tqft} constructed from the categorical group corresponding to $(G,H,\rho,a)$.
In the special case that $H=\{1\}$ is trivial, $\rho$ \& $a$ are also trivial, then the triangle constraints become the flatness condition $dg=0$ and the partition function recovers the untwisted Dijkgraaf-Witten theory for $G$, $\tft{X}=\text{DW}_G(X)$~\cite{dijkgraaf1990topological}.
\item The case where the only nontrivial sector is $\cat_1$, a UBFC, the triangle constraints imply the flatness condition $dg=0$. If in addition the group action is trivial the group and anyon degrees of freedom decouple and the partition function factors into a product of CY and untwisted DW theory, $\tft{X}=\text{DW}_G(X) \text{CY}_{\cat_1}(X)$.
\end{itemize}

For the trivially graded case it is possible to introduce additional cocycle data to produce variants of the UGxBFC: 
\begin{itemize}
\item Since the state sum only involves flat $G$-connections the 4 group variables $g_{i,i+1}$ fully specify the $G$ configuration on a pentachoron. One may modify the $15j$-symbol by multiplication with a 5-cocycle phase factor $[\omega] \in H^4(G,U(1))$ to produce $\hat \ftj^\pm_{s(\simp_4)} = \ftj^\pm_{s(\simp_4)} \omega^{\pm 1}(g_{01},g_{12},g_{23},g_{34})$ which will give rise to a topologically invariant state sum. 
If in addition the group action is trivial the resulting partition function is given by a product of CY and twisted DW theories $\tft{X}=\text{DW}^\omega_G(X) \text{CY}_{\cat_1}(X)$.
\item In the case that $\cat_1=H$ an abelian group, with trivial $F$ and $R$ symbols, and a group action $a:G\times H \rightarrow H$ the tetrahedra constraint reads $(d^ah)_{0123}=a_{g_{23}}(h_{\hat 1})+h_{\hat 3} - h_{\hat 0} - h_{\hat 2}=0$. 
One may introduce a twisted 3-cocycle $[\beta]\in H^{3}_a(G,H)$ modifying the flatness condition to $(d^ah)_{0123}=\beta(g_{01},g_{12},g_{23})$. The $15j$-symbols are then delta conditions on the flatness of a 2-group connection defined by the data $\G=(G,H,a,\beta)$, following Ref.\cite{kapustin2013higher}. Furthermore one may introduce a multiplicative cocycle $[\omega]\in H^4(B\G,U(1))$ to produce a new $15j$-symbol $\hat \ftj^\pm_{s(\simp_4)} = \ftj^\pm_{s(\simp_4)} \omega^{\pm 1}(s(\simp_4)\,)$. The partition function then recovers the twisted 2-group gauge theory $\tft{X}={\text{2-DW}}^\omega_\G(X)$. 
\end{itemize}

No rigorous connection has been established between the aforementioned $H^3\ \&\ H^4$ cocycles and the $H^3(G,\A)\ \&\  H^4(G,U(1))$ anomaly classes of an SET.  In these cases the SET theory $\cat_1$ and group action cannot be extended to a UGxBFC. However we conjecture it will remain possible to construct an extension of the UGxBFC with a single sector whose $15j$-symbol has an intrinsic $H^3\ \&\ H^4$ anomaly. We defer the details of this to future work~\cite{anomalousugxbfc}. Note the possibility of adding an arbitrary $H^3\ \&\ H^4$ as discussed above suggest the intrinsic anomalies should be thought of as torsors. 
Furthermore we speculate that it should be possible to construct a unitary fusion $2$-category generalizing the UGxBFC that describes extension of an anomalous SET to nontrivial defect sectors, and this construction may yield a state sum with $15j$-symbols generalizing those of the UGxBFC.

It is not yet known how strong the UGxBFC state sum invariant is. Considering the special cases it contains it is clearly sensitive to homotopy 2-type and also the second Stiefel–-Whitney class (as the anyons can be fermionic). It is unclear if the theory is able to detect any smooth structure, while it is known from general considerations that it cannot be sensitive to all smooth structure~\cite{freedman2005universal}.

\subsection{Review of Kashaev's TQFT}

Kashaev's family of state sum TQFTs~\cite{kashaev2014simple,kashaev2015realizations} are indexed by a natural number $N\in\N$, they are specified by a tensor
\begin{align}\label{Qtens}
Q=N^{-\frac{1}{2}} \sum_{k,l,m\in\Z_N} \omega^{km} \ket{k}\bra{k+m}\otimes\ket{l}\bra{l+m}\otimes\ket{m}
\end{align}
where $\omega\in \U(1)$ is a primitive $N$th root of unity. 
\begin{figure}[ht]
\center
{{
 \includegraphics[height=0.23\linewidth]{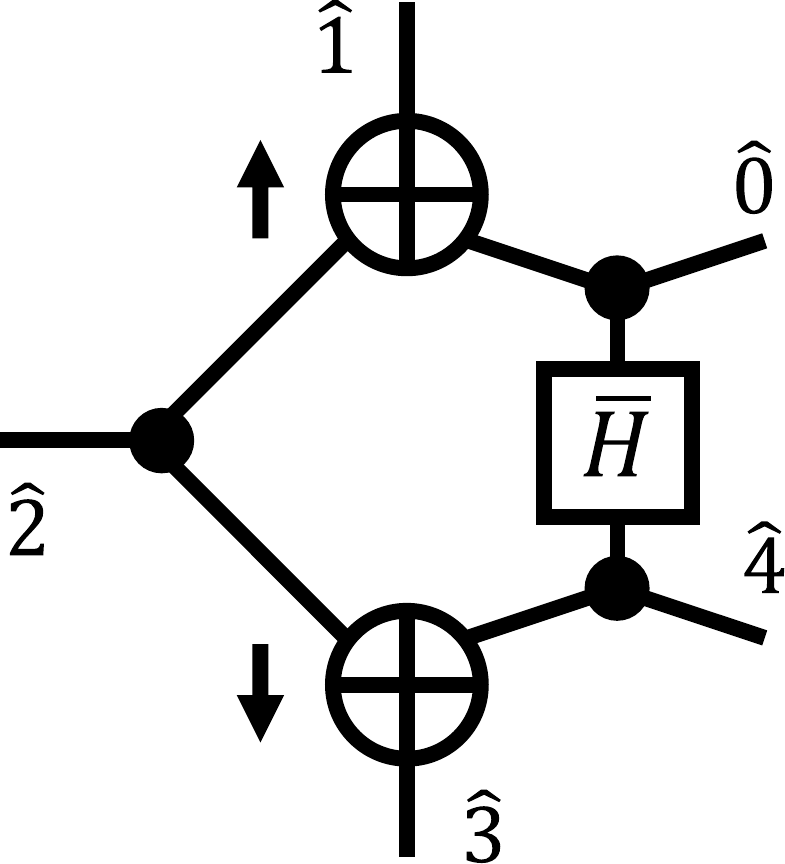} \hspace{2.5cm} \includegraphics[height=0.23\linewidth]{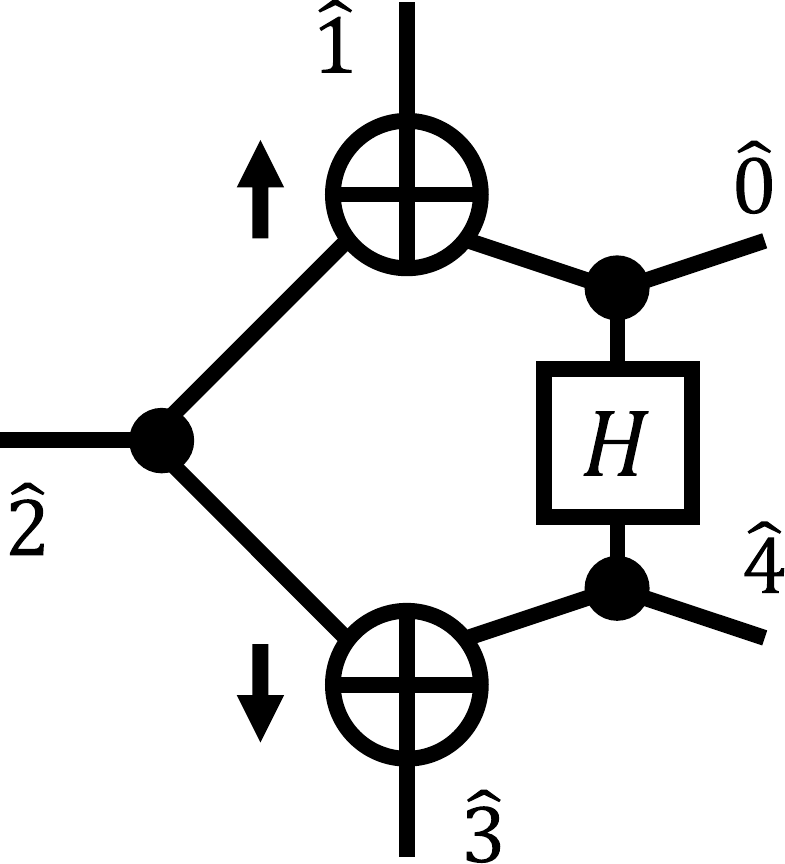}}} 
\caption{Kashaev's Q tensor (left) and its conjugate (right).}
\label{Qpic}
\end{figure}
A tensor $Q$ ($Q^\dagger$) is assigned to each 4-simplex in the triangulation where the orientation induced by the branching structure matches (is opposite to) the ambient orientation of the triangulated manifold.
An $N$ level qudit is associated to each 3-simplex facet of the 4-simplex tensor, they are written in the order given by taking the dual of the vertex order inherited from the branching structure. 
The full partition function on a triangulated 4-manifold $(\man,\triang)$ is given by the evaluation of the tensor network times the normalization factors $N^{\frac{3}{2}\chi(\man)}$ and $N^{-|\triang_0|}$, where $\chi$ is the Euler characteristic and $\triang_0$ is the number of vertices in the triangulation (with those on a boundary counted as half).

Kashaev has shown in Ref.\cite{kashaev2014simple} that the $Q$ tensors satisfy the Pachner move invariance conditions, together with the Hermitian property of the tensors (i.e. parity reversal induces complex conjugation) this implies the construction outlined in Ref.\cite{higherdto} gives rise to a local commuting projector Hamiltonian. 
The dimension of the (unfrustrated) zero energy eigenspace of the Hamiltonian on a spatial manifold $\nan$ is given by $\tft{\nan \times {S}^1}$. In particular the Hamiltonian is frustration free iff $\tft{\nan \times {S}^1}$ is nonzero. Partition functions have been calculated for a number of manifolds by Kashaev~\cite{kashaev2014simple} and for $S^{1}\times T^3$ by the authors. These results are summarized in Table.\ref{kashaevpf} and show that the TQFT is stable (i.e. the Hamiltonian does not exhibit spontaneous symmetry breaking). 
\begin{table}[ht]
\begin{center}
\begin{tabular}{| c | c | c | c | }
\hline 
$\man$ & $\chi(\man)$ & $\sigma(\man)$ & $Z_N[\man]$
\\ \hline
$S^4$ & 2 & 0 & 1
\\ \hline
$S^2\times S^2$ & 4 & 1 & $\frac{3+(-1)^N}{2}$
\\ \hline
$\C P^2$ & 3 & 1 & 
$\frac{1}{\sqrt{N}} \sum\limits_{k=1}^{N} \omega^{k^2}$
\\ \hline
$S^3\times S^1$ & 0 & 0 & 1
\\ \hline
$S^2\times S^1 \times S^1$ & 0 & 0 & $\frac{3+({-}1)^N}{2}$
\\ \hline
$S^1\times S^1 \times S^1 \times S^1 $ & 0 & 0 & \textcolor{red}{$(\frac{3+(-1)^N}{2})^3$}
\\ \hline
\end{tabular}	
\end{center}
\caption{Partition functions of Kashaev's TQFT}
\label{kashaevpf}
\end{table}
The final element of the table (highlighted in red) is the result of a new calculation and yields the ground state degeneracy on the 3 torus for all $N$. 
Furthermore we have 
$$\left|\tft{\C P^2}\right|^2=1+({-}1)^{\frac{N}{2}}\frac{1+({-}1)^N}{2}. $$ 
Hence the data computed for the Kashaev theory is consistent with a modular CYWW model (an invertible TQFT) for $N$ odd, and a premodular CYWW with transparent subcategory: $\Z_2$ with trivial braiding (topological order equivalent to toric code) for $N=0 \mod 4$, and $sVec$ for $N=2 \mod 4$ (as $Z[\C P^2]=0$ the partition function can be seen to depend on spin structure in this case).

We conjecture that the general construction of Kashaev~\cite{kashaev2015realizations} is dual to the Crane-Yetter TQFT, in a similar fashion to the duality between Kuperberg's $(2+1)$-manifold invariants~\cite{kuperberg1991involutory} and the Turaev-Viro TQFT~\cite{turaev1992state}.

\section{Lattice Model for Kashaev's TQFT}
\label{kashaev}

In this section we apply the framework developed in Ref.\cite{higherdto} to produce a translation invariant local commuting projector Hamiltonian for the theory on a particular 3-dimensional lattice. 

\subsection{The Hamiltonian}

With the $Q$ tensor from Eq.\eqref{Qtens} one can explicitly construct a local commuting projector Hamiltonian of the form in Eq.\eqref{tftham} on any 3-manifold equipped with a triangulation and branching structure $(\nan,\triang )$ by following the recipe outlined in Ref.\cite{higherdto}. 
For concreteness we work with the body centered cubic (BCC) triangulation of $T^3$ or $\R^3$ which is dual to a tessellation by regular 4-permutohedra (also known as truncated octahedra). The branching structure is given for $\R^3$ by the rule that all edges not orthogonal to the $\hat{z}$ axis are oriented along the $+\hat{z}$ direction, while those in an $xy$-plane point along the $+\hat{x}$ or $+\hat{y}$ direction (note these edges are always parallel to one of these axes). This also induces a branching structure on the triangulation of $T^3$. Note this branching structure preserves the full translational symmetry along each of the spatial axes in addition to a translation symmetry generated by $(\frac{1}{2},\frac{1}{2},\frac{1}{2})$.
\begin{figure}[ht]
\center
{{
 \includegraphics[height=0.25\linewidth]{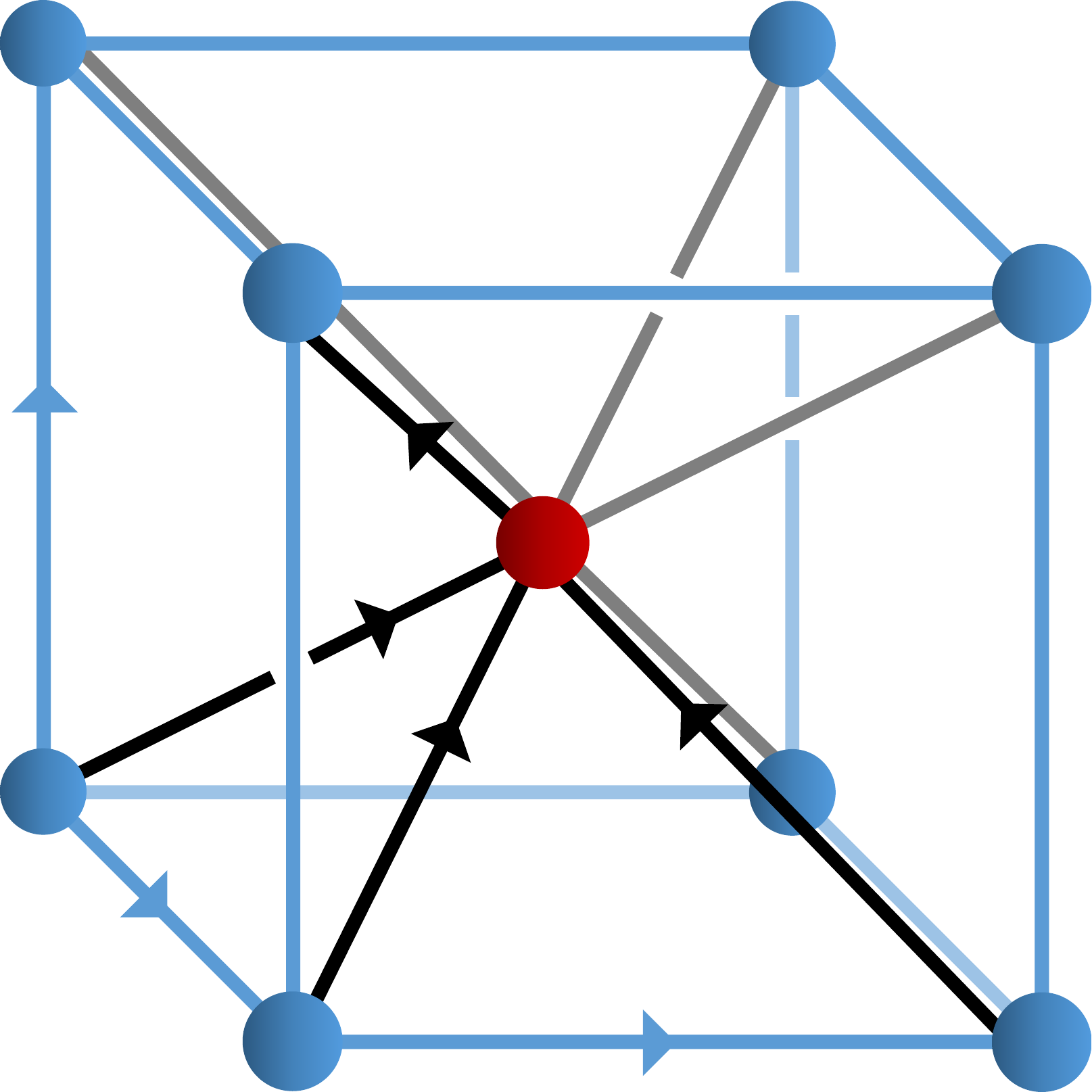}}} 
\caption{Triangulation and branching structure within a unit cell.}
\label{branch}
\end{figure}
Working on the permutohedron cellulation dual to the triangulation the model is defined on a Hilbert space consisting of a qudit degree of freedom for each vertex $\hilb = \bigotimes_{v\in \cellu} \C^N_v$. The Hamiltonian is a sum of identical terms each acting on 24 qudits in the boundary of a different permutohedron. 
To explicitly evaluate the Hamiltonian produced by the recipe of Ref.\cite{higherdto} we specify a numbering of the vertices on the boundary of a permutohedron depicted in Fig.\ref{Hordering}.
\begin{figure}[ht]
\center
{{
 \includegraphics[height=0.35\linewidth]{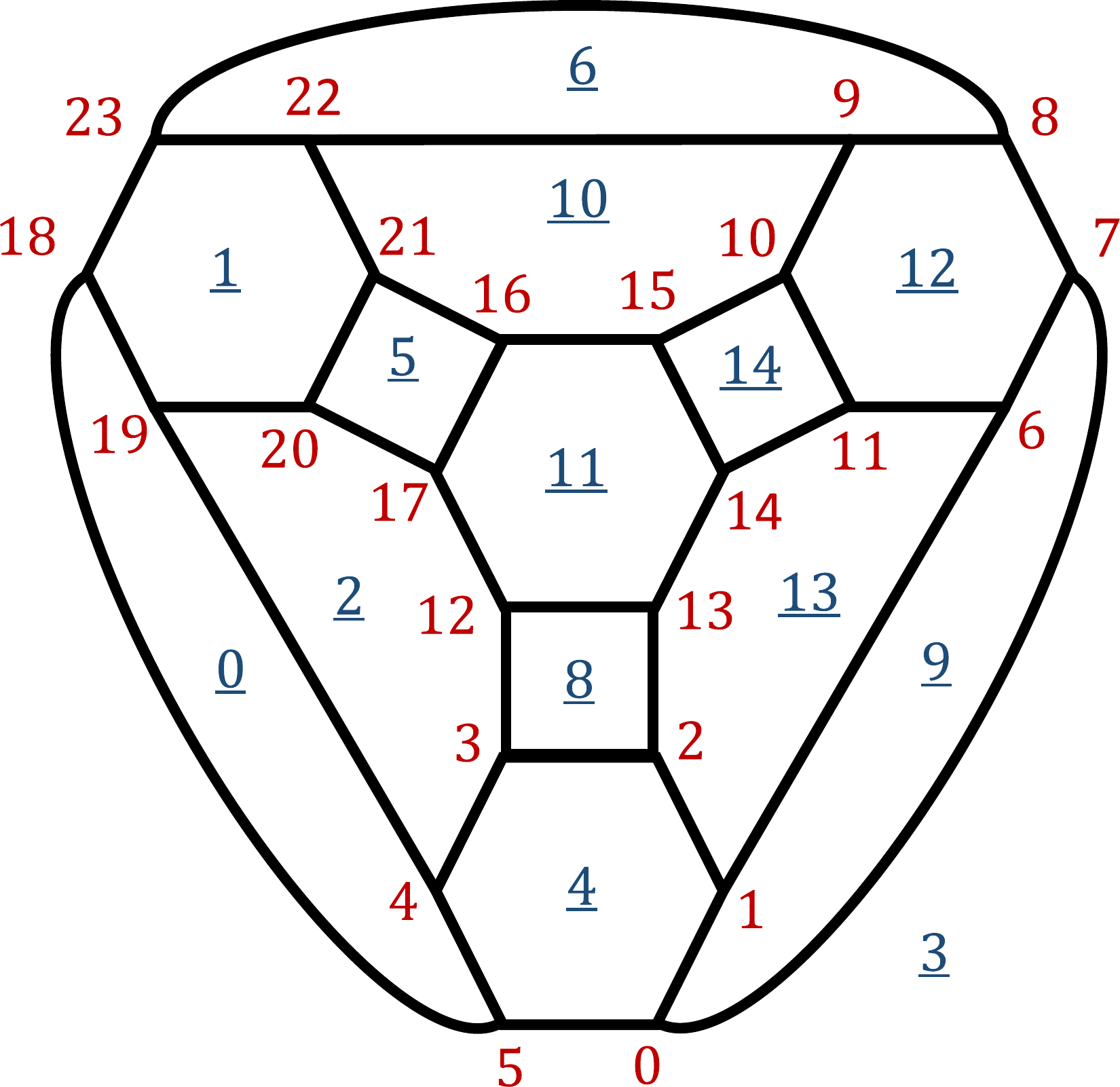} }} 
\caption{Ordering of the vertices (red) and dual vertices (underlined blue) on the boundary of a permutahedron flattened onto the plane.}
\label{Hordering}
\end{figure}
 The Hamiltonian is given by 
 \begin{align}\label{kham}
\bra{ \vect{j}} \openone{-}h_v \ket{\vect{i}} =& 
\frac{\omega^{\vect{i}_Z \cdot \vect{j}_X- \vect{i}_X \cdot \vect{j}_Z} }{N^{12}}
\delta_{i_8-i_9+i_{22}+i_{23}} \delta_{i_{10}+i_{11}+i_{14}+i_{15}} 
\delta_{i_{16}+i_{17}-i_{20}+i_{21}}  \delta_{i_0-i_1+i_2+i_3+i_4-i_5} \delta_{-j_0+j_1+j_6+j_7}
\nonumber \\
&\ \delta_{j_2+j_3+j_{12}-j_{13}} 
\delta_{j_4+j_5+j_{18} + j_{19}} \delta_{j_9-j_{10}+j_{15}+j_{16}+j_{21}-j_{22}}
\delta_{i_1-i_2+i_6-i_{11}+i_{13}-i_{14}-j_1-j_2+j_{11}+j_{14}} 
\nonumber \\
&\ \delta_{i_3+i_4+i_{12}+i_{17}-i_{19}-i_{20}+j_3+j_4-j_{17}+j_{20}} 
\delta_{i_6+i_7-i_9+i_{10}+j_6+j_7-j_8+j_9-j_{10}+j_{11}} 
\nonumber \\
&\ \delta_{-i_{12}+i_{13}+i_{15}+i_{16}+j_{12}-j_{13}+j_{14}+j_{15}+j_{16}+j_{17}} 
\delta_{-i_{18}+i_{19}+i_{21}+i_{22}+j_{18}+j_{19}-j_{20}+j_{21}-j_{22}-j_{23}}
 \end{align}
where the $i_n$ and $j_n$ labels are are in the either the $X$ or $Z$ basis as shown in Table.\ref{xzbasis}
\begin{table}[h]
\begin{center}
\begin{tabular}{c | c  c  c  c  c  c  c  c  c  c  c  c  c  }
n & 0 & 1 & 2 & 3 & 4 & 5 & 6 & 7 & 8 & 9 & 10 & 11 & 12 
\\ \hline
$i_{n}$ & X & Z & Z & X & Z & Z & X & Z & X & Z & X & X & Z
\\ 
$j_n$ & Z & X & X & Z & X & X & Z & X & Z & X & Z & Z & X 
\end{tabular}
\\	
~\hspace{.75cm}
\begin{tabular}{ c c  c  c  c  c  c   c  c  c c }
13 & 14 & 15 & 16 & 17 & 18 & 19 & 20 & 21 & 22 & 23 
\\ \hline
 X & X & X & Z  & X & Z & Z & Z & X & X & Z
\\ 
 Z & Z & Z & X & Z & X & X & X & Z & Z & X
\end{tabular}	
\end{center}
\caption{Basis choices for $\vect{i},\vect{j}$.}
\label{xzbasis}
\end{table}
and by $i$ in the X basis we mean $\ket{\hat i }:=N^{-\frac{1}{2}}\sum\limits_{k=0}^{N-1}\omega^{{-}i\cdot k }\ket{k}$. Also the notation $\vect{i}_Z$ indicates the subset of $i$ labels in the $Z$ basis and similarly for $j$ and $X$.

A matrix element of the Hamiltonian for fixed $\vect{i}$ and $\vect{j}$ as above gives rise to a tensor network multiplied by some nonzero weight. The tensor network is composed of delta tensors and $X$ matrices and computes a delta condition on the flatness of the configuration shown in Fig.~\ref{pflatness}.
\begin{figure}[ht]
\center
{{
  \includegraphics[height=0.4\linewidth]{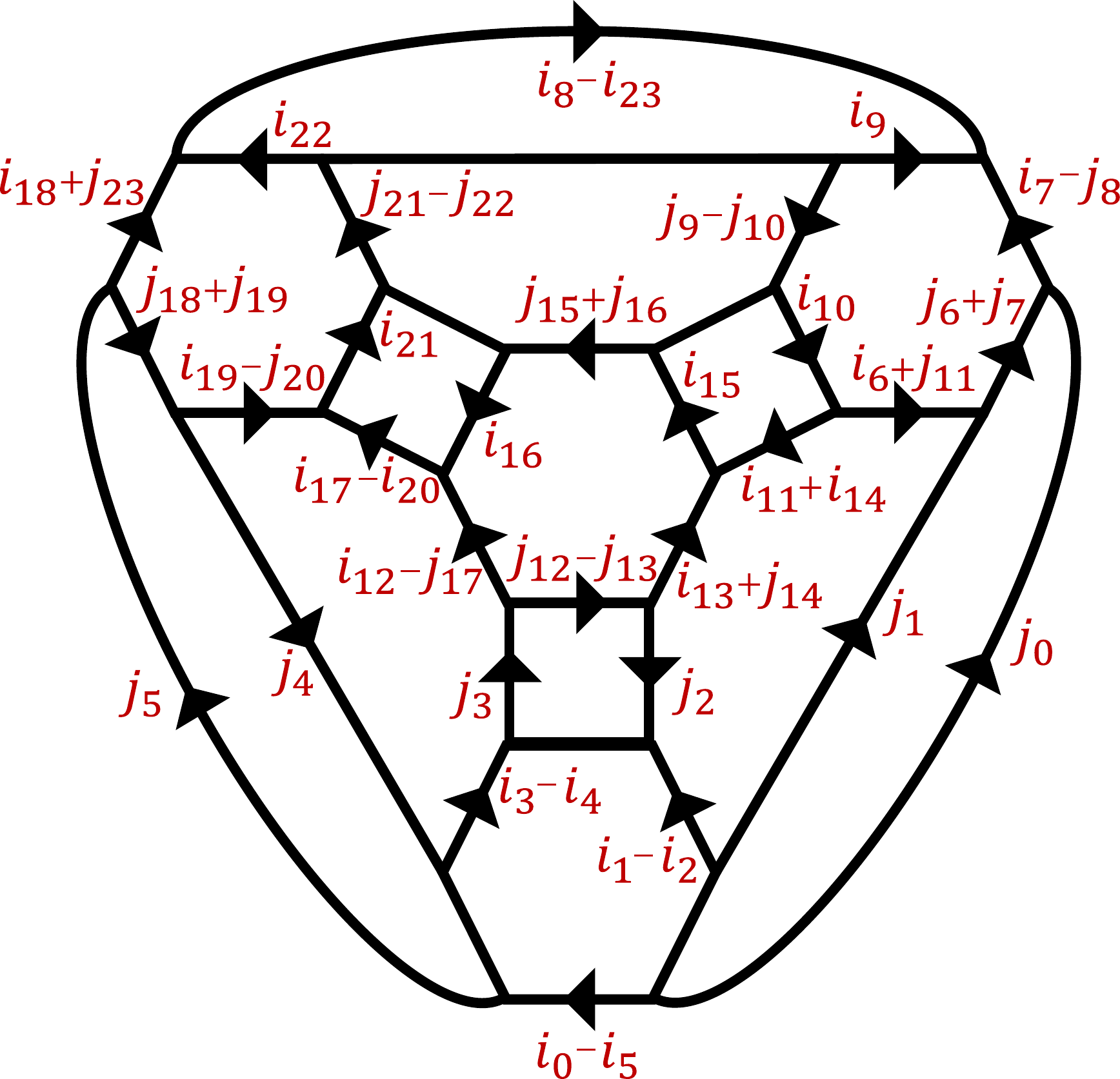}}} 
\caption{Configuration induced by fixing the input/output of a Hamiltonian term.}
\label{pflatness}
\end{figure}
Evaluating the delta flatness condition yields the Hamiltonian term in Eq.\eqref{kham}.

\subsection{Degeneracy, Statistics, and the Ground State Wave Function}

Thus far there is little known about the relation of the Kashaev TQFTs to other more established families of models. We conjecture that the Kashaev TQFTs and their Hamiltonian realizations are equivalent to Crane-Yetter-Walker-Wang  (CYWW) models for $\Z_N$ with a particular choice of braiding. In the case $N$ is odd it is a nondegenerate braiding on $\Z_N$ hence the CYWW model is an invertible TQFT and its partition functions depend only on Euler characteristic and signature. In the case of $N$ even the relevant braiding on $\Z_N$ is degenerate and hence the CYWW model is premodular. 

The partition functions calculated for Kashaev's TQFT support our conjecture as they are consistent with an invertible modular CYWW model for $N$ odd and consistent with a premodular CYWW in the even case, the simplest example being $\Z_2$ which we expect to be the CYWW model based on $sVec$.
More generally we conjecture the even case is equivalent to a CYWW model based on a premodular category with transparent subcategory: $\Z_2$ (with trivial braiding) for $N=0 \mod 4$, and $sVec$ for $N=2\mod 4$. 

 The partition function $\tft{S^1\times \nan}$ equals the dimension of the ground space $\vft{\nan}$ (note in this case normalization by Euler characteristic and signature are irrelevant as both are 0). The values of $\tft{S^1\times S^3}$ in the table indicate that Kashaev's TQFT is stable i.e. does not spontaneously break any symmetry. 

In accordance with our conjecture we expect the topological excitations of the Kashaev model to match those of CYWW. In particular for $N$ odd there are no deconfined particle like excitations in the bulk while there may be interesting loop like excitations. For $N$ even there is a species of point like fermionic excitations in the bulk as well as loop like excitations. Explicitly comparing the 3 loop braiding statistics of the loop excitations in Kashaev and CYWW is an interesting problem which we leave for future work. 

There is a PEPS representation of a ground state wave function for all Kashaev TQFTs which is obtained by following the procedure of Ref.\cite{higherdto}. Due to the Pachner move symmetry of the tensors used to construct this PEPS it will have a closed surface PEPO topological symmetry on the virtual level~\cite{higherdto,schuch2010peps,buerschaper2014twisted,burak2014characterizing,williamson2014matrix}. It should be possible to construct the excitations from this PEPO by following a higher dimensional generalization of the procedure laid out in Ref.\cite{bultinck2015anyons} for $(2+1)$D. Note the procedure of Ref.\cite{higherdto} also yields a MERA representation of the ground state wave functions.

\subsection{Example: $\Z_2$ Case}

The explicit tensor for the $N=2$ Kashaev TQFT is given by 
\begin{align}\label{Qtens2}
Q=\frac{1}{\sqrt{2}} \sum_{k,l,m\in\Z_2} (-1)^{km} \ket{k}\bra{k+m}\otimes\ket{l}\bra{l+m}\otimes\ket{m}
\end{align}
\begin{figure}[ht]
\center
{{
 \includegraphics[height=0.23\linewidth]{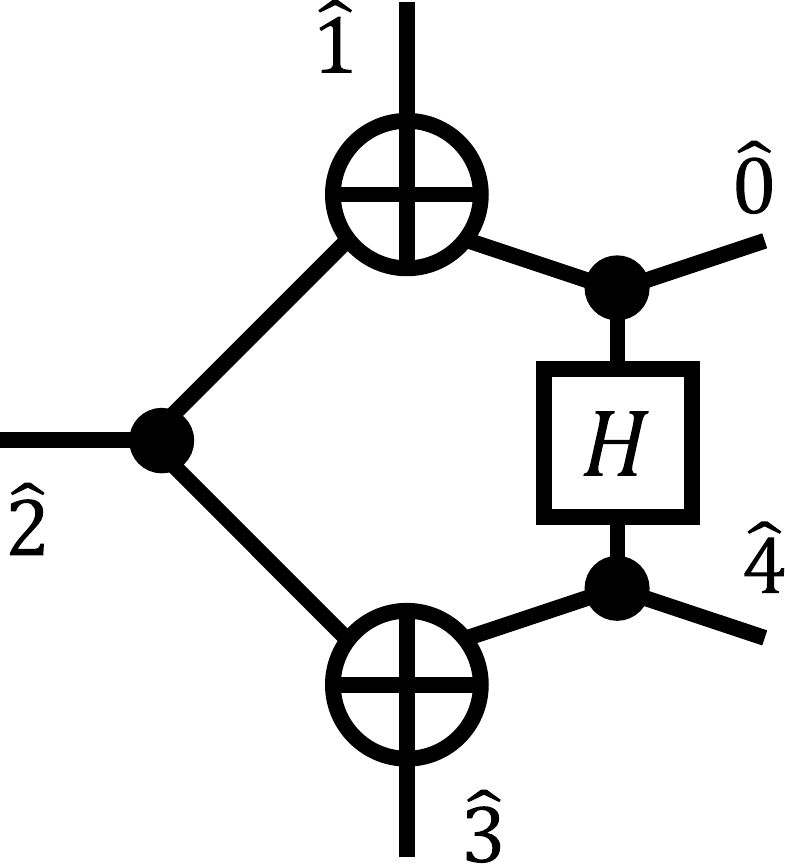}}} 
\caption{Kashaev's Q tensor for $N=2$.}
\label{Qpic2}
\end{figure}
We conjecture this $N=2$ Kashaev model is described by the CYWW model for the premodular category $sVec$ consisting of a vacuum and fermion particle. Hence we expect the partition functions to depend on the possible spin structures of a manifold. This is supported by the observations that $\tft{\CP^2}=0$ which corresponds to $\CP^2$ not admitting a spin structure, and we proceed to show $\tft{T^4}=8$ corresponding to 8 spin structures on the space manifold $T^3$.
\begin{lemma}\label{lem1}
For the commuting, projector, real-space blocking RG fixed point Hamiltonian term $h_v$ we have {\normalfont $\dim \vft{T^3}=\tft{T^4}= \tr{\pi \tft{v'*\st{v}}\,}  = \tr{\pi (\openone - h_v)}$} for the permutation $\pi$ given below.
\end{lemma}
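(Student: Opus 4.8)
The plan is to compute the partition function $\tft{T^4}$ by exhibiting $T^4$ as a mapping torus built from the identity bordism of $T^3$ equipped with the BCC triangulation, and then using the general fact established in Ref.\cite{higherdto} that the state sum of a mapping torus equals the trace of the corresponding operator. First I would recall that $\tft{S^1\times Y} = \dim\vft{Y}$, and more precisely that for any PL homeomorphism $\phi$ of $Y$ the state sum of the mapping torus $Y\times I / (x,0)\sim(\phi(x),1)$ equals $\tr{P_\phi\, U_\phi}$ where $U_\phi$ is the linear map induced by $\phi$ on $\vft{Y}$ and $P_\phi$ projects onto the ground space $\vft{Y}$ (here $\pi$ plays the role of $U_\phi$ and $\openone-h_v$ is the projector onto the local ground space, whose product over $v$ is the global projector). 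Since $T^4 = T^3\times S^1$ and the BCC triangulation of $T^3$ together with the chosen branching structure does not literally glue up to a product triangulation of $T^3\times S^1$ (the branching structure must be matched on the two ends), the identity map on the abstract lattice is realized only after applying a nontrivial relabeling $\pi$ of the $24$ boundary qudits of the permutohedron; this permutation is exactly the one induced by the translation generated by $(\tfrac12,\tfrac12,\tfrac12)$ noted earlier as a symmetry of the branching structure, or more precisely the combinatorial automorphism that carries the ``top'' copy of the triangulated $T^3$ back to the ``bottom'' copy inside one layer of $T^3\times I$. I would write $\pi$ out explicitly as a permutation of $\{0,\dots,23\}$ by tracking the dual-vertex ordering in Fig.\ref{Hordering} through one elementary time step.

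Next I would justify the chain of equalities in the statement term by term. The equality $\dim\vft{T^3}=\tft{T^4}$ is the stability/trace identity $\tft{Y\times S^1}=\dim\vft{Y}$ applied to $Y=T^3$, combined with the fact that the BCC triangulation is a legitimate triangulation of $T^3$ so the state sum on $T^3\times S^1$ computes this dimension regardless of which triangulation of the $S^1$ direction is used. The equality $\tft{T^4}=\tr{\pi\,\tft{v'*\st{v}}}$ comes from choosing the specific triangulation of $T^3\times S^1$ obtained by stacking a single layer of the identity bordism: this layer is exactly the local bordism $v'*\cl\st{v}$ glued over all vertices $v$, whose evaluation is $\prod_v \tft{v'*\cl\st{v}} = \prod_v(\openone - h_v)$ by Eq.\eqref{tftham}, and closing up the $S^1$ direction identifies the two boundary copies of $\vft{T^3}$ after applying the relabeling $\pi$, producing the trace with $\pi$ inserted. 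Here I would use that on a single permutohedron the operator $\tft{v'*\st{v}}$ coincides with $\openone-h_v$ as given in Eq.\eqref{kham}, which is precisely the content of the recipe of Ref.\cite{higherdto} specialized to Kashaev's $Q$ tensor. The last equality $\tr{\pi\tft{v'*\st{v}}} = \tr{\pi(\openone-h_v)}$ is then just this identification of operators.

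The main obstacle, I expect, is bookkeeping: correctly identifying the permutation $\pi$ and checking that the one-layer triangulation of $T^3\times I$ really does glue to a valid $\Delta$-complex triangulation of $T^4$ whose vertex ordering (branching structure) is consistent at the seam. Because the branching structure on the BCC lattice is only invariant under the half-integer translation, the naive identity gluing would violate the branching structure, and one must verify that $\pi$ is exactly the automorphism that restores consistency — equivalently, that the mapping-torus glued with $\pi$ is PL-homeomorphic to the standard $T^4$. Once $\pi$ is pinned down, everything else is a direct application of the framework summarized in Fig.\ref{pmistsum} and Eq.\eqref{tftham}; I would not expect any genuinely hard analytic step, only careful index tracking through Fig.\ref{Hordering} and Table.\ref{xzbasis}. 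A secondary subtlety is the normalization: since $\chi(T^4)=0$ and the number of vertices per fundamental domain is fixed, the Euler-characteristic and vertex-count prefactors $N^{3\chi/2}$ and $N^{-|\triang_0|}$ either cancel or are already absorbed into the definition of $h_v$, and I would note this explicitly so the trace is taken with no extra scalar factor.
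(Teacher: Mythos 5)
Your general framework (mapping--torus trace formulas, $\tft{Y\times S^1}=\dim\vft{Y}$) is fine, but your route to the operator identity does not actually produce the statement of the lemma, and the step you are missing is the central idea of the paper's proof. You build $T^3\times S^1$ by stacking the identity bordism over a \emph{macroscopic} $T^3$, so your chain of equalities lands on $\tr{\pi\prod_v(\openone-h_v)}$ --- a trace of the \emph{global} ground-space projector --- whereas the lemma asserts the trace of a \emph{single} local term $\openone-h_v$. These are not equal as operators, and your closing sentence (``the last equality \dots is then just this identification of operators'') silently conflates them. The missing observation is that one should take the minimal triangulation of $T^3$ by a single BCC fundamental domain, so that the star $\st{v}$ of the one vertex is all of $T^3$ and $v'*\cl\,\st{v}$ is literally a triangulation of the $4$-dimensional hypercube (the join of the $24$-triangle dual triangulation of the permutohedron's boundary $S^2$ with an interval). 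Imposing periodic boundary conditions on that single hypercube is what yields $T^4$, and $\pi$ is precisely the resulting identification of the $24$ boundary tetrahedra among themselves --- six $4$-cycles on $\{0,\dots,23\}$, not an involution between a ``top'' and ``bottom'' copy and not the half-integer translation $(\tfrac12,\tfrac12,\tfrac12)$ you propose. Once this is in place the lemma is immediate: $\openone-h_v=\tft{v'*\st{v}}$ is the tensor network on the hypercube, and closing its boundary with $\pi$ computes $\tft{T^4}$.

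Your secondary concerns (branching-structure consistency at the seam, the $N^{\frac{3}{2}\chi}$ and $N^{-|\triang_0|}$ normalizations) are legitimate things to verify, and the paper largely takes them on faith; but they are not a substitute for the hypercube-closure observation, without which there is no reason the trace of one local term over $24$ qudits should equal a global partition function.
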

\begin{proof}
We consider the translation invariant BCC triangulation of $T^3$ (or $\R^3$) dual to a tiling by regular 4-permutohedra. For concreteness we fix the branching structure to be that of Fig.\ref{branch} and the ordering of tetrahedra in $\st{v}$ to be that of Fig.\ref{Hordering}. However note any branching structure that is consistent with periodic boundary conditions may be used, and the ordering is totally arbitrary. Considering the Hamiltonian term of Eq.\ref{kham} we have $\openone-h_v=\tft{v'*\st{v}}$ and note this is a tensor network on the triangulation of a 4d hypercube. Conceptually it is clear that taking closed boundary conditions yields the partition function of the 4d torus. The gluing map that corresponds to closing the boundary conditions is specified by the permutation
\begin{align}
\begin{tabular}{ l  c l l l l l l l l }
$\pi$: & 0 & $\rightarrow$ & 13 & $\rightarrow$ & 20 & $\rightarrow$ & 9 & $\rightarrow$ & 0
\\
& 1 & $\rightarrow$ & 18 &  $\rightarrow$ & 17 & $\rightarrow$ & 10 & $\rightarrow$ & 1
\\
& 2 & $\rightarrow$ & 19 & $\rightarrow$ & 8 & $\rightarrow$ & 15 & $\rightarrow$ & 2
\\ 
& 3 & $\rightarrow$ & 6 & $\rightarrow$ & 23 & $\rightarrow$ & 16 & $\rightarrow$ & 3
\\ 
& 4 & $\rightarrow$ & 7 & $\rightarrow$ & 14 & $\rightarrow$ & 21 & $\rightarrow$ & 4
\\
& 5 & $\rightarrow$ & 12 & $\rightarrow$ & 11 & $\rightarrow$ & 22 & $\rightarrow$ & 5
\end{tabular}
\end{align}
abusing notation slightly we also use $\pi$ to denote the linear operator $\sum_{\{ i_n \}} \ket{\{i_{\pi(n)}\}}\bra{\{i_n\}}$. Then we have $\tft{T^4}=\tr{\pi \tft{v'*\st{v}}\,}  = \tr{\pi (\openone - h_v)}$. 
\end{proof}
We furthermore conjecture that a similar relation holds in all dimensions, following from the basic facts that the regular $(n+1)$-permutohedron tiles $n$ dimensional euclidean space (or the $n$ dimensional torus) and that the join of its dual triangulation of the $n$-sphere with a line (including its two endpoints) is a triangulation of the $(n+1)$ hypercube. By taking appropriate periodic boundary conditions, specified by a generalization of the permutation $\pi$ we find a similar relation as in $(3+1)$D.
\begin{proposition}
For Kashaev's model at $N=2$~\emph{\cite{kashaev2014simple} }
 $\dim \vft{T^3}=\tft{T^4}=8$
\end{proposition}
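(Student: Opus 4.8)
The plan is to combine Lemma~\ref{lem1} with an explicit finite $\Z_2$ computation. By Lemma~\ref{lem1} we have $\tft{T^4}=\tr{\pi(\openone-h_v)}$, where $\pi$ is the product of six disjoint $4$-cycles listed there and the matrix elements of $\openone-h_v$ are given in closed form by Eq.\eqref{kham}. At $N=2$ one has $\omega=-1$, every tensor factor is a qubit, the overall weight is $N^{-12}=2^{-12}$, and the phase $\omega^{\vect{i}_Z\cdot\vect{j}_X-\vect{i}_X\cdot\vect{j}_Z}$ becomes the sign $(-1)^{\vect{i}_Z\cdot\vect{j}_X+\vect{i}_X\cdot\vect{j}_Z}$. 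Thus the proposition reduces to evaluating one completely explicit number, and the task is to show it equals $8$.

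First I would expand the trace as a sum over the $2^{24}$ basis configurations, identifying the ``time-$1$'' labels $j_n$ with the permuted ``time-$0$'' labels while tracking the $X/Z$ basis assignments of Table~\ref{xzbasis}: for each of the $24$ identified pairs the gluing contributes either a Kronecker delta (when the two ends carry the same basis label) or a discrete-Fourier phase $2^{-1/2}(-1)^{(\cdot)(\cdot)}$ (when they carry opposite labels), the pattern being fixed by $\pi$ together with the table, and the tetrahedron ordering of Fig.~\ref{Hordering} is precisely what makes this consistent around each glued $3$-simplex. After summing out the labels that enter only linearly through deltas, the summand becomes a product of $\Z_2$ flatness constraints --- the same constraints that appear in Fig.~\ref{pflatness}, now on the induced $\Z_2$-colouring of the triangulated $T^4$ --- times a sign $(-1)^{q}$ for an explicit $\Z_2$-quadratic form $q$ assembled from the phase in Eq.\eqref{kham} and the Fourier insertions.

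What remains is the textbook evaluation of a $\Z_2$ stabilizer (Clifford) tensor network: the deltas cut out an affine subspace $A$, and $\tft{T^4}$ is a constant times the Gauss sum $\sum_{\vect{x}\in A}(-1)^{q(\vect{x})}$, which either vanishes or equals $\pm$ a power of $2$. Concretely I would row-reduce the linear system built from the $13$ deltas of Eq.\eqref{kham} together with the gluing relations to determine $\dim A$, restrict $q$ to $A$ and compute its rank, fix the overall sign, and collect the powers of $2$; the claim is that all of this amounts to $\tft{T^4}=2^{3}=8$. As a check this also gives $\dim\vft{T^3}$, since $\tft{T^4}=\tft{T^3\times S^1}$, and $8=|H^1(T^3;\Z_2)|$ is the number of spin structures on $T^3$, which is exactly the value predicted by the conjecture that the $N=2$ model is the Crane-Yetter-Walker-Wang model on $sVec$; it is moreover compatible with $\tft{S^3\times S^1}=1$ from Table~\ref{kashaevpf} (no spontaneous symmetry breaking).

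The main obstacle is organisational rather than conceptual. Equation~\eqref{kham} couples $24+24$ labels with an intricate $X/Z$ pattern, so the delicate steps are faithfully deciding which glued pairs produce deltas versus Fourier phases, assembling the full quadratic form $q$, and then carrying out the large but routine $\Z_2$ linear algebra without sign errors. One can use the translation symmetry of the BCC branching structure noted after Fig.~\ref{branch} to compress the reduction by hand, but the most reliable route is a short symbolic computation over $\Z_2$; either way the output is the single integer $8$.
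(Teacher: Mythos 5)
Your proposal follows essentially the same route as the paper's proof: invoke Lemma~\ref{lem1}, fix the external indices so that each $Q$ tensor collapses to a delta tensor with $X$ insertions, and thereby reduce $\tft{T^4}$ to $2^{-12}$ times a signed count of flat $\Z_2$ configurations on the permutohedron graph (your Gauss-sum over an affine subspace). The only content you delegate to a symbolic computation is carried out explicitly in the paper: it exhibits $15$ independent generators of the space of flat configurations and notes that the sign $(-1)^{\vect{x}\cdot\pi(\vect{x})}$ is identically $+1$ there, since the quadratic form is twice a $\pi$-symmetric bilinear expression and hence vanishes mod $2$, yielding $2^{15}/2^{12}=8$.
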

\begin{proof}
We make use of Lemma~\ref{lem1} and calculate $ \tr{\pi \tft{v'*\st{v}}\,} = \tft{T^4}$ using the tetrahedron labeling in Fig.\ref{Hordering} and the branching structure in Fig.\ref{T4branch}.
\begin{figure}[ht]
\center
{{
 \includegraphics[height=0.35\linewidth]{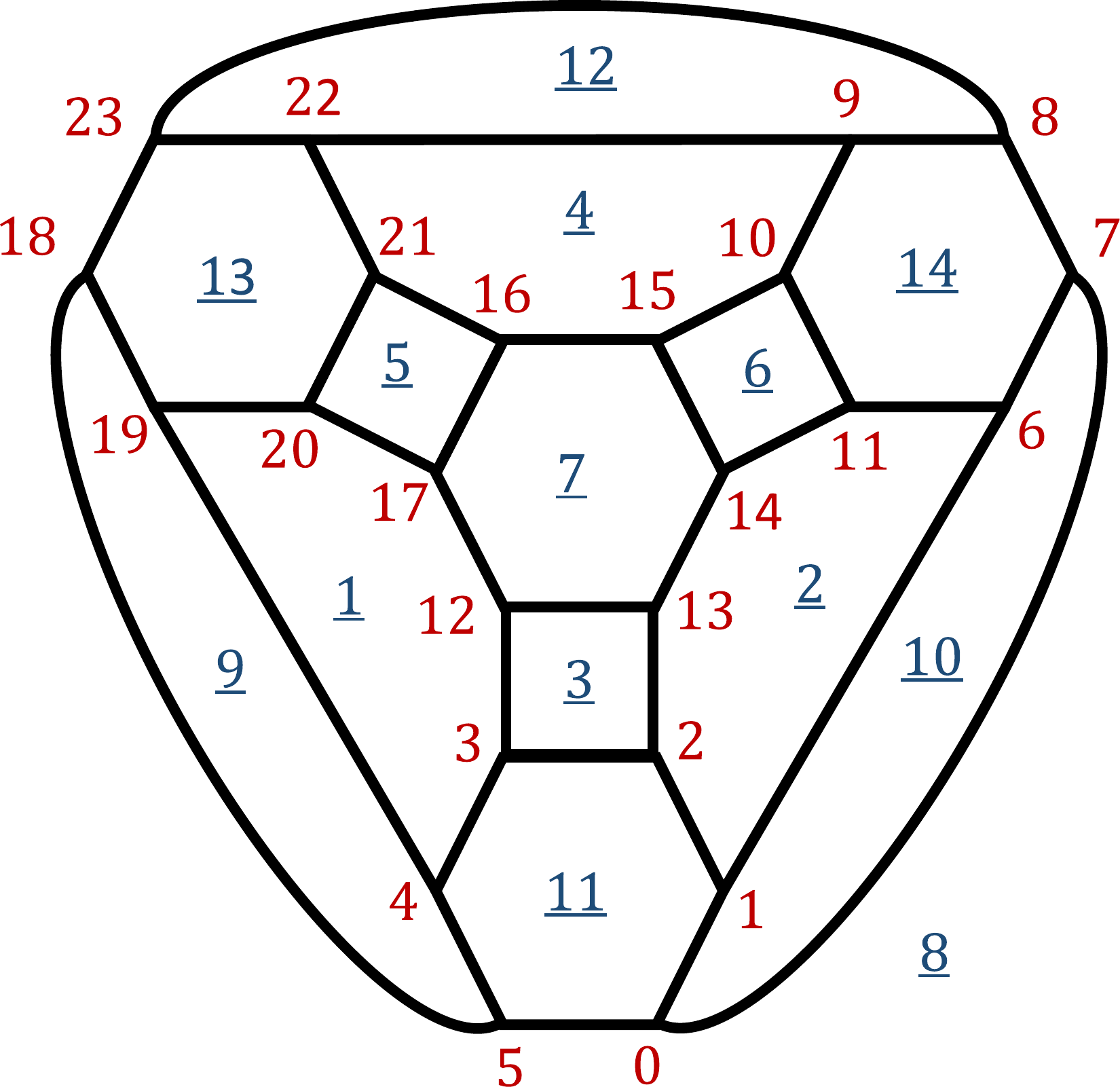}}} 
\caption{The branching structure used to calculate $Z[T^4]$.}
\label{T4branch}
\end{figure}
The quantity we want to compute is a sum over tensor networks on the 4-permutohedron graph
\begin{align}
\tr{\pi \tft{v'*\st{v}}\,} =\sum_{\{i_n\}}\bra{\{i_n\}} \tft{v'*\st{v}} \ket{\{i_{\pi(n)}\}} . 
\end{align}
Each tensor has 5 indices, 2 exterior indices corresponding to $i_n,i_{\pi(n)}$ and 3 internal indices of the tensor network on the permutohedron. With the branching structure we have chosen the fixed exterior labels correspond to the 0 and 4 indices of each individual tensor, fixing these labels we find the following 3 index tensor
\begin{align}\label{reduced}
\vcenter{\hbox{
\includegraphics[width=0.18\linewidth]{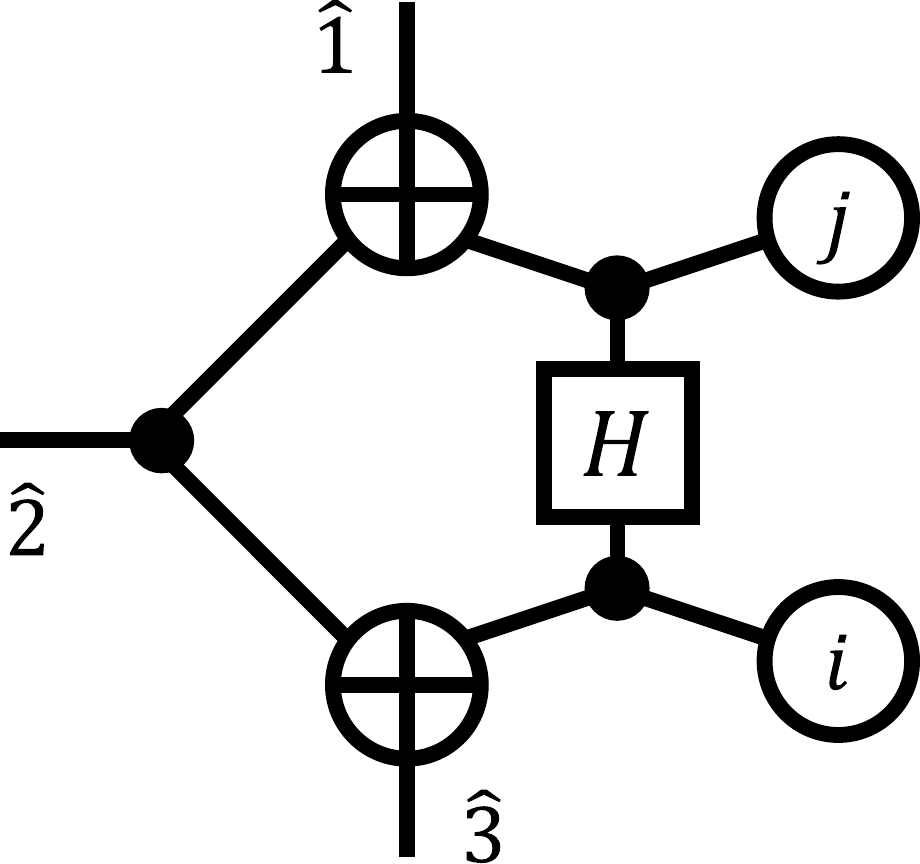}}}
\ = \frac{({-}1)^{ij}}{\sqrt{2}}\ \vcenter{\hbox{
\includegraphics[width=0.065\linewidth]{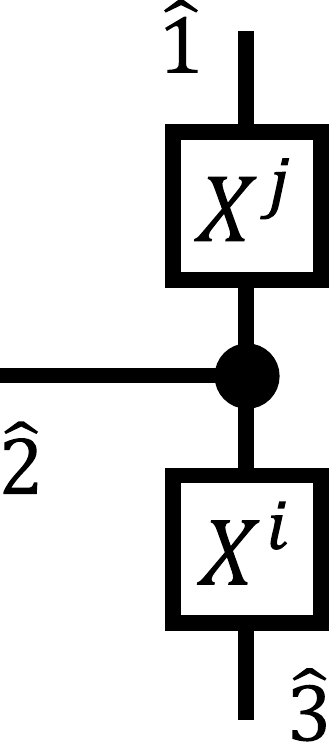}}}
\end{align}
which consists of a delta tensor along with $X$ matrices and a sign that is determined by the fixed external indices. 
From this equation it follows that the tensor network on the 4-permutohedron graph specifies a map $\sigma:\Z_2^{24}\rightarrow \Z_2^{36}$ from the fixed external indices to a $\Z_2$ configuration on the edges of the permutohedron, corresponding to the exponent of the $X$ matrix on each edge. 
Contracting the tensor network yields a nonzero result only for those connections which are flat, i.e. the sum around each plaquette is 0. 
To describe this precisely we denote the map from a $\Z_2$ edge configuration to the induced flux through each plaquette by $f$ then we have 
\begin{align}\label{eq9}
\bra{\{i_n\}} \tft{v'*\st{v}} \ket{\{i_{\pi(n)}\}} = \frac{(-1)^{\vect{i}\cdot\pi(\vect{i})}}{2^{12}} \delta(\sigma\circ f(\vect{i})=0)
\end{align}
where $\vect{i}\cdot\pi(\vect{i})=\sum_n i_n\cdot i_{\pi(n)}$ and the normalization factor comes from a product of $\frac{1}{\sqrt{2}}$ for each of the 24 tensors (See Eq.\ref{reduced}), a factor $\frac{1}{2}$ from the normalization $N^{-|\triang_0|}$, and a factor 2 from the contraction of delta tensors and $X$ matrices on the permutohedron graph. 
We proceed to show that all flat configurations contribute with a $+1$ sign and hence the problem is to count the number of them. First we use the relation $\sum_k\delta_{i,j,k}\delta_{k,l,m}=\delta_{i,j,l,m}$ to remove 12 edges of the permutohedron tensor network by contracting them, precisely those on which an $X$ never occurs, see Fig.\ref{reducedtn}. 
\begin{figure}[ht]
\center
{{
 \includegraphics[height=0.4\linewidth]{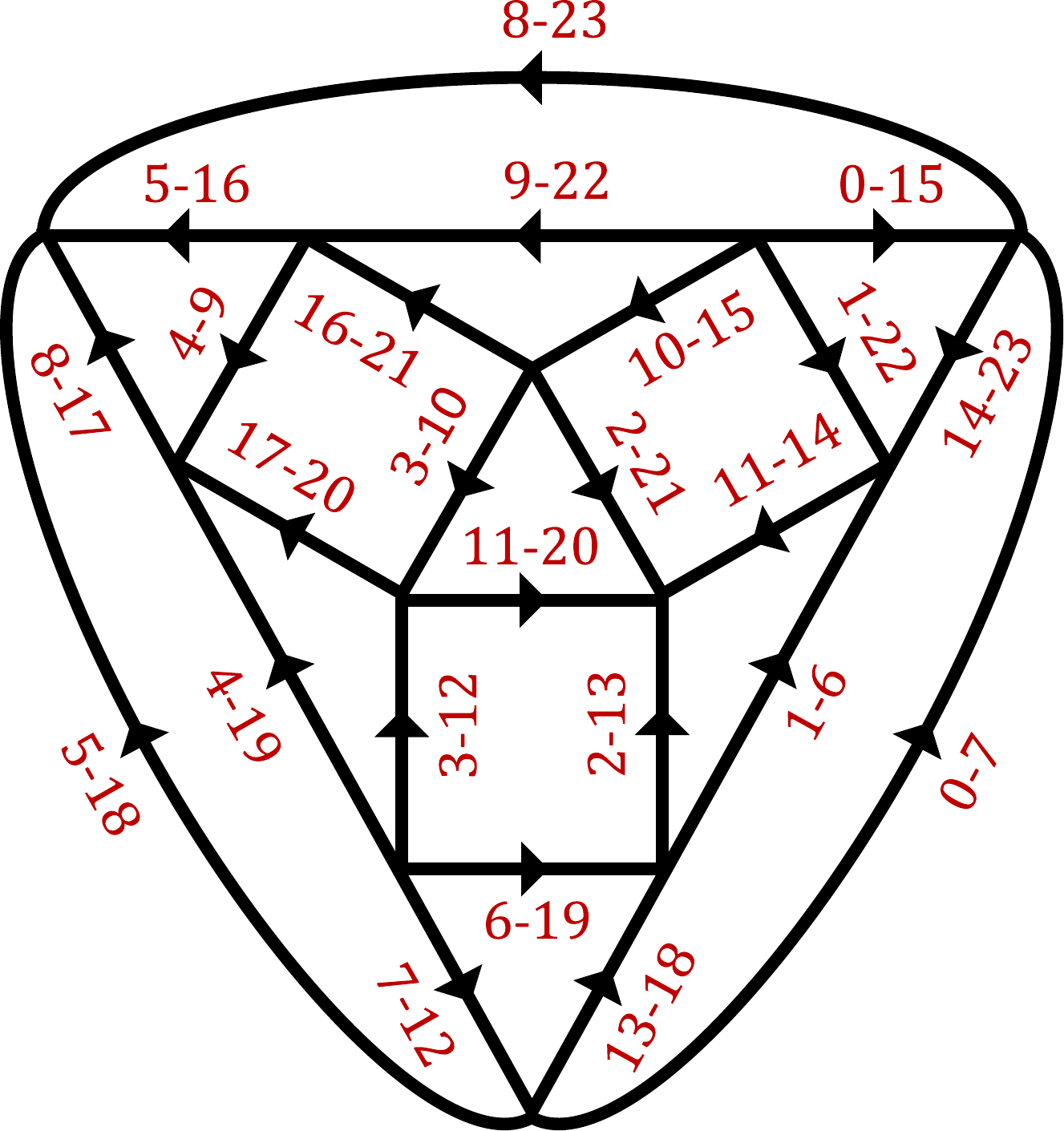}}} 
\caption{Configuration induced by the map $\sigma$. Value $n{-}m$ to be read as $i_n{-}i_m$.}
\label{reducedtn}
\end{figure}
Hence we can consider reduced $\Z_2$ maps $\sigma:\Z_2^{24}\rightarrow \Z_2^{24}$ and $f:\Z_2^{24}\rightarrow \Z_2^{13}$ by noting there are 24 edges and 14 faces (one of which is redundant as its value equals the sum of all other faces). 
Since we are considering flat configurations on a cellulation of $S^2$ there is only a single homology class and furthermore an injective map $d:\Z_2^{11}\rightarrow \Z_2^{24}$ from $\Z_2$ values on the 12 vertices (with one vertex value fixed) to a $\Z_2$ configuration on edges such that $\im\, d = \ker f$. A full list of generators for $\ker\, \sigma \circ f$ is
\begin{align}\label{generators}
\begin{tabular}{ l | r r r r r r r r r r r r r r  r }
$i$ & 0 & 1 & 2 & 3& 4& 5& 6& 7& 8& 9& 10& 11& 12& 13& 14
\\ \hline
 & 7 & 0 & 5 & 12 & 6 & 1 & 15 & 9 & 4 & 3 & 2 & 0 & 1 & 2 & 8 
\\
 & 18 & 23  & 8  & 19 & 13 & 14 & 22 & 16 & 17 & 20 & 11 &7 & 6 & 13 & 17
\\
$g^i_{(\cdot)}=1$ &  &  &  & & & & & & & & & 12 & 19 & 18 & 20
\\
&  &  &  & & & & & & & & & 3 & 4 & 5 & 11
\\
&  &  &  & & & & & & & & & 10 & 9 & 16 & 14
\\
&  &  &  & & & & & & & & & 15 & 22 & 21 &  23
\end{tabular}
\end{align}
one can verify that these are independent. Generators 11-14 correspond to the generators of $\ker\, \sigma$, while 0-10 lie in $(\ker \sigma)^\perp$ and their image under $\sigma$ generates $ \ker f$. Hence $2^{15}$ flat configurations contribute to the sum, and furthermore we have $\sum_n g^i_n\cdot g^j_{\pi(n)}=\sum_n g^j_n\cdot g^i_{\pi(n)}$. 
Any flat configuration $\vect{x}\in \ker \sigma\circ f$ is of the form $\vect{x}=\sum_i x_i  \vect{g}^i$ and the corresponding phase factor is 1 as $\vect{x} \cdot \pi(\vect{x})=0$ hence all such configurations contribute with a positive sign.
\end{proof}

\subsection{Back to the general case}

Our analysis of $Z[T^4]$ for $N=2$ largely carries over to the case of general $N$, the main modifications required involve keeping track of orientations and complex conjugations. 
The calculation proceeds as above up to Eq.\eqref{reduced} at which point we find
\begin{align}\label{reduced2}
\vcenter{\hbox{
\includegraphics[width=0.18\linewidth]{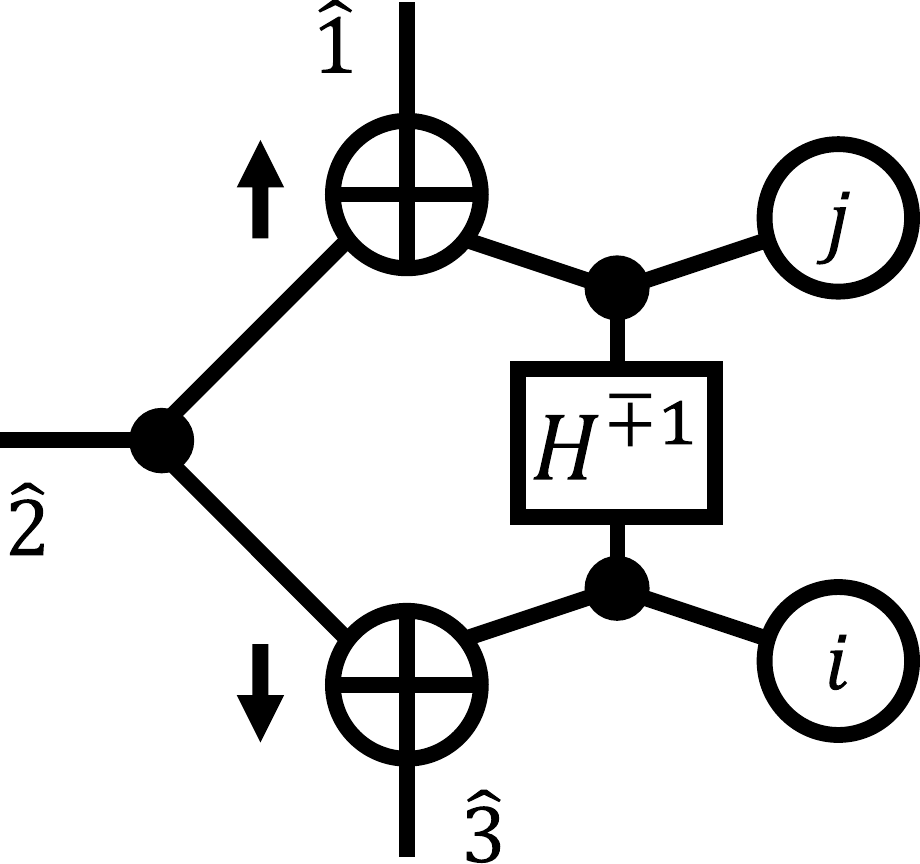}}}
\ =  \frac{\omega^{\pm ij}}{\sqrt{N}} \ 
\vcenter{\hbox{\includegraphics[width=0.066\linewidth]{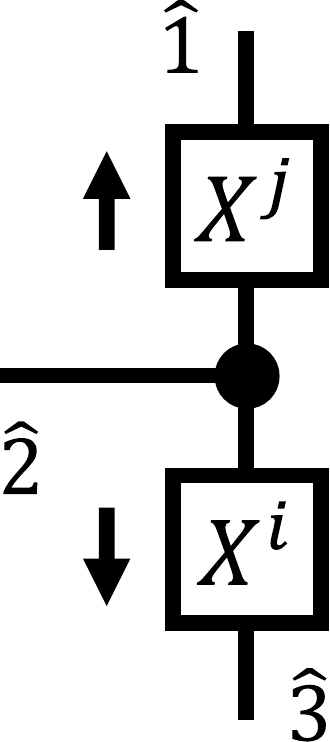}}}
\end{align}
where $\pm$ depends on the orientation of the simplex to which the tensor is associated. 
This leads to new maps $\sigma:\Z_N^{24}\rightarrow \Z_N^{24}$ and $f:\Z_N^{24}\rightarrow \Z_N^{13}$ which can be understood in terms of flat $\Z_N$ connections on a cellulation of $S^2$, see Fig.\ref{reducedtn}. 
In place of Eq.\eqref{eq9} we have
\begin{align}
\bra{\{i_n\}} \tft{v'*\st{v}} \ket{\{i_{\pi(n)}\}} = \frac{\omega^{\vect{i}\cdot\pi(\vect{i})}}{N^{12}} \delta(\sigma\circ f(\vect{i})=0)
\end{align}
where the dot product has been altered as follows $\vect{i}\cdot\pi(\vect{i})=\sum_n ({-}1)^n i_n\cdot i_{\pi(n)}$ and the normalization arises in the same way as above (with $N$ in place of 2).
Again we have a boundary map $d:\Z_N^{11}\rightarrow Z_N^{24}$ that satisfies $\im d = \ker f$ and the generators of $\sigma\circ f$ are the same as those given in Eq.\eqref{generators}, however now each has order $N$. 
Now $N^{15}$ flat configurations contribute to the sum, the new complication being that they may take on different phase values. We still have the identities $\vect{g}^i\cdot\pi(\vect{g}^i)=0$ and $\vect{g}^i\cdot\pi(\vect{g}^j)=\vect{g}^j\cdot\pi(\vect{g}^i)$ however they no longer guarantee the phase factors are trivial for $N>2$. 
An arbitrary element $\vect{x}\in \ker \sigma\circ f$ is of the form $\vect{x}=\sum\limits_i x_i  \vect{g}^i$ and we have 
\begin{align*}
\vect{x} \cdot \pi(\vect{x})&=\sum\limits_{i<j} 2x_i x_j \vect{g}^i \cdot \pi(\vect{g}^j)
 \\
&= 2 (x_0(x_8-x_5)+(x_1-x_9)(x_4-x_7)+(x_2-x_{10})(x_6-x_3))
.
\end{align*}
Hence the overall summation becomes
\begin{align}
Z[T^4]
&=\left( \sum_{i,j\in\Z_N} \frac{\omega^{2ij}}{N} 
\right)^3
\\
&=\left( \sum_{i\in\Z_N} \delta(2i=0 \text{ mod } N)
\right)^3
\end{align}
which takes the value $1$ for $N$ odd and $8$ for $N$ even.

\section{Lattice Model for UG{\lowercase{x}}BFC TQFT}
\label{gcrossed}

In this section we introduce tensor network and graphical calculus constructions of a class of commuting projector Hamiltonians for topological models based on the UGxBFC state sum TQFT.

Before discussing constructions of a Hamiltonian let us describe the Hilbert space on a general triangulation or equivalently the Poincar\'e dual simple polyhedra. 
Given a UGxBFC $\mathcal{C}_G^{\times}$ and $L$ the label set $L^{(1)}=G,\,L^{(2)}=\cat_G^\times,\, L^{(3)}=\cat_G^\times$.  
Suppose $Y$ is an oriented spatial $3$-manifold with a vertex ordered $\Delta$-complex triangulation $\complex$. $\Gamma_\complex$ denotes the dual simple polyhedron.  
We take a resolution of each 4-valent vertex in $\Gamma_\complex$ into a pair of trivalent vertices where the edges dual to the faces $\hat 0, \hat 2$ and $\hat 1, \hat 3$ meet, denote the resulting polyhedron as $\Gamma_{\complex}'$.
 Let $V,E,F$ denote the sets of vertices, edges, and faces of $\Gamma_{\complex}'$ respectively.  A configuration on $\Gamma_{\complex}'$ is a labeling of each edge by a defect label $a\in\cat_G^\times$, each face by a group element $g\in G$, and each vertex by a basis element in $\oplus_{(a,b,c)\in (\cat_G^\times)^3}{\mathbb{C}^{N_{ab}^c}}$.  Hence the total local Hilbert space is
\begin{align}\label{hilbsp}
 \mathcal{H}(Y,\complex)=\bigotimes_E{\C[\cat_G^\times]}\bigotimes_V( \bigoplus_{ \substack{(a,b,c) \\ \in  (\cat_G^\times)^3}}{\C^{N_{ab}^c}})\bigotimes_F \C[G].
\end{align}

\subsection{Tensor Network Approach}
\label{gcrossedtnham}

The recipe outlined in Ref.\cite{higherdto}  constructs a local commuting projector Hamiltonian for the UGxBFC state sum on any triangulation from a set of $15j$-symbols. 
The $15j$-symbols for the UGxBFC $\ftj^\pm_{s(\simp_4)}$ are shown in Fig.\eqref{15jsymbols} they take as input a configuration $s=(g,a):(\complex^{(1)},\complex^{(2)}\cup\complex^{(3)})\rightarrow (G,\cat_G^\times)$ of group elements $g_e\in G$ on edges, and defects $a_\simp,a_{\simp_3} \in \cat_G^\times$ on triangles and tetrahedra of a 4-simplex $\simp_4$ and return a value in $\C$. 
$\ftj^\pm_{s(\simp_4)}$ only take nonzero values on admissible configurations, those satisfying the triangle constraints $a_\simp \in\cat_{dg_\simp}$, where $dg_{012}=\bar{g}_{02}g_{01}g_{12}$, and the tetrahedra constraints $N_{a_{\hat{1}}\, {}^{\bar{g}_{23}}a_{\hat{3}}}^{a_{0123}}\neq 0 \neq N_{a_{\hat{2}}a_{\hat{0}}}^{a_{0123}}$.

The Hamiltonian takes the form $H=\sum\limits_v h_v$ where the term at vertex $v$ is given by
\begin{align}\label{k2ham}
h_v=\openone- \frac{(D^2)^{-|(\st{v})_1|}}{|G|} \sum_{\gamma,\alpha} \frac{\prod\limits_{\simp_2\in\subc}d_{\alpha(\simp_2)}}{\prod\limits_{\simp_3\in\subc}d_{\alpha(\simp_3)}} B_v^{\gamma, \alpha}
\end{align}
where $\subc = \interior(v'*\st{v})$ can be thought of as a small piece of spacetime,  with $v'$ an auxiliary copy of vertex $v$ at the next time step. The elements $\gamma\in G,\, \alpha: \subc^{(2)}\cup\subc^{(3)} \rightarrow \cat_G^\times$ label the timelike edge, triangles and tetrahedra in $\subc$.

The individual summands are given by
\begin{align}
&B_v^{\gamma,\alpha}=\sum_{\config,\config'} \prod_{\substack{\simp_i \in \lk{v}  \\ i>0}} \delta_{\config(\simp_i),\config'(\simp_i)}
\frac{\prod\limits_{\simp_2\in\st{v}}\sqrt{d_{a(\simp_2)}d_{a'(\simp_2)}}}{\prod\limits_{\simp_3\in\st{v}}\sqrt{d_{a(\simp_3)}d_{a'(\simp_3)}}}
 \prod_{\simp_4\in \subc}\ftj^{\orient{\simp_4}}_{s(\simp_4)}
 \bigotimes_{e,\simp,\simp_3\in \st{v}} \ket{g_e',a_\simp',a_{\simp_3}'} \bra {g_e,a_\simp,a_{\simp_3}}
\end{align}
where $\config=(g,a),\, \config'=(g',a')$ denote configurations on the triangulated spatial slices ${ \cl (\st{v}),}\, {v'*\lk{v}}$ and $s$ denotes the full spacetime configuration $\{S,S',(\gamma,\alpha)\}$ on $\cl(\subc)$. 
Note the variables in $\lk{v}$ are fixed control qudits for the operator $B_v^{\gamma,\alpha}$, while the variables in $\st{v}$ fluctuate.

The Hamiltonian built in this way contains only vertex terms. These terms also enforce the flatness (admissibility) triangle and tetrahedra constraints on basis states with nonzero ground space overlap. Contrast this with the more conventional way of writing fixed point Hamiltonians as a sum of separate vertex fluctuation and plaquette flatness terms. 

On the BCC triangulation with the branching structure chosen in Fig.\ref{branch} the Hamiltonian is a translationally invariant sum with a single type of term. 
Note the case described explicitly above was assuming no multiplicity in the fusion of $\cat_G^\times$, to include possible multiplicities one simply includes the corresponding fusion multiplicity labels together with the defect label on each tetrahedron.

\subsection{Graphical Calculus Approach}

Another approach to constructing the Hamiltonian closer to that of Ref.\cite{walker2012} is to use the graphical calculus of the UGxBFC to define the local terms. 
First we consider two different cellulations of the $3$-torus $T^3$: the cellulation used in Ref.\cite{walker2012}, and the simple polyhderon of the permutohedron cellulation dual to the BCC triangulation in Fig.\ref{branch}.  Both celluations have the full translational symmetry which keep the Hamiltonians relatively simple.
We then explain the general construction on a simple polyhedra.

\subsubsection{The Hamiltonian on the $3$-Torus: resolved cubic lattice}

We first focus on the simple case where $G$ is abelian and all group elements and defects are self inverse, this removes the need to keep track of edge orientations. Note it is simple to generalize to the non self inverse case by keeping track of edge orientations, however extending to nonabelian $G$ requires nontrivial work as the cellulation is not a simple polyhedra. 

The cellulation $\Gamma$ is given by the following resolution of the cubic lattice into trivalent vertices. 
\begin{align*}
{
\vcenter{\hbox{
\includegraphics[scale=1]{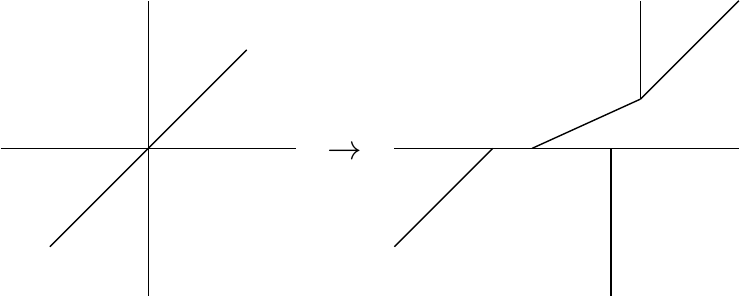}
}}
} 
\end{align*}
Group degrees of freedom live on the plaquettes of the cubic lattice, and defect degrees of freedom live on the edges of the resolved lattice. 
Hence the Hilbert space is given in Eq.\eqref{hilbsp}.

The Hamiltonian is given by 
\begin{align}\label{unorientedham}
H_{\Gamma}=  -\sum_{v\in V} A_v - \sum_{e \in E} A_e  -\sum_{f\in F} \sum_{ \substack{g\in G, \\ a\in \cat_g}} \frac{d_a}{D^2} B_f^{g,a}  
 - \sum_{c\in C} \sum_{g\in G} B_c^g
\end{align}
where $V,E,F,C$ are the vertices, edges, faces and 3-cells of $\Gamma$. 
The $A_v$ term enforces the constraint that each triple of defects $a,b,c$ meeting at a vertex is an admissible fusion $N_{ab}^c\neq 0$. 
\begin{align*}
A_v \left( \vcenter{\hbox{
\includegraphics[scale=1]{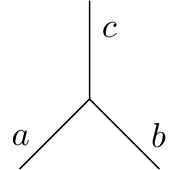}
}} \right) = \delta_{ab}^c
\vcenter{\hbox{
\includegraphics[scale=1]{figs/tikzfig3}
}}
\end{align*}
$A_e$ enforces the constraint that each defect lies in the sector given by the boundary of the group configuration on the adjacent faces i.e. $a_e\in \cat_{(\partial g)_e}$. 
\begin{align*}
A_e\left ( \vcenter{\hbox{
\includegraphics[scale=1]{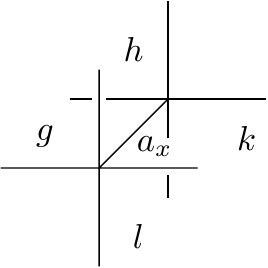}
}} \right)
= \delta_{x,ghkl} \vcenter{\hbox{
\includegraphics[scale=1]{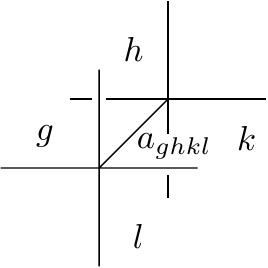}
}}
\end{align*}
$B^g_c$ fluctuates the group configuration adjacent to $c$ in the conventional way $h_f g$, for $f\in\partial c$. 
\begin{align*}
B^g_c \left( \vcenter{\hbox{
\includegraphics[scale=1]{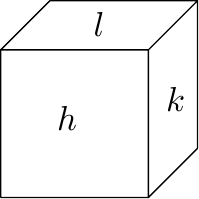}
}} \right)
=
\vcenter{\hbox{
\includegraphics[scale=1]{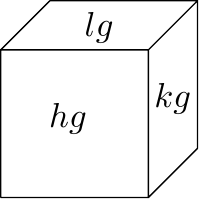}
}}
\end{align*}
The $B_f^{g,a}$ term fluctuates the group and defect configuration adjacent to $f$ by fusing a loop of defect $a$ in to the defects on $\partial f$ and simultaneously multiplying $h_f g$. This term is reminiscent of the plaquette term in the Walker-Wang model and matches it exactly in the case $G=\{1\}$. 

We proceed to calculate the effect of the $B_f^{g,a}$ term using the diagrammatic calculus of the UGxBFC, we use the compressed notation $F^{abc}_{d;ef}=[F^{abc}_d]_e^f$. 
Consider an initial configuration $\sigma_{\text{I}}$ depicted on the left; first the edges crossing $f$ are moved aside
\begin{align*}
{
\vcenter{\hbox{
\includegraphics[scale=1]{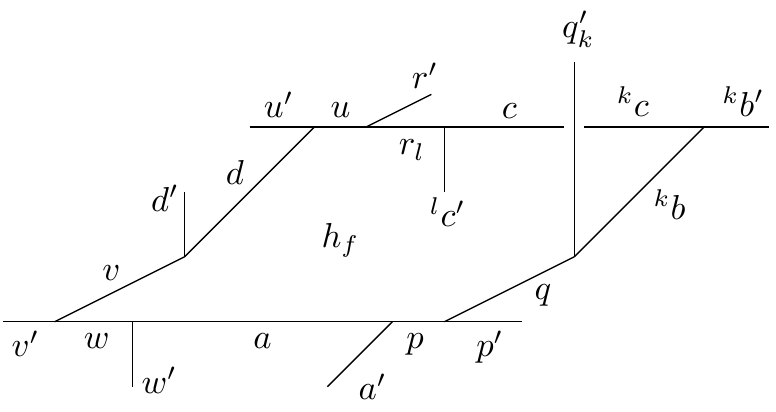}
}}
}
\rightarrow
\vcenter{\hbox{
\includegraphics[scale=1]{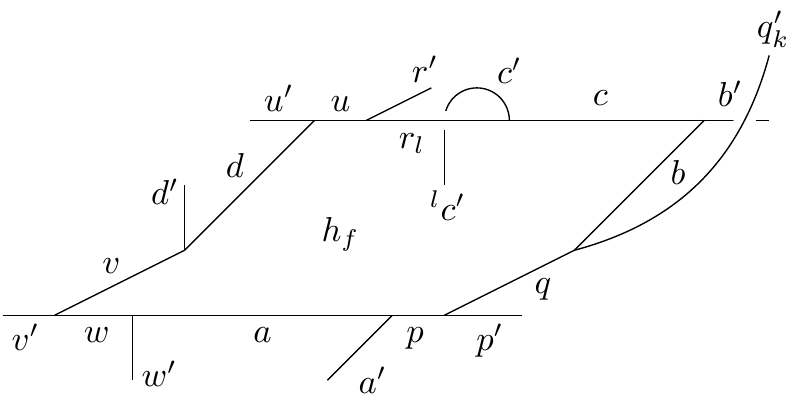}
}}
\end{align*}
 unlike the Walker-Wang model this invokes a factor of $U_{k}(b,c;b')$ on top of the braiding symbols $\overline{R_q^{q' {}^{k}b}} R_c^{{}^{l}c' r}$.
Now acting with $B_f^{g,a}$ introduces a loop of defect $a$ onto plaquette $f$:
\begin{align*}
B_f^{g,a} \left(
\vcenter{\hbox{
\includegraphics[scale=.9]{figs/tikzfig7}
}}
\right)
= 
\vcenter{\hbox{
\includegraphics[scale=.9]{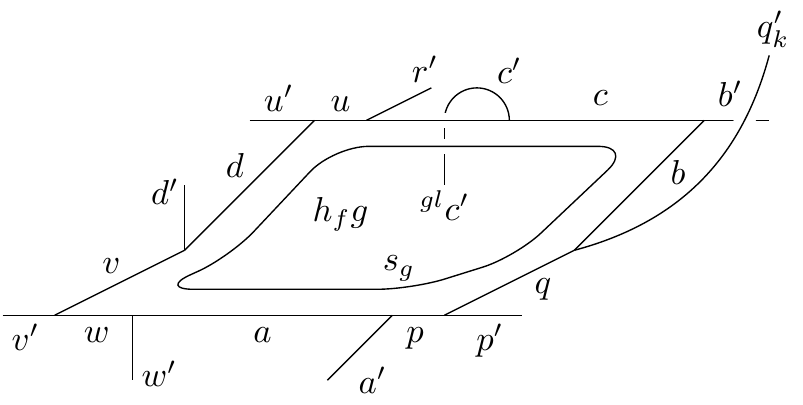}
}}
\end{align*}
which we proceed to fuse in to $\partial f$:
\begin{align*}
\vcenter{\hbox{
\includegraphics[scale=1]{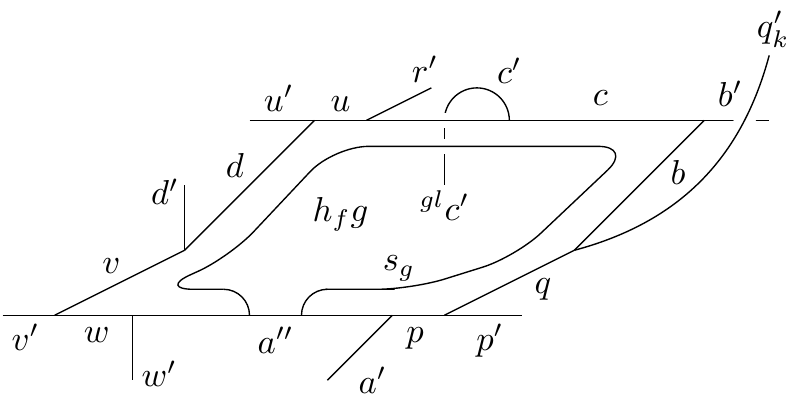}
}}
\rightarrow
\vcenter{\hbox{
\includegraphics[scale=1]{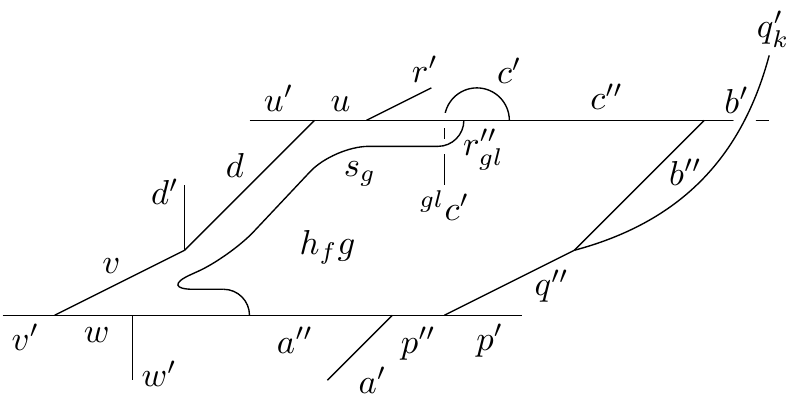}
}}
\end{align*}
this induces a factor $F^{a''sp}_{a';ap''}F^{p''sq}_{p';pq''}F^{q''sb}_{q';qb''}F^{b''sc}_{b';bc''}F^{c''sr}_{c';cr''}$.
The next step induces a factor $\eta_{c'}(l,g)$
\begin{align*}
\vcenter{\hbox{
\includegraphics[scale=1]{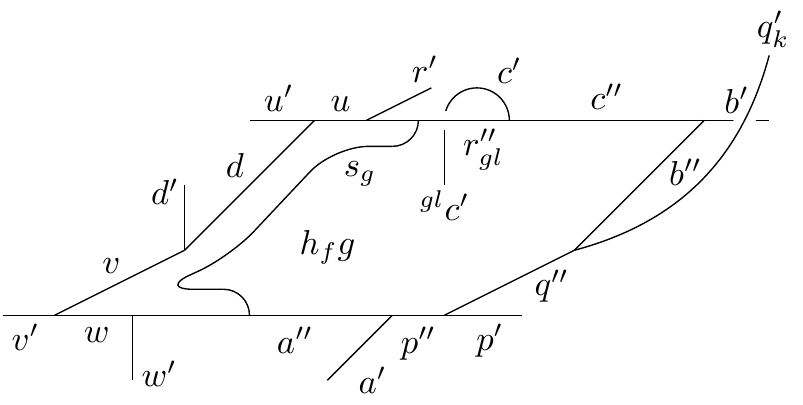}
}}
.
\end{align*}
Then five more $F$-moves leads to 
\begin{align*}
\vcenter{\hbox{
\includegraphics[scale=1]{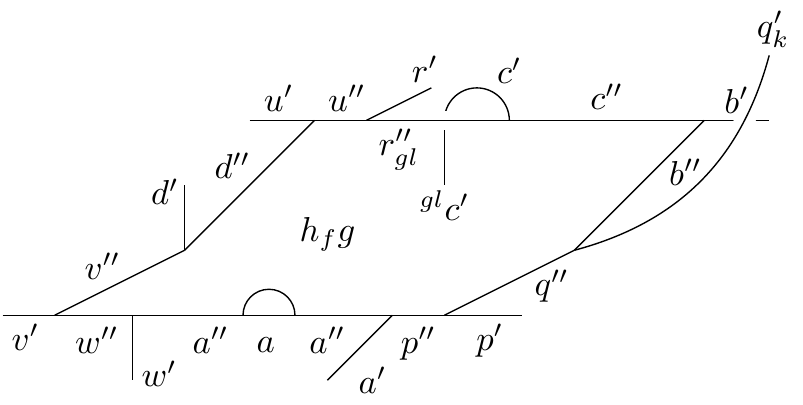}
}}
\end{align*}
along with a factor $F^{r''su}_{r';ru''}F^{u''sd}_{u';ud''}F^{d''sv}_{d';dv''}F^{v''sw}_{v';vw''}F^{w''sa}_{w';wa''}$. 
Finally restoring the lattice to its original position we find the final configuration $\sigma_\text{F}$ shown below
\begin{align*}
\vcenter{\hbox{
\includegraphics[scale=1]{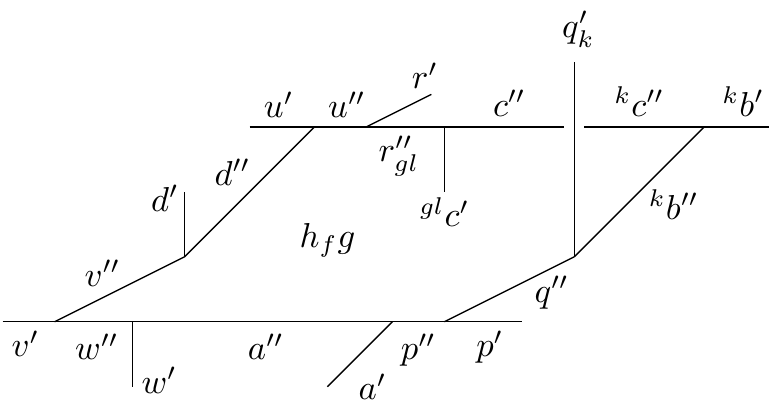}
}}
\end{align*}
along with a factor  $U^{-1}_{k}(b'',c'';b') R_{q''}^{q' {}^{k}b''} \overline{R_{c''}^{{}^{gl}c' r''}}$ note $r''\in\cat_{gl}$.
Hence the full plaquette term is
\begin{align}
\bra{\sigma_\text{F}} B_f^{g,a} \ket{\sigma_\text{I}} =& U_{k}(b,c;b') U^{-1}_{k}(b'',c'';b') \eta_{c'}(l,g) 
 \overline{R_q^{q' {}^{k}b}} R_c^{{}^{l}c' r} R_{q''}^{q' {}^{k}b''} \overline{R_{c''}^{{}^{gl}c' r''}} 
 \nonumber  \\
 &F^{a''sp}_{a';ap''}F^{p''sq}_{p';pq''}F^{q''sb}_{q';qb''}F^{b''sc}_{b';bc''}
F^{c''sr}_{c';cr''}
 F^{r''su}_{r';ru''}F^{u''sd}_{u';ud''}F^{d''sv}_{d';dv''}F^{v''sw}_{v';vw''}F^{w''sa}_{w';wa''} 
 \end{align}
 which differs noticeably from WW in the appearance of the factors $U_{k}(b,c;b') U^{-1}_{k}(b'',c'';b') \eta_{c'}(l,g) $.

\subsubsection{The Hamiltonian on Simple Polyhedra}

We now turn to the general case of an arbitrary finite $G$ and a UGxBFC $\cat_G^\times$, this requires a cellulation $\Gamma_\complex$ dual to a triangulation $\complex$ with branching structure and keeping track of edge orientations.
 $\Gamma_\complex'$ is then the cellulation where each 4-valent vertex $v$ has been resolved into a pair of trivalent vertices $v^+,\, v^-$, as in Eq.\eqref{hilbsp}. The orientations of dual edges in $\Gamma_\complex $ are specified as follows: for vertices dual to positively oriented tetrahedra, the dual $\hat 0,\hat 2$ edges point out and $\hat 1, \hat 3 $ point in, and vice versa for vertices dual to negatively oriented tetrahedra. The extra edges introduced in $\Gamma_\complex'$ point from the $\hat 0 ,\hat 2$ vertex to the $\hat 1, \hat 3$ vertex in a resolved vertex dual to a positively oriented tetrahedra, and vice versa for negative. 

The Hamiltonian is similar to that in Eq.\eqref{unorientedham} 
\begin{align}\label{orientedham}
 H_{\Gamma}=  -\sum_{v\in V} (A_{v^+}+ A_{v^-}) - \sum_{e \in E} A_e 
 -\sum_{f\in F} \sum_{ \substack{g\in G, \\ a\in \cat_g}} \frac{d_a}{D^2} B_f^{g,a}   - \sum_{c\in C} \sum_{g\in G} B_c^g 
\end{align}
where $V,E,F,C$ are the vertices, edges, faces, and 3-cells of $\Gamma_\complex$.

Writing $0123$ for the tetrahedron dual to $v$ then the $A_{v^+}$ term enforces the admissibility of the fusion $N_{a_{\hat 2}a_{\hat 0}}^{a_{0123}} \neq 0$, while the $A_{v^-}$ term enforces a twisted fusion constraint $N_{a_{\hat 1} {}^{\bar{g}_{23}} a_{\hat 3}}^{a_{0123}}\neq 0$. Both $A_{v^+}$ and $A_{v^-}$ terms project onto the subspace spanned by locally admissible configurations such as the following (depicted on a triangulation and its dual cellulation)
\begin{align*}
\vcenter{\hbox{
\includegraphics[scale=1]{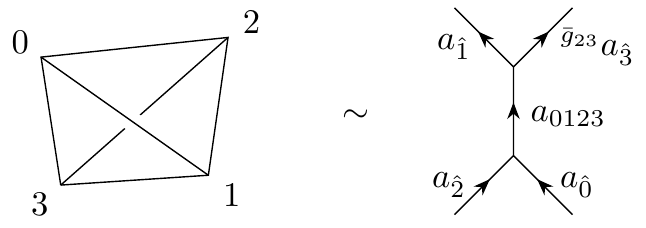}
}}
.
\end{align*}
The operator $A_e$ enforces the constraint that each defect lies in the sector specified by the difference of the adjacent group variables, for an edge $e$ dual to the triangle $012$ the constraint reads $a_{012}\in\cat_{\hat{g}_{02}g_{01}g_{12}}$. Hence the  $A_e$ term projects onto the subspace spanned by admissible configurations such as: 
\begin{align*}
\vcenter{\hbox{
\includegraphics[scale=1]{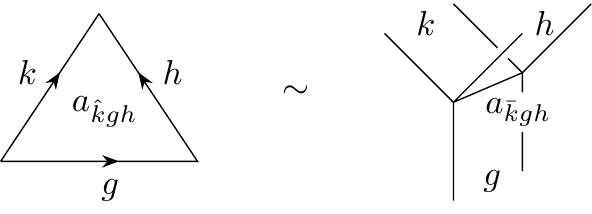}
}}
.
\end{align*}
The term $B_c^g$ fluctuates the group configuration on faces $f\in \partial c$. To be more specific we work on the dual triangulation, then each group variable on an edge $e^-$ pointing towards $v_c$ (the vertex dual to $c$) transforms as $ h_{e^-} \bar g$, while a group variable on an edge $e^+$ leaving $v_c$ transforms as $g h_{e^+} $. Hence the operator $B_c^g$ maps configurations as follows
\begin{align*}
\vcenter{\hbox{
\includegraphics[scale=1]{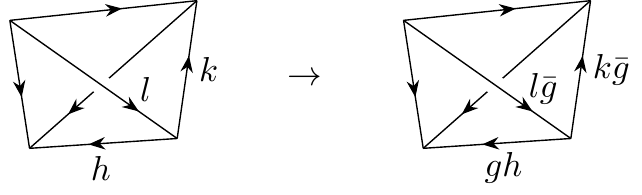}
}}
 .
\end{align*}
Additionally the defects on the five edges associated to the dual of a tetrahedra having $v_c$ as its highest ordered vertex (all edges pointing in) are acted upon by $h$ as follows
\begin{align*}
\vcenter{\hbox{
\includegraphics[scale=1]{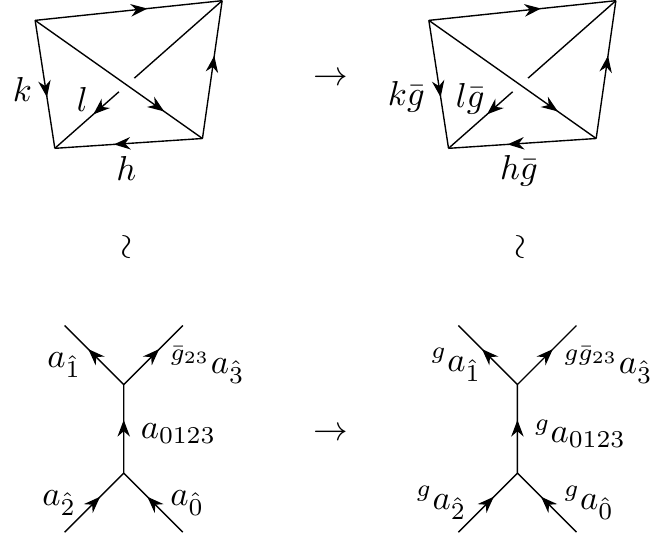}
}}
\end{align*}
when $B_c^g$ is applied. \\
The plaquette term $B_f^{g,a}$ right multiplies the group variable on face $f$, resulting in $h_f \bar g$, and fuses a loop of defect $a\in\cat_g$, oriented along the boundary of $f$, into the defects on edges $e\in \partial f$. The numerical amplitudes of $B_f^{g,a}$ are calculated according to the diagrammatic rules of the input UGxBFC.

The ground space of the Hamiltonian is supported on a subspace of states satisfying the vertex and edge constrains which is spanned by consistent diagrams from the UGxBFC. 
Naively these states and the Hamiltonian seem to depend on the choice of projection to the $2$D plane of the picture, up to a local unitary gauge equivalence due to $U$ and $\eta$. We note that such an apparent dependence does not appear when following the tensor network approach to produce a Hamiltonian which was described in Sec.\ref{gcrossedtnham}.

\subsubsection{The Hamiltonian on the $3$-Torus: BCC lattice}

To explicitly construct an important special case of the Hamiltonian on simple polyhedra we pick $\Gamma_\complex$ to be the regular cellulation of $T^3$ by permutohedra which is dual to the BCC triangulation $\complex$. We use the branching structure on $\complex$ obtained form Fig.\ref{branch} via translations.
The Hamiltonian is given in Eq.\eqref{orientedham}. We proceed to explicitly calculate the matrix elements of the plaquette term $B_f^{g,a}$ for the top face of a permutohedron by deriving the effect of fusing a defect loop $a$ onto the lattice 
\begin{align*}
\vcenter{\hbox{
\includegraphics[scale=1]{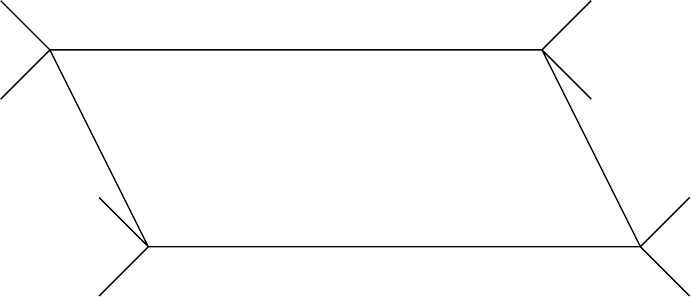}
}}
\end{align*}
the other terms are calculated similarly. 
First the 4-valent vertices are resolved using our choice of branching structure. On the resolved lattice we have some initial configuration $\sigma_\text{I}$ which is depicted on the left. Next the edges crossing $f$ are moved aside:
\begin{align*}
\vcenter{\hbox{
\includegraphics[scale=1]{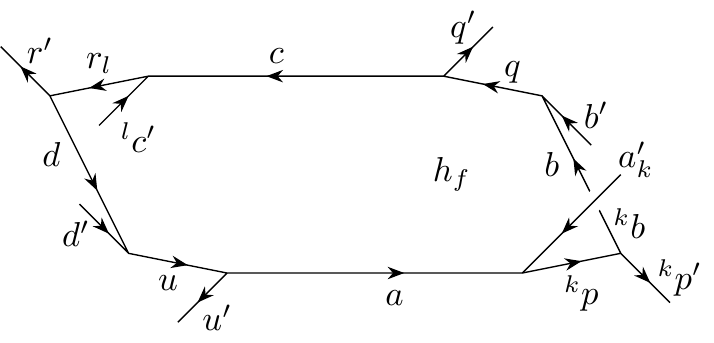}
}}
\rightarrow
\vcenter{\hbox{
\includegraphics[scale=1]{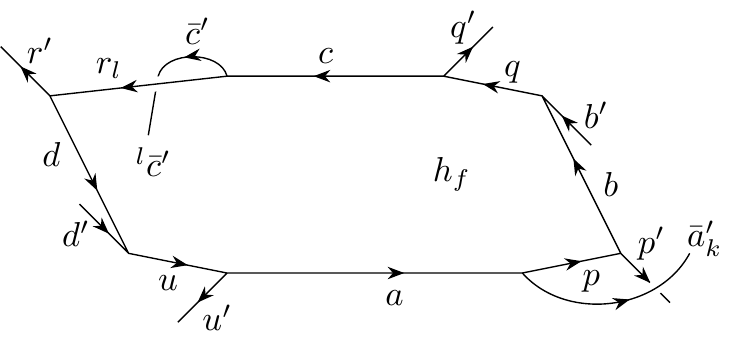}
}}
\end{align*}
this introduces a factor $U_{k^{-1}}^{-1}(\bar b,p;p') \overline{R_a^{p \bar{a}'}} R_c^{{}^{l}\bar{c}' r}$.
The plaquette term $B_f^{g,s}$ introduces a loop of defect $s_g$ onto the face $f$: 
\begin{align*}
\vcenter{\hbox{
\includegraphics[scale=1]{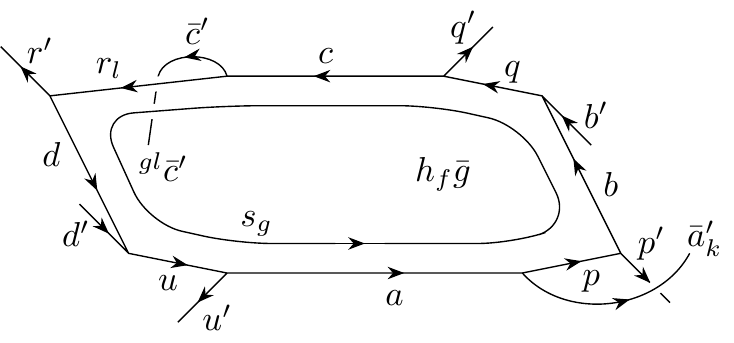}
}}
\end{align*}
 this is then fused into the lattice: 
 \begin{align*}
\vcenter{\hbox{
\includegraphics[scale=1]{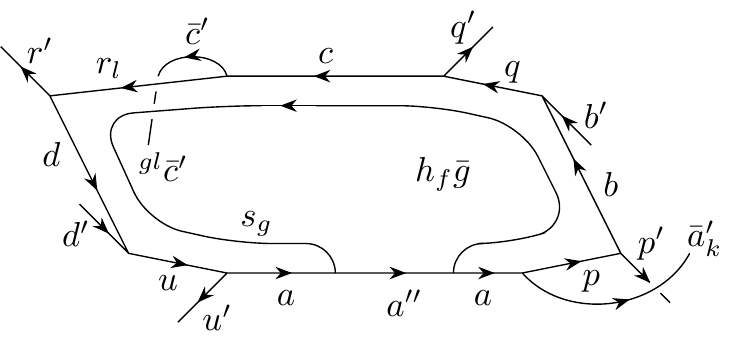}
}}
\rightarrow
\vcenter{\hbox{
\includegraphics[scale=1]{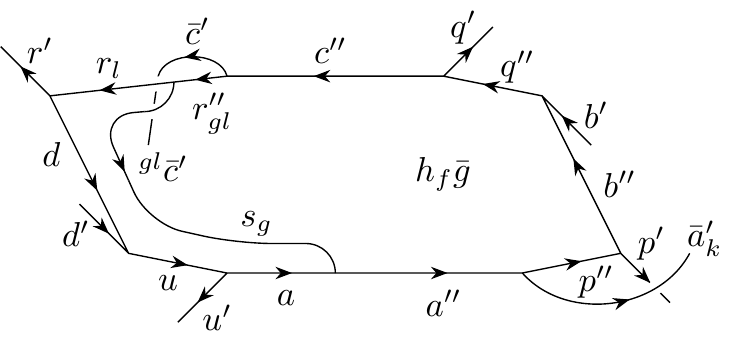}
}}
\end{align*} 
resulting in a factor $F^{\bar a''sp}_{a';\bar ap''}F^{\bar p''sb}_{\bar p';\bar pq''} F^{\bar b'' s q}_{b';\bar b q''} F^{\bar q'' s c}_{\bar q' ; \bar q c''} F^{\bar c'' s r}_{c';\bar c r''}$. Next we slide the $c'$ line under a vertex 
\begin{align*}
\vcenter{\hbox{
\includegraphics[scale=1]{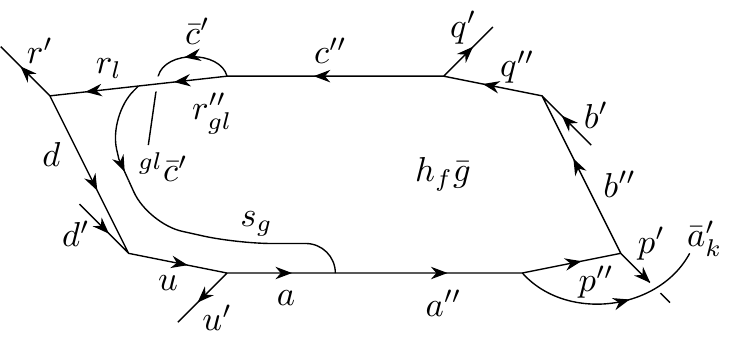}
}}
\end{align*}
yielding a factor $\eta_{{}^{gl} \bar c'}(g,l)$. 
Making three additional $F$-moves $F^{\bar r'' s d}_{\bar r'; \bar r d''} F^{\bar d'' s u}_{d'; \bar d u''} F^{\bar u'' s a}_{\bar u'; \bar u a''}$ leads to 
\begin{align*}
 \vcenter{\hbox{
\includegraphics[scale=1]{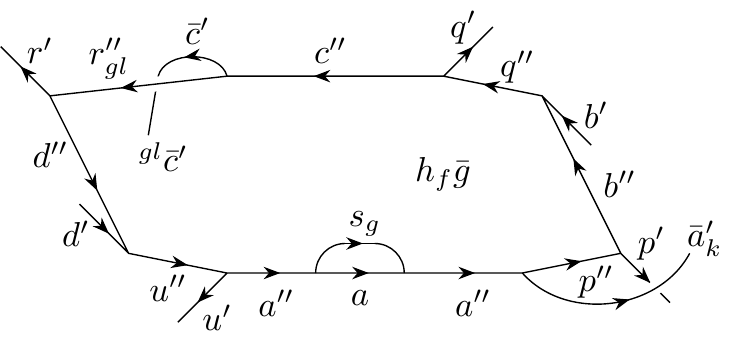}
}}
.
\end{align*}
Restoring the lattice to the initial position 
\begin{align*}
 \vcenter{\hbox{
\includegraphics[scale=1]{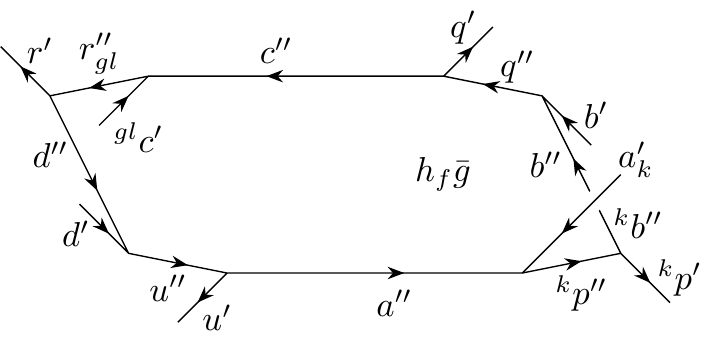}
}}
\end{align*}
yields the final configuration $\sigma_\text{F}$ and a phase $U_{k^{-1}}^{-1}(\bar b'',p'';p') R_{a''}^{p'' \bar{a}'} \overline{R_{c''}^{{}^{gl}\bar{c}' r''}}$. 
Hence the full matrix element of the plaquette term is given by:
\begin{align}
\bra{\sigma_\text{F}} B_f^{g,a} \ket{\sigma_\text{I}} =&
 U_{k^{-1}}^{-1}(\bar b,p;p') U_{k^{-1}}^{-1}(\bar b'',p'';p') \eta_{{}^{gl} \bar c'}(g,l)
  \overline{R_a^{p \bar{a}'}} R_c^{{}^{l}\bar{c}' r} R_{a''}^{p'' \bar{a}'} \overline{R_{c''}^{{}^{gl}\bar{c}' r''}}
\nonumber  \\
  &F^{\bar a''sp}_{a';\bar ap''}F^{\bar p''sb}_{\bar p';\bar pq''} F^{\bar b'' s q}_{b';\bar b q''} F^{\bar q'' s c}_{\bar q' ; \bar q c''} F^{\bar c'' s r}_{c';\bar c r''} 
  F^{\bar r'' s d}_{\bar r'; \bar r d''} F^{\bar d'' s u}_{d'; \bar d u''} F^{\bar u'' s a}_{\bar u'; \bar u a''} .
\end{align}

\subsection{Degeneracy, Statistics, and the Ground State Wave Function}

The GSWF of the Hamiltonian in Eq.\eqref{k2ham} admits a simple PEPS and MERA following the approach of Ref.\cite{higherdto}. 
For a triangulated space manifold $(\nan,\triang)$ the tensor network is given by 
\begin{align}
\tft{\triang*v_0}
\end{align}
where $v_0$ is an auxiliary vertex below the other vertices in the ordering. 
To normalize this state properly we use the convention that any weight associated to a simplex at the boundary $w(\simp_i)$ is included in the state after taking a square root $\sqrt{w(\simp_i)}$. This ensures that upon gluing along such a boundary the full weight is recovered. In particular
\begin{align}
\braket{\tft{\triang*v'_0}  \,|\, \tft{\triang*v_0}} = \tft{\{v_0',v_0\}* Y}
\end{align}
assuming the TQFT is unitary (Hermitian).
Note in our models these weights are always positive real numbers and the positive square root is chosen, if these weights are negative (or complex) such a convention is not straightforward (for example this occurs in Ref.\cite{freedman2015double}).
Similarly for the MERA consider a triangulated identity bordism on some space manifold $(\nan\times I, \triang')$ such that the triangulation at the space manifold $(\nan,0)$ reduces to the physical lattice $\triang$ and we pick a minimal triangulation $\triang''$ of $(\nan,1)$ at the `top' of the MERA corresponding to the ultra IR scale. Then upon fixing a vector containing the fully coarse grained topological information $\ket{t}$ the MERA is given by 
\begin{align}
\tft{\triang'}\ket{t}
\end{align}
The PEPS built this way has a virtual PEPO symmetry, to extract the physical superselection sectors corresponding to point and loop like excitations we expect a generalization of Ocneanu's tube algebra can be constructed directly in the tensor network as has been achieved in $(2+1)$D~\cite{bultinck2015anyons}.

There is an alternate approach to a tensor network description of the GSWF working directly with the diagrammatic representation of the anyons. For CYWW this yields roughly a $2.5$D tensor network representation that is presented with a particular choice of projection down to a plane but transforms trivially under changing this choice~\cite{anyondiag}. This approach encounters complications for the UGxBFC model due to the nontrivial action of anyon wordlines upon configurations behind them~\cite{barkeshli2014symmetry}. Hence it appears this approach may only produce a gswf that transforms with a local unitary upon changing the plane of projection. Note a similar complication may occur for a figure 8 worldine in CYWW, however this can be corrected with a careful labeling of the writhe of each anyon worldline.

In Section~\ref{background} we discussed how special cases of the UGxBFC recover an uncoupled DW and CY theory, for trivial grading and group action, or a $2$-group gauge theory, for a categorical group input. We also pointed out that one can add $H^3$ and $H^4$ cocycles to the data of the ungraded case to realize a general $2$-group gauge theory with cohomology twist. This suggests an interpretation of the model (at least in the ungraded case) as a theory of anyons coupled to a $2$-group gauge theory.

\section{Discussion}
\label{disc}

\begin{table*}[th]
\begin{center}
\scriptsize
\begin{tabular}{| >{\centering\arraybackslash}m{3.6cm} | >{\centering\arraybackslash}m{2.9cm} | >{\centering\arraybackslash}m{1.8cm} | >{\centering\arraybackslash}m{1.4cm} | >{\centering\arraybackslash}m{2.2cm} | >{\centering\arraybackslash}m{2.4cm} | }
\hline 
  State Sum & Hamiltonian & Input data &  ST Dimension D & Sensitivity & Physical excitations 
\\ \hline\hline
Trivial/Invertible theory & Trivial paramagnet & - & all D & Classical local invariants: Euler characteristic, signature, \dots & Local excitations, no nontrivial superselection sectors
\\\hline
GHZ TQFT (Includes all $2$D TQFTs based on Frobenius Algebras~\cite{fukuma1994lattice}) & Symmetry breaking & $n\in\N$ & all D & $\pi_0$ & Domain wall excitations
\\ \hline
n-group Dijkgraaf-Witten gauge theory~\cite{dijkgraaf1990topological,kapustin2013higher} or Yetter homotopy n-type~\cite{yetter1993tqft} (Includes Birmingham-Rakowski model~\cite{birmingham1995state} and Mackaay's group examples~\cite{mackaay2000finite}) & Higher group lattice gauge theory (includes twisted quantum doubles~\cite{kitaev2003fault,hu2013twisted,wan2015twisted}, generalized toric codes, 2-group gauge theory \cite{bullivant2016topological} and Yoshida's models~\cite{yoshida2015topological}) & n-group $\G$ \& D-cocycle $\alpha\in H^{\text{D}}(B\G,U(1))$ & all D$\geq$n & n-homotopy type (or n-Postnikov system)
& Gauge charges, fluxes etc..
\\ \hline
Turaev-Viro~\cite{turaev1992state} (dropping semisimplicity assumption gives Kuperberg \& Barrett-Westbury invariants~\cite{kuperberg1991involutory,barrett1996invariants,barrett1995equality}.) & Levin-Wen string-net model~\cite{levin2005string} & UFC $\cat$~\cite{etingof2015tensor} & $(2+1)$ D & PL homeomorphism 
& $Z(\cat)$ anyon theory
 \\ \hline
Crane-Yetter~\cite{crane1993categorical,crane1997state} (captures unitary Broda, Petit, Barenz-Barett dichromatic state sums~\cite{broda1993surgical,roberts1996refined,petit2008dichromatic,barenz2016dichromatic} via chainmail construction~\cite{roberts1995skein})  & Walker-Wang model~\cite{walker2012} & UBFC $\cat$~\cite{etingof2015tensor} & $(3+1)$ D & $\pi_1,\,w_2$
& Bosons and fermions and loop excitations (only for nonmodular $\cat$)
 \\ \hline
 Crane-Frenkel~\cite{crane1994four} \& Carter-Kauffman-Saito~\cite{carter1999structures}  & ? & Hopf category and cocycle  & $(3+1)$ D & homotopy ?
& ?
 \\ \hline
Kashaev TQFT~\cite{kashaev2014simple,kashaev2015realizations} & Kashaev model & $\Z_n$ & $(3+1)$ D & $\pi_1,\, w_2$  & Fermions (bosons) and loop excitations for $N=2\, (0) \mod 4$ (trivial for $N$ odd)
  \\ \hline
UGxBFC~\cite{shawnthesis} (includes Mackaay's spherical $2$-category models~\cite{mackaay1999spherical}) & UGxBFC Hamiltonian & UGxBFC $\cat$~\cite{turaev2000homotopy,kirillov2004g,etingof2009fusion,barkeshli2014symmetry} & $(3+1)$ D & $w_2$, homotopy 3-type ? 
& Bosons, fermions and loop excitations
  \\ \hline
Conjectural $n$-category TQFT (semi simplicity condition corresponds to having a single object) & $(n-1)$-membrane net Hamiltonian & Unitary $n$-category $\cat$ & all D$=(n+1)$  & PL homeomorphism (except for D$=4$)
& higher categorical center $Z(\cat)$
    \\ \hline
\end{tabular}	
\end{center}
\end{table*}

In this section we aim to place the new models into the broader context of previously constructed state sum TQFTs~\cite{crane1993categorical,crane1997state,kashaev2014simple,kashaev2015realizations,carter1999structures,crane1994four,yetter1993tqft, mackaay2000finite,mackaay1999spherical,kapustin2014topological,kapustin2013higher}. In doing so we sketch the general framework for state sum TQFTs and explain how the UGxBFC model fits into this. We also describe the relation of the UGxBFC model to other classes of $(3+1)$D state sums and their boundary physics.

It is conventional wisdom that an $n$-category describes a local or fully extended TQFT restricted to the disc~\cite{baez1995higher,lurie2009classification,walker1991witten}. This correspondence is materialized by the general prescription to construct an $(n+1)$D state sum model from an $n$-category~\cite{higherdto}. The recipe dictates that the i-simplices of a triangulation are labeled by i-morphisms of the $n$-category along with an (n-1)-associator for each (n+1)-simplex, a tensor satisfying the Pachner move equations. To make contact with familiar examples, first in $(1+1)$D, one can view the morphisms of a linear category with a single object as an associative algebra. Decorating the edges of a triangulated surface with these morphisms and assigning the structure coefficients to each triangle recovers the familiar Frobenius algebra TQFTs~\cite{fukuma1994lattice}. 
For $(2+1)$D consider a $2$-category with a single object, the 1-morphisms and 2-morphisms can be identified with the objects and morphisms of a fusion category respectively. Using these to label the edges and faces of a triangulated 3-manifold and assigning F-symbol associators to each tetrahedron recovers the Turaev-Viro TQFTs (Levin Wen string nets)~\cite{levin2005string,turaev1992state}. 
In $(3+1)$D consider a $3$-category with a single object and single 1-morphism, the 2- and 3- morphisms can be identified with the objects and morphisms of a braided fusion category. Using these to label the faces and tetrahedra of a triangulated 4-manifold and assigning a 15j-symbol to each 4-simplex recovers the Crane-Yetter TQFTs (Walker-Wang models)~\cite{crane1997state,crane1993categorical,walker2012}. 

These examples display the general pattern that adding structure to an $n$-category is often equivalent to shifting all the morphisms up a level while introducing a single object. From this point of view it is natural that to resolve the UV anomaly that prevents a $(2+1)$D (commuting projector Hamiltonian) lattice realization of a chiral anyon theory one should consider the boundary of a $(3+1)$D theory. This is precisely what the WW model achieves. It also suggests that to realize the most general $(3+1)$D topological orders (with excitations described by a unitary braided fusion $2$-category) with commuting projector Hamiltonians on the lattice one must similarly consider boundary theories of $(4+1)$D state sums.

\begin{figure*}[tb]
\center
{{
 \includegraphics[width=0.6\linewidth]{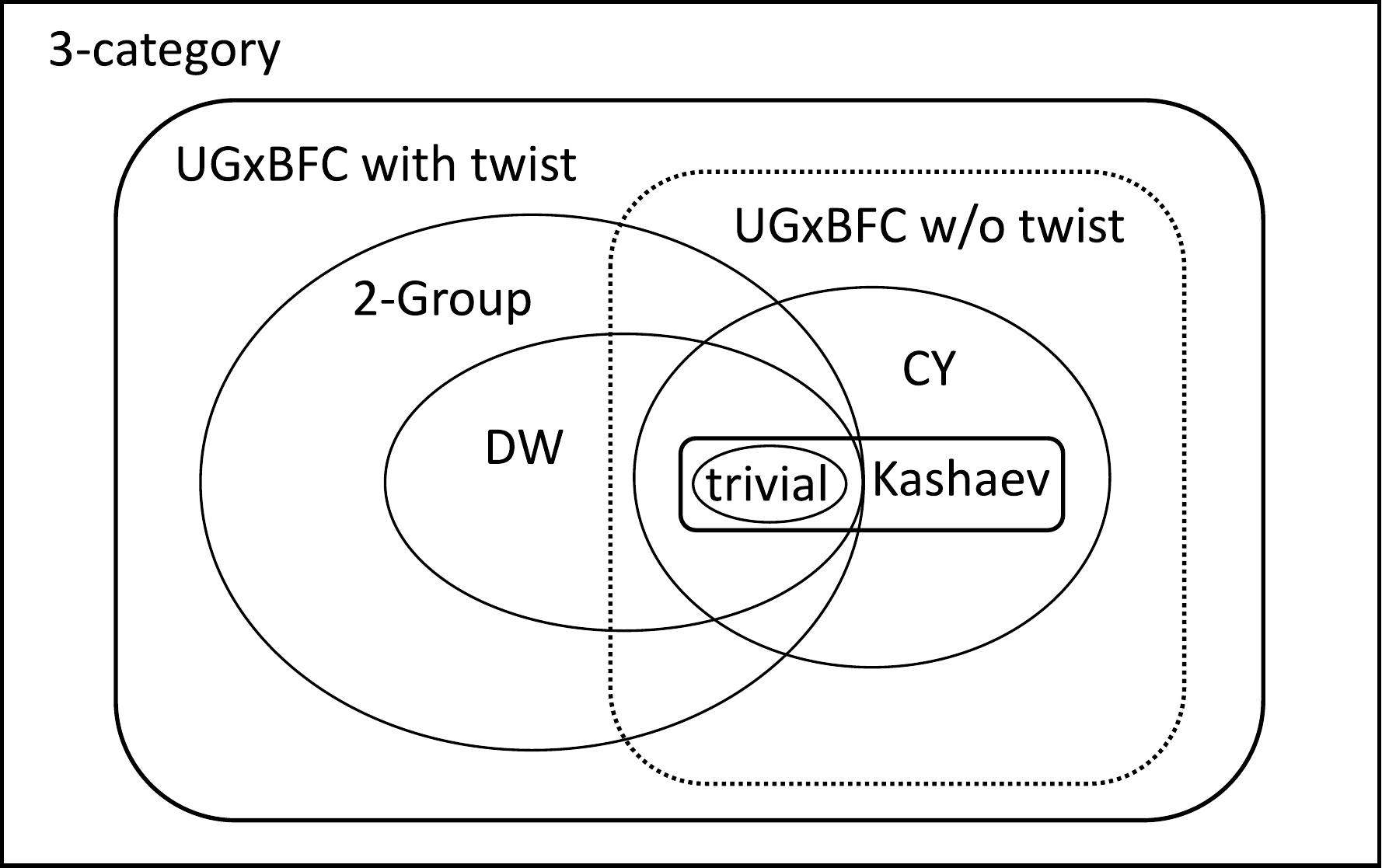}}} 
\caption{Conjectured relations of unitary $(3+1)$D state sum TQFTs (``twist'' refers to the addition of various cocycle functions).}
\label{4dstatesums}
\end{figure*}

From the examples above the UGxBFC models appear to be the natural generalization of TV to $(3+1)$D as they correspond to a $3$-category with a single object (which can be thought of as a $(2+1)$D topological phase) the 1-morphisms are identified with group elements (invertible gapped boundaries of the $(2+1)$D phase), 2-morphisms between the 1-morphisms $g_1$ and $g_2$ correspond to defects in the sector $\cat_{g_1^{-1}g_2}$ and 3-morphisms are the regular morphisms in the UGxBFC. 
Following the recipe, edges are labeled by group elements, triangles by defects and tetrahedra by UGxBFC morphisms while 4-simplices come with a 15j-symbol.

Notice in all the examples thus far we have considered only n-categories with a single object, loosening this requirement seems to correspond to dropping the assumption of semi simplicity (or possibly simplicity of the unit object)~\cite{kuperberg1991involutory,chang2015enriching}. We conjecture all non chiral topological phases of finite spin models can be realized by a state sum construction from an $n$-category with a single object. One way to extend the UGxBFC model might be to include multiple objects in the $4$-category, this naturally corresponds to considering boundaries between different topological phases rather than the same phase (this corresponds to a tricategory of bimodules which is a Gray category~\cite{carqueville20163}). 

Throughout the paper we have considered anomaly free SETs described by a UGxBFC (the state sums for these are rigorously constructed in Ref.\cite{shawnthesis}), such SETs can be realized purely in $(2+1)$D systems with ultra-local symmetry actions. From the perspective of using $(3+1)$D models to realize anomalous boundary phases it is natural to consider extensions of the model to UGxBFCs with non vanishing $H^3$ and $H^4$ anomalies. 
In the case of a trivial grading (all nontrivial defect sectors empty) the labeling of the triangulation defines a flat 2-group connection. Using the language of Ref.\cite{kapustin2013higher} the 2 group specified by $\Pi_1=G,\,\Pi_2=\A$ (the abelian anyons) with a group action $\rho$ inherited form the UGxBFC and trivial 3-cocycle. It is possible to augment this construction by adding in a 3-cocycle $\beta \in H^3_\rho(G,\A)$, which alters the flatness condition to that of a different 2-group $\G$. Furthermore one may add in a 4-cocycle $H^4(B\G,U(1))$. Hence these trivially graded models can be understood as a theory of bosons or fermions coupled to a 2-group gauge field. This generalizes the picture of CYWW models as a theory of bosons or fermions coupled to a discrete group gauge field. We leave the details of this to future work~\cite{anomalousugxbfc}.

From this perspective the ungraded UGxBFC models have a deequivariantized or ungauged counterpart given by a 2-group $\G$-SPT with ultra-local symmetry action. The boundaries of these models can support all anomalous $(2+1)$D SET phases as the bulk serves to resolve the chiral, $H^3$ and $H^4$ anomalies~\cite{thorngren2015higher}. Upon gauging or equivariantizing the $\G$ symmetry of these SET models one recovers the ungraded UGxBFC model. 
These possible additions suggest the intrinsic $H^3$ and $H^4$ classes of an anomalous UGxBFC should be treated as a torsor, as they can be shifted by an arbitrary choice in the ungraded case, although it is unclear how this carries over to the general graded case.

Thus far we have explained how the 2-group and CYWW models are captured as subcases of the UGxBFC construction, furthermore we believe the Kashaev TQFT is equivalent to a subset of the CYWW model and hence is also captured. We have outlined what is conjectured to be the most general construction of a $(3+1)$D state sum in terms of a $3$-category. We made the case that restricting to models that have a single object is expected to capture all topological orders with a commuting projector Hamiltonian that admit a TQFT description. These topological orders are also known as gapped quantum liquids~\cite{zeng2015gapped} and they do not include models such as Haah's cubic code~\cite{haah2011local,yoshida2013exotic} even though it admits a generalized gauge theory description~\cite{PhysRevB,vijay2016fracton}. 
The UGxBFC model captures a very general case of the single object $3$-category state sum construction, and most importantly the construction comes with a wealth of examples originating from SET phases in $(2+1)$D~\cite{barkeshli2014symmetry,hung2013quantized,mesaros2013classification,teo2015theory,bombin2010topological,turaev2000homotopy,kirillov2004g,etingof2009fusion,tarantino2015symmetry}.

To assess whether the UGxBFC model truly goes beyond the preexisting constructions one would ideally construct the irreducible excitations and compare their full set of physically accessible topological invariants. In general the construction of the excitations should correspond to taking the Drinfeld double or 2-categorical center of the input treated as a unitary fusion $2$-category~\cite{muger2003subfactors,muger2003subfactors2}, this itself is not well understood. The resulting invariants are also not fully understood, although very interesting progress has been made~\cite{moradi2015universal,wang2016quantum} particularly on the 3-loop braiding statistics~\cite{jiang2014generalized,wang2014braiding,wang2015topological,wang2015non}. In principle these invariants should uniquely specify the unitary braided fusion $2$-category describing the physical excitations, however it has not even been rigorously shown that the commonly used $S$ and $T$ matrices are in 1-1 correspondence with UBFCs in the $(2+1)$D setting. 

We may resort to comparing the boundary physics of the proposed UGxBFC models to previous constructions, but as we have seen the relevant boundaries can be understood as coming from a CYWW model coupled to a 2-group gauge field~\cite{thorngren2015higher}. 

Another avenue is to focus on the closed manifold partition functions of the theory. This approach, for example, allows one to differentiate the Turaev-Viro models from Dijkgraaf-Witten in $(2+1)$D as the former is sensitive to PL homeomorphism, while the latter depends only on homotopy.
However in $(3+1)$D the situation is complicated by the fact that it is fundamentally impossible for a unitary TQFT to detect all inequivalent smooth structures on homotopic or s-cobordant manifolds~\cite{freedman2005universal} (this is a consequence of the existence of $3$D boundary diffeomorphisms that do not extend into the $4$D bulk). 
Here we should note that the equality (equivalence) of all partition functions is not known to be a sufficient condition for two theories to be equivalent. 
That being said it has been suggested that the UGxBFC state sum depends on the homotopy 3-type of a manifold~\cite{shawnthesis}, it can also be seen to depend on some Stiefel-Whitney classes of a manifold as it includes the CYWW model which can involve fermions that are sensitive to a choice of spin structure. These dependencies are consistent with the interpretation of the UGxBFC model as bosons or fermions coupled to a higher group gauge theory. It is currently unclear if the UGxBFC models with nontrivial grading give rise to more general invariants, we plan to study this in future work~\cite{provingugxbfc}. 

Finally let us clarify that for the UGxBFC models with non empty defect sectors, the TQFT constructed from the $15j$-symbol does not depend on extra structure or decoration of the cobordism category (beyond possibly an orientation). That is to say the theory is not an SET involving physical defects of some global symmetry. However it may be possible that the boundary theory can be thought of as an SET with a certain configuration of defects specified as a boundary condition.
\\ \\ \\
\emph{Acknowledgments -} The authors acknowledge Meng Cheng, Shawn Cui, Xie Chen, Sujeet Shukla, Jeongwan Haah, Maissam Barkeshli and Parsa Bonderson for helpful discussions and comments. We particularly thank Shawn Cui for sharing a draft of his unpublished thesis. 
DW acknowledges support from the Austrian Marshall Plan foundation. 
ZW was partially supported by NSF grants DMS-1108736 and DMS-1411212.

\bibliography{Hamiltonian_Realizations_of_(3+1)-TQFTs}

\pagebreak
\appendix

\section{Elementary combinatoric topology}
\label{comb topology}

In this appendix we introduce some basic notions from the field of combinatorial topology that are used throughout the paper. We recommend Ref.\cite{hatcher606algebraic} for further reading.

\begin{figure*}[ht]
\center
{{
 \includegraphics[width=0.25\linewidth]{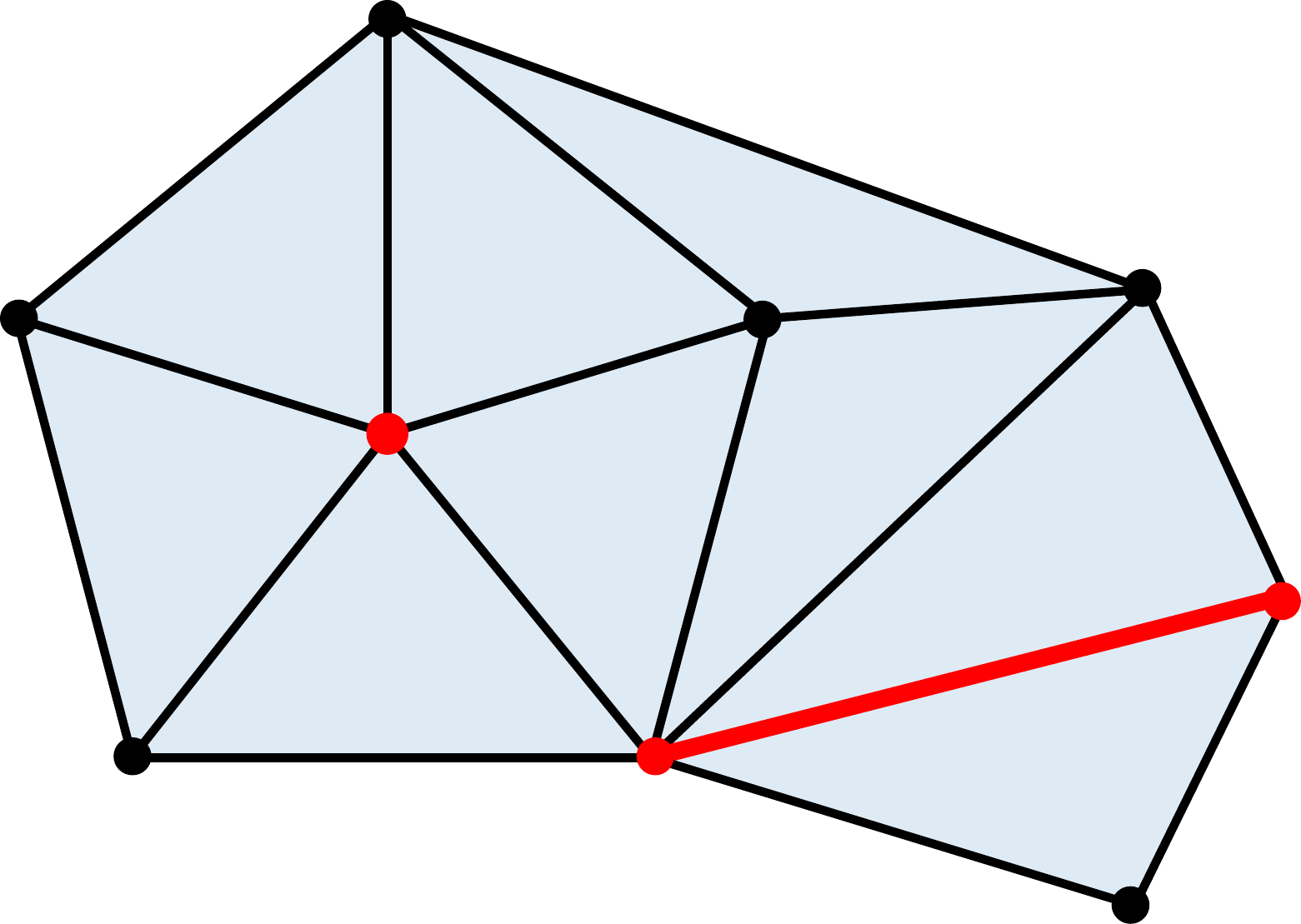}}}
 \quad 
  \includegraphics[width=0.25\linewidth]{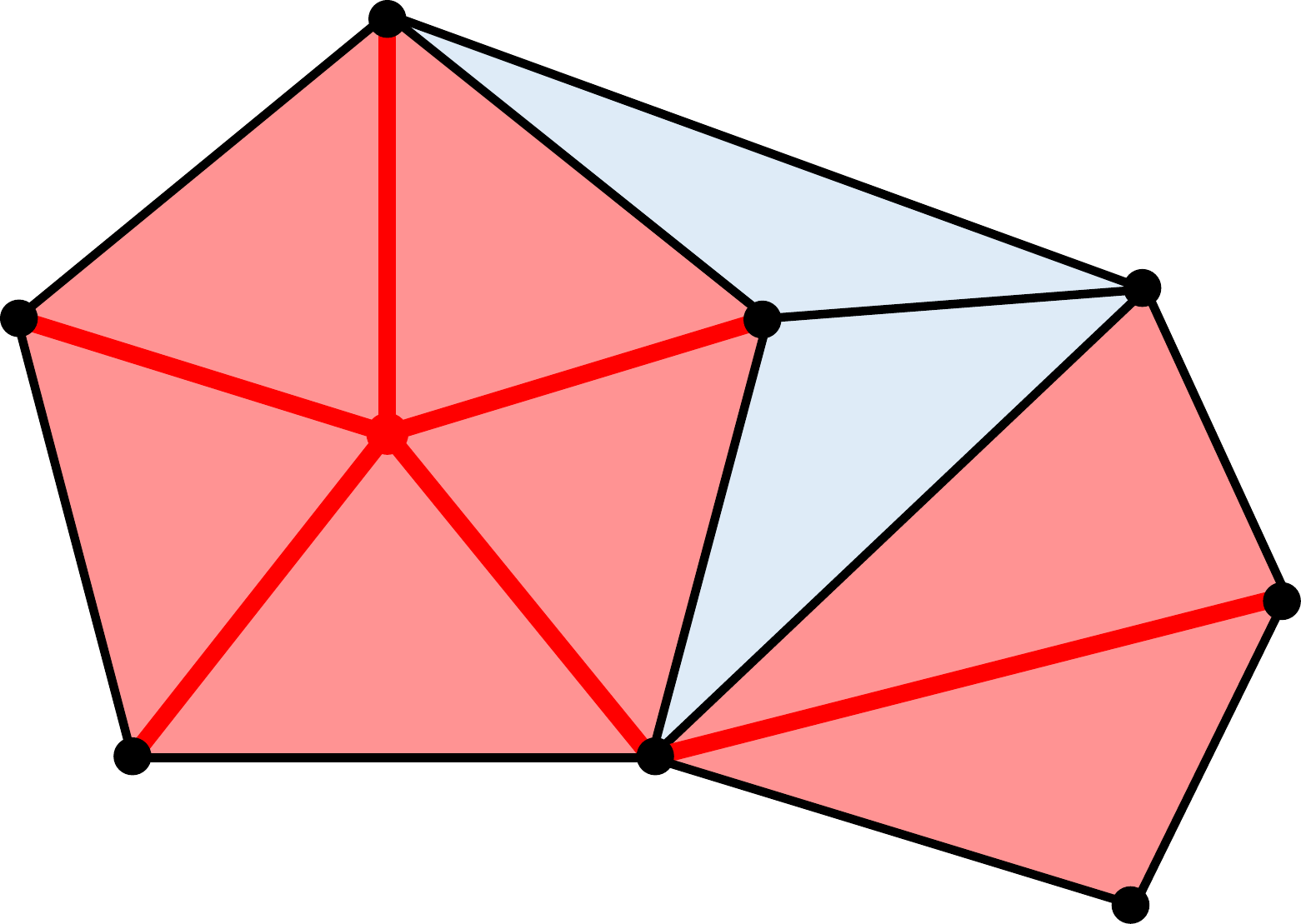}
  \quad
   \includegraphics[width=0.25\linewidth]{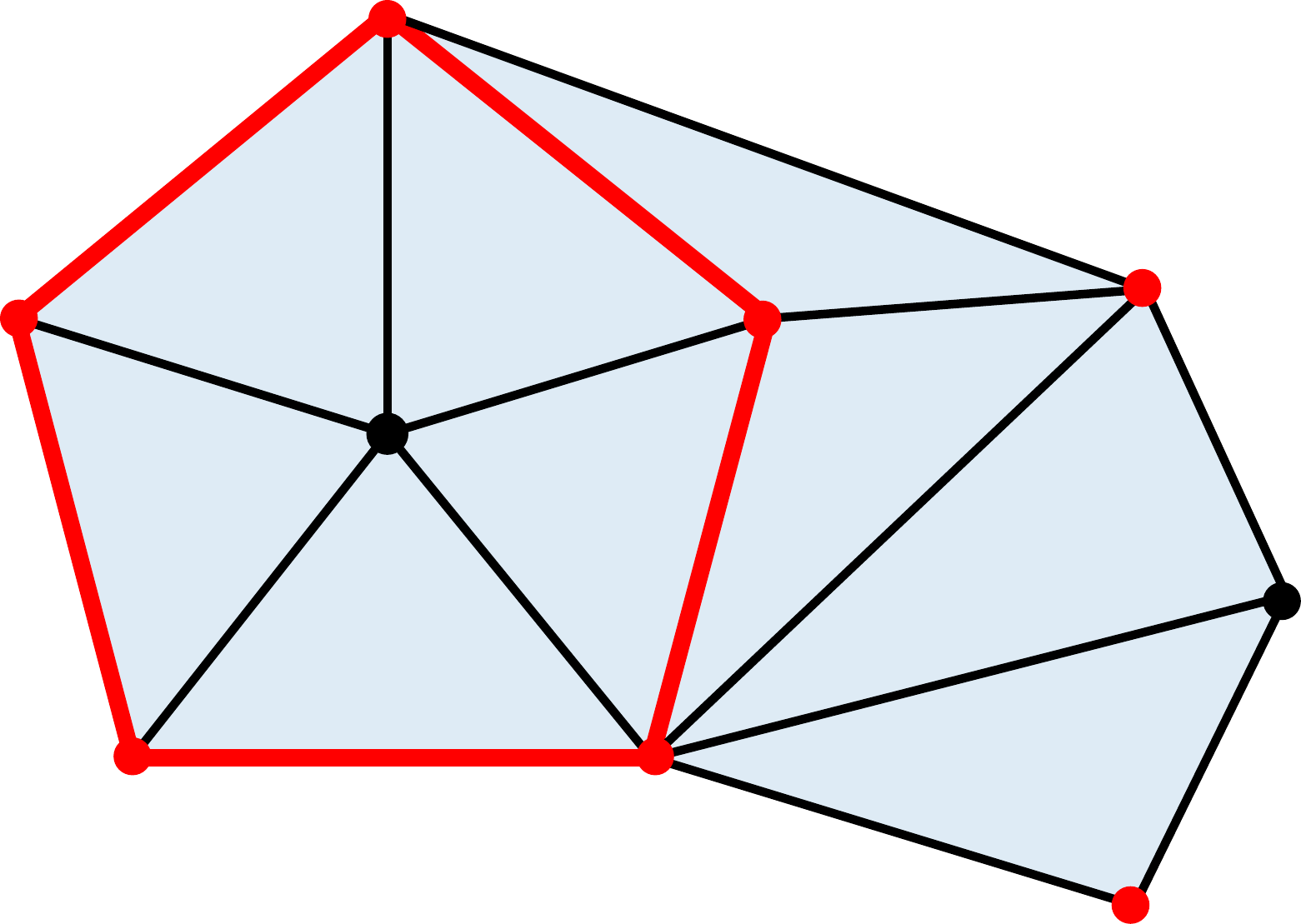}
\caption{Red indicates the closure (left), star (middle) and link (right) of a vertex and an edge.}
\label{pltop}
\end{figure*}
\begin{itemize}
\item
An {\bf n-simplex} $\simp_{n}=[v_0\dots v_n]$ is the convex hull of a set of (n+1) points $v_0,\dots,v_n\in\R^{m\geq n}$ such that the vectors $v_1-v_0,\dots,v_n-v_0$ are linearly independent. The orientation of a simplex can be defined as $\orient{\simp_{n}}:=\sgn(\det (v_1-v_0,\dots,v_n-v_0))$ and satisfies $\orient{[v_0\dots v_n]}=\sgn(\pi)\orient{[v_{\pi(0)}\dots v_{\pi(n)}]}$ for a permutation $\pi$.
We focus on the combinatorial aspects of simplices, notice the convex hull of any subset of vertices $[v_{i_0} \dots v_{i_j}]$,  for $0 \leq j \leq n$, is a $j$-subsimplex of $\simp_n$. Hence $\simp_n$ contains $\binom {n+1} {j+1}\ j$-subsimplices.
\item
A {\bf simplicial complex} $\complex$ is a union of simplices loosely defined as a subset of the power set of $(N+1)$ points $P\{v_0,\dots,v_N\}$ such that $\simp\in \complex \implies P(\simp)\subseteq \complex$. We exclusively deal with homogeneous simplicial complexes of some dimension D, which can be thought of as a union of D-simplices, in which all $\complex$-simplices, for $\dim \complex<$D, appear as a subsimplex of a D-simplex. The usual definition of simplical complex requires the intersection of any pair of simplices $\simp_p\cap\simp_q$ to be a subsimplex of $\simp_p$ and $\simp_q$. We also use a weaker notion referred to as a $\simp$-complex in which the intersection of a pair of simplices may consist of multiple subsimplices.
\item
The {\bf underlying space} of a simplicial complex $\complex$ is given by the union of all its simplices (treated as a topological space) denoted by $|\complex|$.
\item
The {\bf k-skeleton} of a simplicial complex $\complex$, denoted $\complex_k$ is the union of j-subsimplices $\simp_j\in \complex$ with $j\leq k$.
\item
We often assume the vertices of a simplicial complex have been ordered, this induces an orientation on the edges of the 1-skeleton from lesser to greater adjacent vertex. This orientation is a {\bf branching structure} since the edges on the boundary of a triangle never form a similarly oriented cycle. In fact our arguments only require a branching structure which is a local condition slightly weaker than a global ordering, although we will sometimes assume a global ordering for convenience.
\item
The {\bf boundary} of a D-simplicial complex $\complex$ is a $(\text{D}-1)$-simplicial complex $\partial \complex$ consisting of all $\simp_{\text{D}-1}\in \complex$ that are the subsimplex of a single D-simplex within $\complex$. Note $\partial\circ \partial = 0$.
\item
The {\bf closure} of a collection of simplices $\subc \subseteq \complex$ is given by $\cl_\subc $ the minimal subcomplex of $\complex$ containing $\subc$. 
\item 
The {\bf interior} of a subcomplex $\subc \subseteq \complex$ is given by $\interior \subc:= \cl_\subc \wo \subc$. 
\item 
The {\bf star} of a subcomplex $\subc \subseteq \complex$ is given by $\st{\subc}$ the union of simplices in $\complex$ which have a subsimplex contained in $\subc$. 
\item 
The {\bf link} of a subcomplex $\subc \subseteq \complex$ is given by $\lk{\subc}:=\cl_{\st{\subc}}\wo \st{\cl_\subc}$ 
\item 
The {\bf join} of two simplices $\simp_{n}=[v_0\dots v_n],\simp_{m}=[v_{n+1},\dots v_{n+m+1}]$ is the simplex $\simp_{n}*\simp_{m}=[v_0\dots v_{n+m+1}]\simeq\simp_{n+m+1}$.
The join of two simplicial complexes $\complex,\subc$ is given by $\complex*\subc$ the union of all $\simp_{i}*\simp_{j},\ \forall \simp_{i}\in\complex,\forall\simp_{j}\in\subc$ (note this includes joins with the empty simplex $\emptyset*\simp_{j}=\simp_{j}$). 
Note the join is associative and commutative (possibly up to orientation reversal). There is a simple relation for any simplex $\simp_{i}\in\complex$ given by $\st{\simp_{i}}=\simp_{i}*\lk{\simp_{i}}$.
\item
The {\bf cone} of a simplicial complex $\complex$ is its join with a point $v$ given by $v*\complex$.
\item
The {\bf suspension} of a simplicial complex $\complex$ is its join with two points $v,v'$ given by $\{v,v' \}*\complex$.
\begin{figure*}[t]
\center
\begin{equation*}
 \vcenter{\hbox{\includegraphics[height=0.12\linewidth]{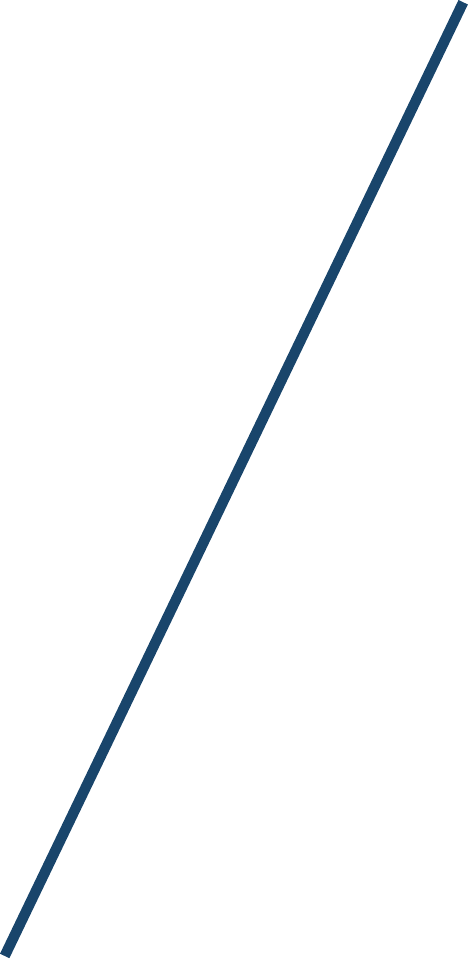}}} *\ \vcenter{\hbox{\includegraphics[width=0.09\linewidth]{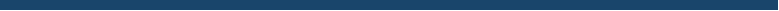}}} =  \vcenter{\hbox{\includegraphics[width=0.14\linewidth]{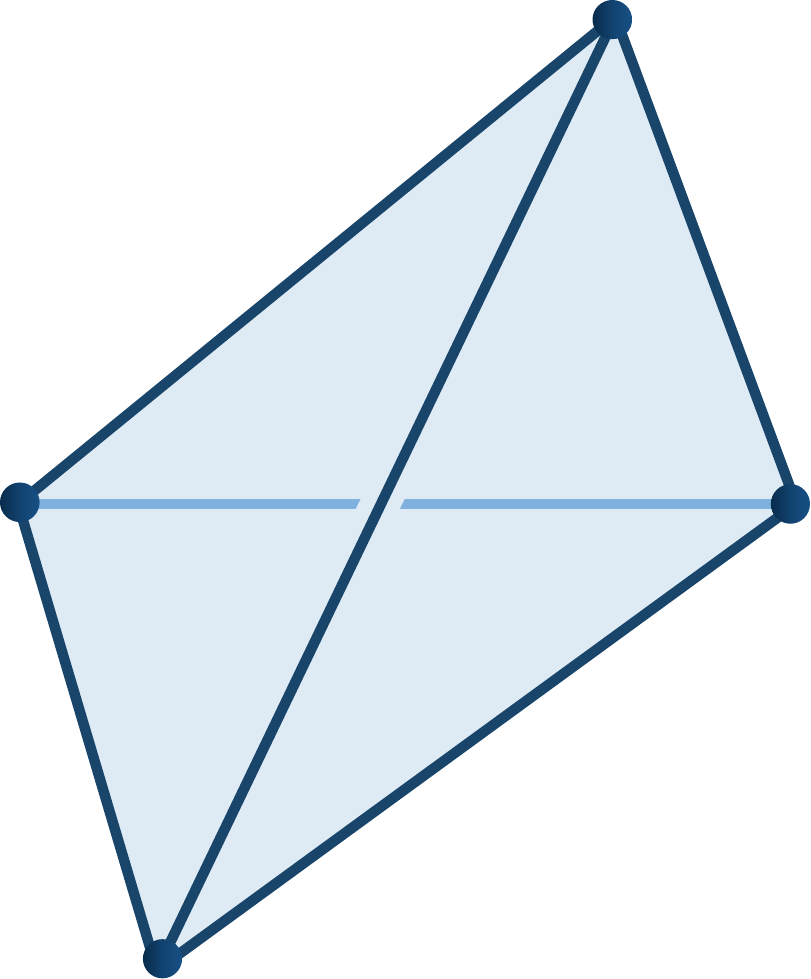}}}\ , 
 \quad \quad
  \vcenter{\hbox{\includegraphics[height=0.008\linewidth]{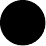}}} \ *\ \vcenter{\hbox{\includegraphics[height=0.08\linewidth]{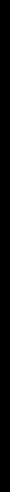}}} =  \vcenter{\hbox{\includegraphics[height=0.12\linewidth]{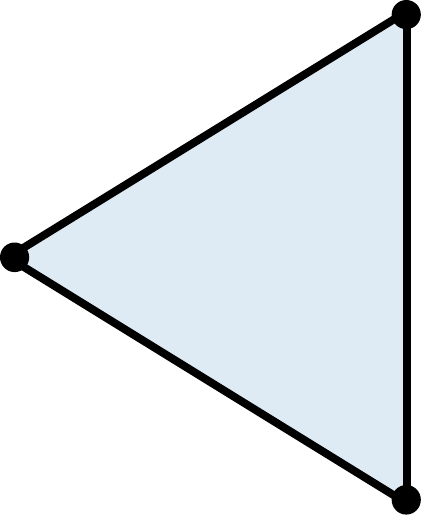}}}\ ,  
  \quad  \quad
\{  \vcenter{\hbox{\includegraphics[height=0.008\linewidth]{figs/fig22}}} , \vcenter{\hbox{\includegraphics[height=0.008\linewidth]{figs/fig22}}} \} \ *\ \vcenter{\hbox{\includegraphics[height=0.08\linewidth]{figs/fig23}}} =  \vcenter{\hbox{\includegraphics[height=0.12\linewidth]{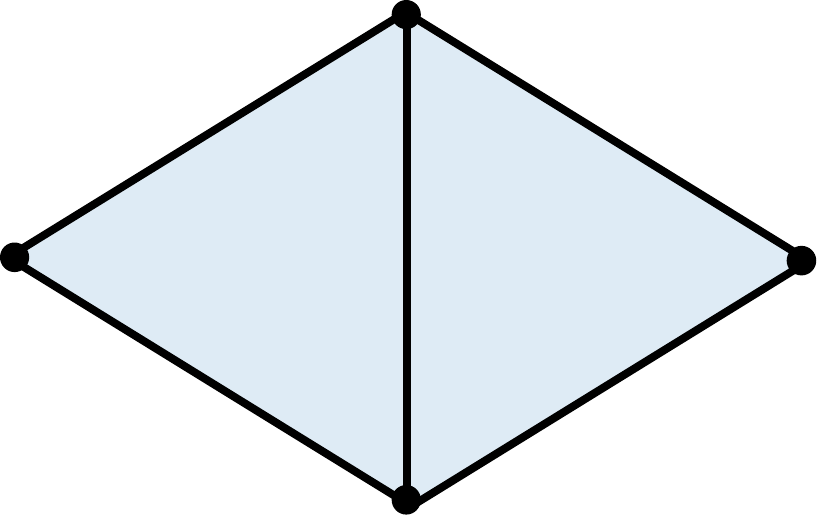}}} 
 \end{equation*}
\caption{The join of two edges (left), cone of an edge (middle) and suspension of an edge (right).}
\label{cone}
\end{figure*}
\item 
A {\bf bistellar flip} ({\bf Pachner move}) on any k-simplex $\simp_{k}\in\complex$ is constructed from an auxiliary (n-k)-simplex $\simp_{n-k}\notin\complex$ by taking $(\complex \wo\st{\simp{_k}})\cup_{\lk{\simp_{k}}}(\partial\simp_{k}*\simp_{n-k})$ with the identification $\lk{\simp_{k}} \simeq \partial\simp_{k}*\partial \simp_{n-k}$.
\begin{figure*}[th]
\center
\begin{equation*}
{{
 \vcenter{\hbox{\includegraphics[width=0.2\linewidth]{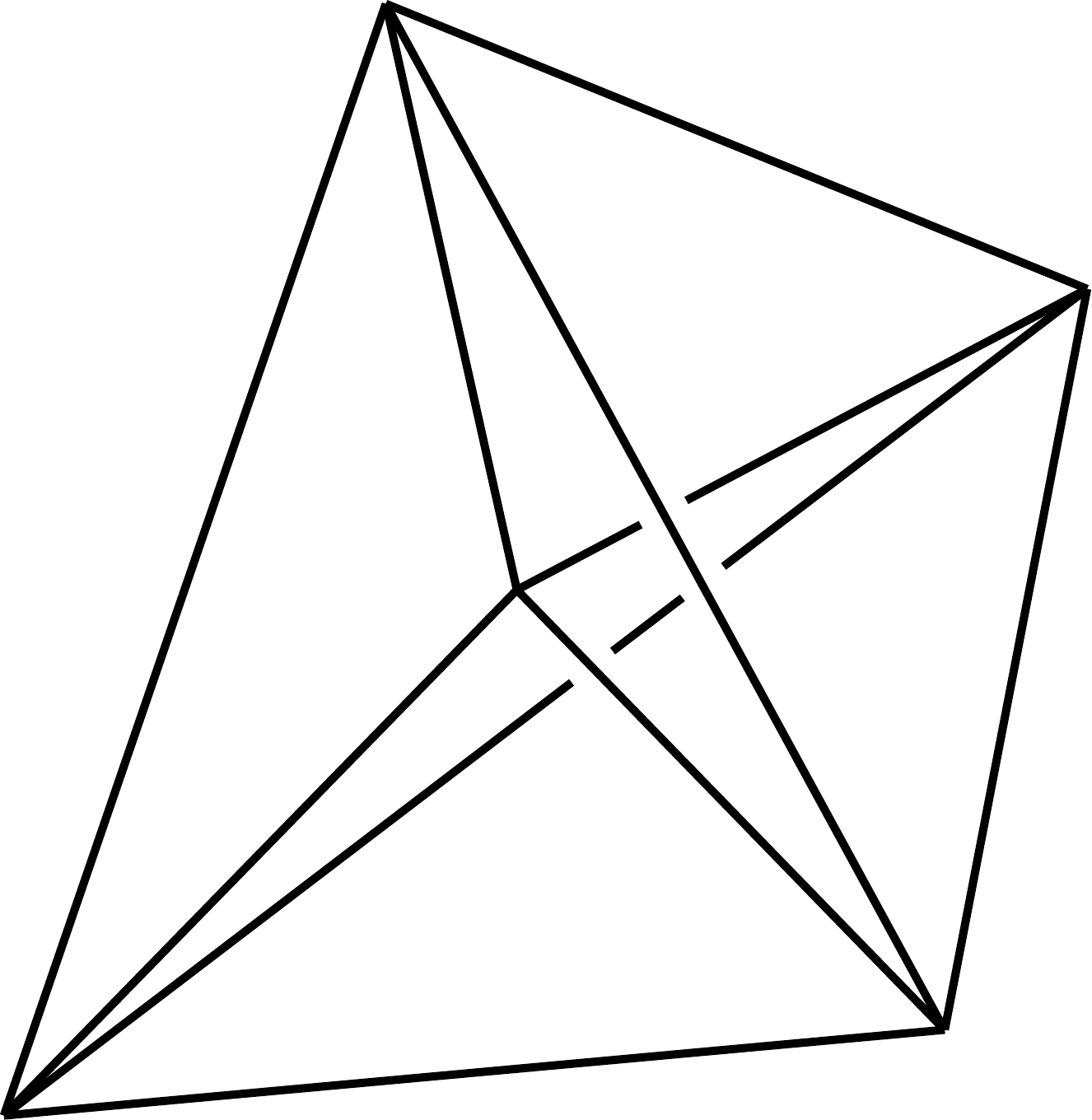}}}}} \quad \leftrightarrow  \vcenter{\hbox{\includegraphics[width=0.2\linewidth]{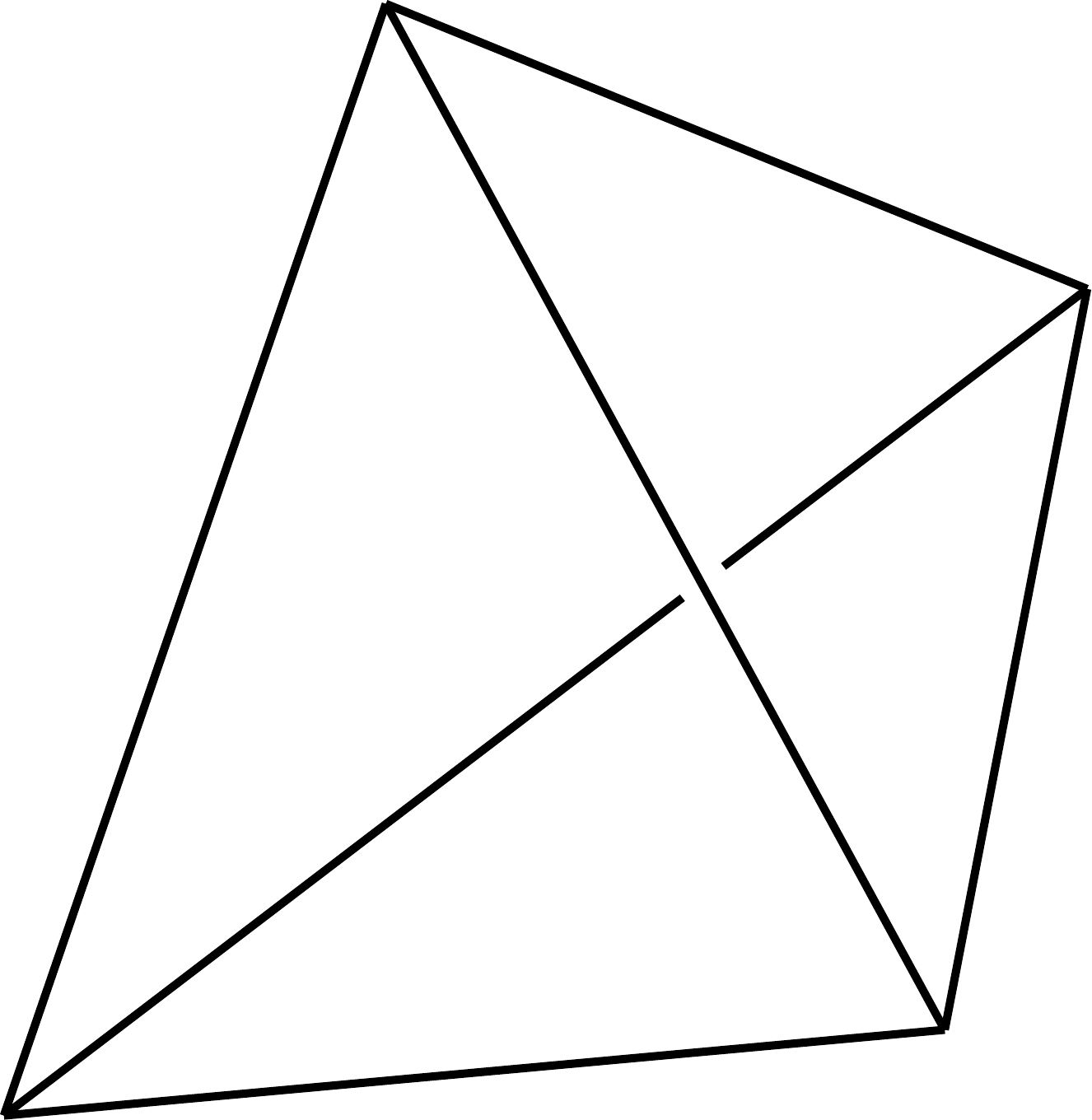}}}
 \quad\quad\quad\quad
 \vcenter{\hbox{\includegraphics[width=0.17\linewidth]{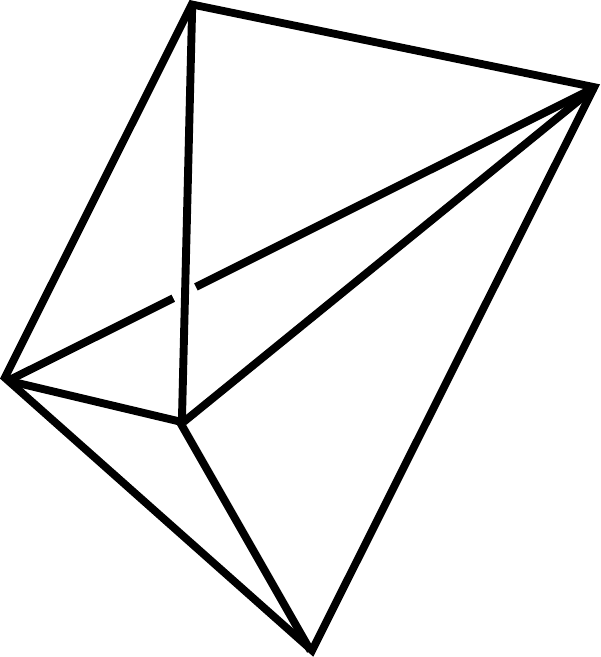}}}  \leftrightarrow \quad  \vcenter{\hbox{\includegraphics[width=0.17\linewidth]{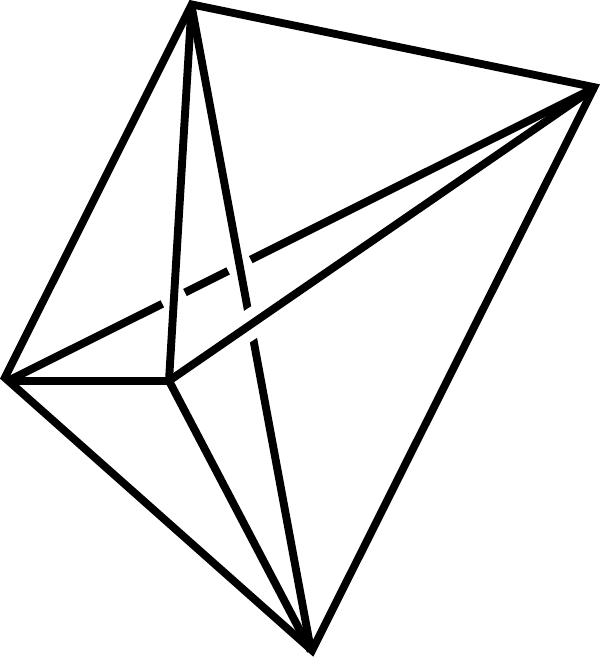}}}
\end{equation*}
\caption{The bistellar (Pachner) moves in $3$D.}
\label{pmoves}
\end{figure*}
\item 
The {\bf Poincar\'e dual} of an n-dimensional simplicial complex is an n-dimensional simple polytope with an (n-k)-cell for each k-simplex of the simplicial complex.
To construct this dual one can start by associating a vertex $v^i$ to each simplex $\simp_{n}^i\in\complex$ then iteratively adding a $j$-cell $\comp{j}$ for $j=1,\dots,n$ for each (n-j)-simplex $\simp_{n-j}\in\complex$.
For $\simp_{n-j}^i\in\complex$ we add a $j$-cell $\comp{j}^i$ with $k_l$ (j-l) faces, where $k_l$ is the number of (n-j+l)-simplices that intersect $\simp{(n-j)}$ in $\complex$.
Each (j-l) face of $\comp{j}^i$ is glued to the (j-l)-cell that is dual to the
corresponding (n-j+l)-simplex intersecting $\simp_{n-j}$.
\item
A piecewise linear (PL) manifold is a topological space equipped with an atlas of coordinate charts such that the transition functions between charts are piecewise linear. Similarly a smooth manifold is a topological space with an atlas of coordinate charts such that the transition functions are smooth.
({\bf Top, PL, Smooth}) is the category of (topological, PL, smooth) manifolds and (continuous, PL continuous, differentiable) maps between them, a (homeomorphism, PL homeomorphism, diffeomorphism) between spaces defines an equivalence. 
Not all topological manifolds admit a PL structure, an example of minimal dimensionality being Freedman's $E_8$ 4-manifold, and those which do may admit infinitely many inequivalent PL structures, lowest dimensional examples are exotic $\R^4$'s due to Freedman, Donaldson and Taubes. Note the existence of exotic 4-spheres is unknown and would provide a counter example to the $4$D smooth Poincar\'e conjecture.
 Similarly not all PL manifolds admit a smoothing, examples of minimal dimension 8 were discovered by Ells, Kuiper and Tamura, and those which do may admit multiple inequivalent smooth structures, minimal dimensional examples given by Milnor's exotic 7-spheres.
In general we have Smooth$\subseteq$PL$\subseteq$Top, while for D$\leq 6$ Smooth$\simeq$PL, for D$=7$ the inclusion Smooth$\subseteq$PL is surjective but not injective and for D$\geq 8$ the inclusion is neither injective nor surjective. For D$\leq 3$ PL$\simeq$Top while for D$\geq 4$ the inclusion is neither injective nor surjective.
\item
A {\bf triangulation} of a topological manifold $\man$ is a simplicial complex $\complex$ together with a homeomorphism $\phi:|\complex|\rightarrow \man$.
\item 
A {\bf PL triangulation} of a topological manifold $\man$ is a simplicial complex $\complex$ together with a homeomorphism $\phi:|\complex|\rightarrow \man$, satisfying the extra constraint that the link of any vertex $\lk{v},\,v\in\complex$ is homeomorphic to a PL (n-1)-sphere (not merely a homotopy sphere).
For D$\leq 4$ all triangulations are PL, while for D$\geq$5 one can construct a non PL triangulation by taking two (or more) suspensions of a triangulated Poincar\'e sphere. According to the discussion above all smooth manifolds admit unique PL triangulations, while topological manifolds admit unique PL triangulations for D$\leq 3$ and may admit anywhere from 0 up to an infinite family of inequivalent PL triangulations for D$\geq 4$. 
For D$\geq 5$ it was shown by Manolescu that there are manifolds that do not admit PL triangulations but do admit the weaker notion of triangulation.
\end{itemize}

\end{document}